\newcommand\reallywidehat[1]{%
\savestack{\tmpbox}{\stretchto{%
  \scaleto{%
    \scalerel*[\widthof{\ensuremath{#1}}]{\kern-.6pt\bigwedge\kern-.6pt}%
    {\rule[-\textheight/2]{1ex}{\textheight}}%WIDTH-LIMITED BIG WEDGE
  }{\textheight}%
}{0.5ex}}%
\stackon[1pt]{#1}{\tmpbox}%
}
\newcommand\reallywidecheck[1]{%
\savestack{\tmpbox}{\stretchto{%
  \scaleto{%
    \scalerel*[\widthof{\ensuremath{#1}}]{\kern-.6pt\bigwedge\kern-.6pt}%
    {\rule[-\textheight/2]{1ex}{\textheight}}%WIDTH-LIMITED BIG WEDGE
  }{\textheight}%
}{0.5ex}}%
\stackon[1pt]{#1}{\scalebox{-1}{\tmpbox}}%
}
\tikzset{double line with arrow/.style args={#1,#2}{decorate,decoration={markings,%
mark=at position 0 with {\coordinate (ta-base-1) at (0,1pt);
\coordinate (ta-base-2) at (0,-1pt);},
mark=at position 1 with {\draw[#1] (ta-base-1) -- (0,1pt);
\draw[#2] (ta-base-2) -- (0,-1pt);
}}}}
\numberwithin{equation}{section}
\newcommand{\supp}{\mbox{\rm supp}}
\newcommand{\RR}{{\mathbb R}}
\newcommand{\ZZ}{{\mathbb Z}}
\newcommand{\CC}{{\mathbb C}}
\newcommand{\NN}{\mathbb N}
\newcommand{\M}{{\mathcal M}}
\newcommand{\cL}{{\mathcal L}}
\newcommand{\cR}{{R}}
\newcommand{\oplam}{\mbox{\Large $\curlywedge$}}
\newcommand{\cM}{{\mathcal M}}
\newcommand{\cT}{{\mathcal T}}
\newcommand{\dd}{\,{\rm d}}
\newcommand{\Cu}{C_{\mathsf{u}}}
\newcommand{\Cc}{C_{\mathsf{c}}}
\newcommand{\Cb}{C_{\mathsf{b}}}
\newcommand{\SAP}{\mathcal{SAP}}
\newcommand{\GAP}{\mathcal{GAP}}
\newcommand{\vL}{\varLambda}
\newcommand{\vG}{\varGamma}
\newcommand{\vOmega}{\varOmega}
\newcommand{\udens}{{\overline{\mbox{\rm dens}}}}
\newcommand{\eps}{\varepsilon}
\newcommand{\card}{{\rm card}}
 \newtheorem{theorem}{Theorem}[section]
 \newtheorem{lemma}[theorem]{Lemma}
 \newtheorem{proposition}[theorem]{Proposition}
 \newtheorem{corollary}[theorem]{Corollary}
 \newtheorem{definition}[theorem]{Definition}
  \newtheorem{remark}[theorem]{Remark}
\newcommand{\Bohr}{\noindent \mbox{universal} }
\begin{document}
\title[Characterization of regular model sets]{Which Meyer sets are regular model sets? \\
A characterization via almost periodicity}
%A characterisation via almost periodicity}

\author{Daniel Lenz}
\address{Mathematisches Institut, Friedrich Schiller Universit\"at Jena, 07743 Jena, Germany}
\email{daniel.lenz@uni-jena.de}
\urladdr{http://www.analysis-lenz.uni-jena.de}

\author{Christoph Richard}
\address{Department f\"{u}r Mathematik, Friedrich-Alexander-Universit\"{a}t Erlangen-N\"{u}rnberg,
Cauerstrasse 11, 91058 Erlangen, Germany}
\email{christoph.richard@fau.de}

\author{Nicolae Strungaru}
\address{Department of Mathematical Sciences, MacEwan University \\
10700 -- 104 Avenue, Edmonton, AB, T5J 4S2, Canada\\
and \\
Institute of Mathematics ``Simon Stoilow''\\
Bucharest, Romania}
\email{strungarun@macewan.ca}
\urladdr{https://sites.google.com/macewan.ca/nicolae-strungaru/home}

\dedicatory{We dedicate this work to Michael Baake and Franz Gähler\\
 on the occasion of their $65^{th}$ birthdays.}

\begin{abstract}
In 2012, Meyer introduced the notions of generalized almost periodic measure and almost periodic pattern  and proved that regular model sets in  Euclidean space  are almost periodic patterns. Here, we prove  the converse in a slightly more general setting. Specifically, we show that  a Meyer set in any $\sigma$-compact  locally compact abelian group is a regular model set if and only if it is an almost periodic pattern.
\end{abstract}

\maketitle

%\tableofcontents

\section{Introduction}
This article is concerned with aperiodic order as exposed e.g.~in the article collections and monographs \cite{TAO,TAO2,BM,KLS,Moo}. The arguably most important  examples for aperiodic order are regular model sets. These sets were introduced and studied by Meyer in the seventies of the last century  in a harmonic analysis context \cite{me72}. Their r\^ole in aperiodic order was established by Moody~\cite{m97}, and fundamental geometric features were then shown by Lagarias \cite{L99}.

\smallskip

A central problem  in the field is to capture aperiodic order via notions of almost periodicity, see e.g.~the influential article by Lagarias \cite{L00}. Here we solve this problem for regular model sets, by showing that they can be characterized via generalized almost periodicity. This particular type of almost periodicity was brought forward by Meyer exactly to serve as an appropriate notion of almost periodicity for aperiodic order \cite{Mey2}. Loosely spoken, a discrete point set is generalized almost periodic if it can be approximated by almost periodic measures from above and from below, up to deviations of arbitrarily small mean. Meyer then showed that regular model sets are generalized almost periodic, and he calls such points sets  almost periodic patterns.
Our main result Theorem~\ref{thm:char-mod-set} shows that the converse is also true within the class of Meyer sets. It  also provides a slight strengthening of Meyer's original result in that it shows that almost periodic patterns agree with almost periodic measures up to sets of arbitrary small density. This formalizes a viewpoint sometimes expressed by physicists that regular model sets are almost periodic, up to a small error.

\smallskip

Our main result
%%Corollary~\ref{thm:char-mod-set}
is a consequence of a more general result, Theorem~\ref{thm:char-weighted-model}, which characterises generalized-almost-periodic measures with Meyer set support as being weighted model sets with Riemann integrable weight functions.
The proof of that theorem in turn relies on two ingredients:   We first re-analyse the construction of regular model sets based on cut-and-project schemes. In this context, we show existence of special cut-and-project schemes, called \textit{universal} below, which allow one to lift Bohr almost periodic functions on the real space to the internal space. As we show, they also allow one to lift strongly almost periodic measures supported within a given Meyer set.
The second ingredient is a certain approximation operator $\mathcal{T}$ acting on measures.  This operator smoothes a measure and restricts its support to a given Meyer set.
On a conceptual level, the introduction and investigation  of  $\Bohr$ cut-and-project schemes and the operator  $\mathcal{T}$ can be seen as the key achievements of this work.

\smallskip

As far as applications to aperiodic order are concerned, we could have restricted attention to Euclidean space. However we formulate our arguments in the slightly more general framework of $\sigma$-compact locally compact abelian groups. We feel that this makes the arguments more transparent. Moreover, let us  emphasize here that our arguments solely  rely on Bohr almost periodic functions, which makes locally compact abelian groups a natural setting.
We impose $\sigma$-compactness for the reader's convenience, as this allows one to compute densities of point sets along sequences of averaging sets that are called van Hove.
However, all our statements and their proofs below  hold for general locally compact abelian groups, with ``van Hove sequence'' being replaced by ``van Hove net'' \cite{PRS22}.

\smallskip

Our results complement  recent investigations \cite{LSS}
to characterize pure point diffraction (which is a key signature of a aperiodic order) via certain types of almost periodicity.  Specifically, in  \cite{LSS} the notions of Weyl almost periodic measure and Besicovitch almost periodic measure are put forward and the implications
\begin{eqnarray*}\mbox{generalized almost periodicity} &\Longrightarrow &\mbox{Weyl almost periodicity}\\
&\Longrightarrow & \mbox {Besicovitch almost periodity}
\end{eqnarray*}
are established. The main results of \cite{LSS} can be understood as saying that Besicovitch almost periodic measures are exactly the measures for which a convincing diffraction theory can be set up along a fixed van Hove sequence, and Weyl almost periodic measures are exactly the measures for which a convincing diffraction theory can be set up along any van Hove sequence. In this sense \cite{LSS} gives a complete picture for the r\^oles of Weyl- and Besicovitch almost periodicity in aperiodic order. The r\^ole of generalized almost periodicity in aperiodic order is now established by the main results of this article: it is exactly the form of almost periodicity present in the key class of examples.

\smallskip

Our article is organized as follows: The setting and basic notions are explained in Section \ref{sect-key}.  The core of our argument is presented in Section \ref{sect-core}, which features the notion of $\Bohr$ cut-and-project schemes and the operator $\cT$ mentioned above.  Our main results are then derived in Section \ref{sect-main}, and (counter)examples are discussed in Section~\ref{sec-ex}.
As we believe that the concept of $\Bohr$ cut-and-project scheme may be of independent interest, we include a further study of it in Section \ref{sec-Bohr}.

\section{Setting and key players}\label{sect-key}

We are concerned with  point sets  and almost periodicity  in a locally compact abelian group. Here we collect the necessary background for our considerations. The material of this section is well-known.  For early references see e.g.~\cite{me72, ARMA}. More references will be given along the way.

\subsection{Functions, measures, and sets}

Whenever $G$ is a locally compact abelian group, we denote a Haar measure on $G$ by $\theta_G$ or by $|\cdot|$. We denote integration  with respect to that Haar measure by $\int \dd s$ or by $\int  \dd t$.

\smallskip

The vector space of continuous functions from $G$ to $\CC$ is denoted by $C(G)$, the subspace of bounded continuous functions is denoted by $\Cb(G)$. The latter space is complete with respect to the supremum norm $\|\cdot\|_\infty$. The closed subspace of uniformly continuous and bounded functions is denoted by $\Cu(G)$,
the subspace of continuous and compactly supported functions is denoted by $\Cc(G)$. A real-valued function $f$ on $G$ is \textit{Riemann integrable} if for any $\varepsilon>0$ there exist $g,h\in \Cc (G)$ with $g\leq f\leq h$ and $\int_{G} (h(s)-g(s)) \dd s \leq \varepsilon$.
A complex-valued function is Riemann integrable if both its real part and its imaginary part are Riemann integrable.
We denote the space of complex-valued Riemann integrable functions on $G$ by $\cR(G)$.

\smallskip

Any linear functional on $\Cc(G)$ which is continuous in the inductive topology on $\Cc(G)$ can be identified with a complex Radon measure $\mu$ on $G$ via the Riesz representation theorem \cite{F99}. We will simply call such $\mu$ a \textit{measure}. A measure $\mu$ is  \textit{supported} within the closed set $C\subset G$ if $\mu (\varphi) =0$ for all $\varphi \in \Cc (G)$ with $\varphi =0$ on $G\setminus C$.
Consider $\varphi \in \Cc (G)$ and a measure $\mu$. We  mostly  write $\int \varphi \dd\mu  $  or $\int\varphi (t) \dd\mu(t)$ instead of $\mu(\varphi)$.
The \textit{convolution} $\varphi \ast \mu $ of $\varphi$ and $\mu$ is the function on $G$ given by
\[
(\varphi \ast \mu) (x) = \int_{G} \varphi (x-t) \dd\mu (t)
\]
for $x\in G$. A measure $\mu$ on $G$ is  \textit{translation bounded} if
 $\varphi*\mu$ is bounded for all $\varphi \in \Cc(G)$.
The vector space of all translation bounded measures on $G$ is denoted by $\M^{\infty}(G)$.

\smallskip

Any subset $\vL$ of $G$ meeting any compact set in only finitely many points  comes with a measure, called the \textit{Dirac comb} (of $\vL$), defined by
$$\delta_\vL =\sum_{x\in \vL} \delta_x \ .$$
Clearly, $\vL$ can be recovered from $\delta_{\vL}$. Hence, this construction allows one to think of such point sets also as measures.

\smallskip

 In this article, we are mainly interested in subsets of $G$ that are both uniformly discrete and relatively dense. Here, $\vL\subset G$ is called \textit{uniformly discrete} if there exists an open neighborhood $U$ of $0$ in $G$ such that $(s+U) \cap (t+ U) = \emptyset$ for all $s,t\in \vL$ with $t\neq s$. Moreover, $\vL\subset G$ is called \textit{relatively dense} if there exists a compact $K\subset G$ with $\vL + K = G$. We write $A+ B$ to denote the pointwise sum $\{a + b : a\in A,b\in B\}$  of two subsets $A,B$ of $G$.

%\smallskip
%A  subset $\varLambda$ of $G$ is called  \textit{relatively dense} if there exists a compact $K\subset G$ with $\varLambda + K = G$. A subset $\varLambda$ of  $G$ is called  \textit{uniformly discrete} if there exists an open neighborhood $U$ of $0$ in $G$ with $(s+U) \cap (t+ U) = \emptyset$ for all $s,t\in \varLambda$ with $t\neq s$.

\smallskip

\textbf{We now fix a $\sigma$-compact, locally compact abelian group $G$ for the remainder of the article.}

\subsection{Almost periodic measures and their mean}

We recall background on almost periodic functions and measures.
Let us denote by  $\tau_x$ the operator of translation by $x\in G$, which maps a function $f$ on $G$ to the function $\tau_x f$ given by $(\tau_x f)(t) = f(t-x)$ for $t\in G$.
A function $f\in \Cu (G)$ is \textit{Bohr almost periodic} if for every $\eps>0$ the  set  of its $\varepsilon$-almost periods
$$P_\eps=\{x\in G : \|f - \tau_x f \|_\infty <\eps\}$$
is relatively dense in $G$. The vector space of Bohr almost periodic functions on $G$ is denoted by $SAP(G)$. It is a function algebra with respect to pointwise multiplication that is closed in $(\Cu(G), \|\cdot\|_\infty)$.

\smallskip

There exists a unique compact group $G_\mathsf{b}$ admitting an injective group homomorphism $i : G\longrightarrow G_\mathsf{b}$ with dense range such that $SAP (G) = \{ f_\mathsf{b}\circ i : f_\mathsf{b}\in C(G_\mathsf{b})\}$.
It is called the \textit{Bohr compactification} of $G$. Define $M : SAP (G)\longrightarrow \CC$ by $M(f)=\theta_{G_\mathsf{b}}(f_\mathsf{b}\circ i)$, where $f=f_\mathsf{b}\circ i\in SAP(G)$ and where  $\theta_{G_\mathsf{b}}$ is the normalised Haar measure on $G_\mathsf{b}$. Then $M$ is positive linear and satisfies $M(1) =1$  and $M(\tau_x f) = M(f)$ for any $x\in G$ and $f\in SAP(G)$. In fact the latter properties uniquely determine $M$, as follows from the Riesz representation theorem for $G_\mathsf{b}$. The map  $M$ is called  the \textit{mean} on $SAP (G)$, and we call $M(f)$ the mean of $f$.

\smallskip

The mean of a Bohr almost periodic function can be computed by an averaging procedure: If $(A_n)$ is any  F\o lner sequence in $G$, we have
\[
M(f) = \lim_n \frac{1}{|A_n|} \int_{x+A_n} f(t) \dd t \ ,
\]
uniformly in $x \in G$. This is  a consequence of Eberlein's ergodic theorem \cite{EBE}.
Recall that a \textit{F\o lner sequence} in $G$ consists of compact subsets $(A_n)$ of positive Haar measure such that
$|A_n \triangle (x+A_n)|=o(|A_n|)$ as $n\to \infty$, for any $x\in G$.

\smallskip

A translation bounded measure $\mu\in \cM^\infty(G)$ is \textit{strongly almost periodic} if the function $\varphi \ast \mu$ is Bohr almost periodic for any $\varphi \in \Cc (G)$. The vector space of all strongly almost periodic measures is denoted by $\SAP (G)$.
The mean on $SAP(G)$ carries over to a map on $\SAP(G)$, again denoted by $M$ and called the \textit{mean} on $\SAP (G)$.  Indeed, the following holds as discussed in \cite[Lem.~4.10.6]{MoSt}. For the convenience of the reader we include the short proof.

\begin{lemma}[The mean on $\SAP (G)$]\label{lem:sapmean}
There exists a unique  linear map $M : \SAP(G)\longrightarrow \CC$ such that
\[
M(\varphi*\mu)= \theta_G(\varphi) \cdot M(\mu)
\]
for all $\mu \in \SAP(G)$ and $\varphi \in \Cc(G)$. The map $M$ satisfies the following properties:
\begin{itemize}
  \item[(a)] $M$ is positive.
  \item[(b)]  $M(\theta_G) =1$.
  \item[(c)]  $M(\mu\circ \tau_x )  = M(\mu)$ for any $x\in G$ and $\mu \in \SAP (G)$.
  \end{itemize}
\end{lemma}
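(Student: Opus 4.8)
The plan is to dispose of uniqueness first and then build $M$ explicitly. Uniqueness is soft: if $M$ satisfies the displayed identity, then choosing any $\varphi\in\Cc(G)$ with $\theta_G(\varphi)=1$ forces $M(\mu)=M(\varphi\ast\mu)$, and the right-hand side is nothing but the mean of the Bohr almost periodic function $\varphi\ast\mu$; so $M$ is completely determined. Hence everything hinges on constructing such an $M$ and verifying the identity together with (a)--(c).

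The single analytic ingredient I would isolate first is that the mean on $SAP(G)$ commutes with convolution by $\Cc$-functions:
\[
M(\varphi\ast g)=\theta_G(\varphi)\,M(g)\qquad\text{for }g\in SAP(G),\ \varphi\in\Cc(G).
\]
To prove this, I would write $\varphi\ast g=\int_G\varphi(t)\,\tau_t g\,\dd t$ and note that, since $g\in\Cu(G)$, the map $t\mapsto\tau_t g$ is uniformly continuous from $G$ into $(\Cu(G),\|\cdot\|_\infty)$, so this integral is a norm limit of Riemann sums $\sum_i\varphi(t_i)\,(\tau_{t_i}g)\,|E_i|$ over finer and finer finite partitions $\{E_i\}$ of a fixed compact set containing $\supp\varphi$, with $t_i\in E_i$. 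Each Riemann sum is a finite linear combination of translates of $g$, hence lies in the translation-invariant, norm-closed space $SAP(G)$, so that $\varphi\ast g\in SAP(G)$. As $M$ is norm continuous on $SAP(G)$ (indeed $|M(f)|\le\|f\|_\infty$) and translation invariant with $M(\tau_{t_i}g)=M(g)$, and $\sum_i\varphi(t_i)|E_i|\to\int_G\varphi(t)\,\dd t=\theta_G(\varphi)$, passing to the limit gives the claim.

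With this in hand, I would fix once and for all some $\psi_0\in\Cc(G)$ with $\psi_0\ge0$ and $\theta_G(\psi_0)=1$ and simply set
\[
M(\mu):=M(\psi_0\ast\mu)\qquad(\mu\in\SAP(G)),
\]
which is legitimate because $\psi_0\ast\mu\in SAP(G)$ by the very definition of $\SAP(G)$. Linearity in $\mu$ is clear. For (a): $\psi_0\ge0$ makes $\psi_0\ast\mu\ge0$ whenever $\mu\ge0$, and the mean on $SAP(G)$ is positive. For (b): $\psi_0\ast\theta_G$ is the constant function $\theta_G(\psi_0)=1$, so $M(\theta_G)=M(1)=1$. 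For (c): $\psi_0\ast(\mu\circ\tau_x)$ is a translate of $\psi_0\ast\mu$, so $M(\mu\circ\tau_x)=M(\mu)$ by translation invariance of the mean on $SAP(G)$. It remains to check the defining identity: for $\varphi\in\Cc(G)$ and $\mu\in\SAP(G)$, the elementary Fubini-type identities $\psi_0\ast(\varphi\ast\mu)=(\psi_0\ast\varphi)\ast\mu=\varphi\ast(\psi_0\ast\mu)$ combined with the commutation formula --- first with $g=\varphi\ast\mu$ and $\theta_G(\psi_0)=1$, then with $g=\psi_0\ast\mu$ --- yield
\[
M(\varphi\ast\mu)=M\bigl(\psi_0\ast(\varphi\ast\mu)\bigr)=M\bigl(\varphi\ast(\psi_0\ast\mu)\bigr)=\theta_G(\varphi)\,M(\psi_0\ast\mu)=\theta_G(\varphi)\,M(\mu).
\]

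The hard part will be the justification of the commutation formula on $SAP(G)$ --- that $\varphi\ast g$ is a genuine norm limit in $\Cu(G)$ of those Riemann sums and that the bounded functional $M$ may be moved inside the limit --- together with the (routine but slightly fiddly) verification of the convolution identities $\psi_0\ast(\varphi\ast\mu)=(\psi_0\ast\varphi)\ast\mu=\varphi\ast(\psi_0\ast\mu)$ for $\varphi,\psi_0\in\Cc(G)$ and $\mu\in\M^\infty(G)$. Everything else is bookkeeping.
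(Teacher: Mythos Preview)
Your argument is correct, but it takes a noticeably different route from the paper. The paper bypasses your commutation formula on $SAP(G)$ entirely: for a fixed positive $\mu\in\SAP(G)$ it observes that the linear functional $\Cc(G)\to\CC$, $\varphi\mapsto M(\varphi\ast\mu)$, is positive and translation invariant (since $(\tau_x\varphi)\ast\mu=\tau_x(\varphi\ast\mu)$ and the mean on $SAP(G)$ is translation invariant), hence equals a nonnegative multiple $M_\mu\cdot\theta_G$ of Haar measure; linearity in $\mu$ then defines $M$ in general, and the displayed identity is built in. This is shorter and avoids any Riemann-sum approximation or convolution-associativity checks. Your approach, by contrast, first proves $M(\varphi\ast g)=\theta_G(\varphi)M(g)$ for $g\in SAP(G)$ via Riemann sums and norm-continuity of $M$, and then leverages associativity of convolution together with a fixed normalised $\psi_0$. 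What your route buys is an explicit formula $M(\mu)=M(\psi_0\ast\mu)$ from the outset and a self-contained verification that does not invoke the uniqueness of Haar measure; what the paper's route buys is brevity and the avoidance of the analytic details you yourself flag as ``the hard part''.
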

\begin{proof}
Uniqueness is clear. As for existence we note that for positive $\mu$ the map
\[
\Cc (G)\longrightarrow \CC \ , \qquad \varphi \mapsto M(\varphi \ast \mu)\ ,
\]
is positive and invariant under translations. Hence it must be a multiple of the Haar measure. So we find a unique number $M_\mu$ with $M(\varphi\ast \mu) = \theta_G(\varphi) \cdot M_\mu$ for all $\varphi \in \Cc (G)$. Now, the statement follows easily for general $\mu$ by taking linear combinations.  The properties (a), (b), (c) are now clear.
\end{proof}

Similarly to $SAP (G)$, it is possible to compute the mean on $\SAP (G)$ by an averaging procedure. This procedure involves   special F\o lner sequences $(A_n)$ in $G$ introduced by Schlottmann \cite{Martin2} that are called  \textit{van Hove sequences}. Their characteristic feature is that
$$\lim_n \frac{\mu(\partial^K A_n)}{|A_n|} =0$$
holds  for any compact $K$ and any translation bounded measure $\mu$, where the \textit{$K$-boundary} $\partial^K A$ is defined by
$$
\partial^K A = ((A+K)\cap (G\setminus A^\circ)) \cup ((-K+\overline{(G\setminus A)}) \cap A) \ .
$$
Note that a compact set $A$ has a compact $K$-boundary.

\begin{remark} (a)
In Euclidean space, any sequence $(A_n)$ of compact balls of radius $n$ constitutes a van Hove sequence.

(b)  Existence of van Hove sequences for the $G$ treated in this article was shown in \cite{Martin2}. In fact,  any F{\o}lner sequence yields a van Hove sequence by a simple thickening procedure \cite{PRS22}.
\end{remark}

We have the following result \cite[Lem.~4.10.7]{MoSt}.

\begin{proposition}[Computing the mean via averaging]\label{prop:meanav}
Let $(A_n)$ be any van Hove sequence in $G$. We then have for all $\mu \in \SAP(G)$ that
\[
M(\mu) = \lim_n \frac{\mu(A_n)}{|A_n|} \ .
\]
Furthermore, this limit exists uniformly in translates. \qed
\end{proposition}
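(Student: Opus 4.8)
The plan is to smooth $\mu$ by convolution and thereby reduce the assertion to the averaging formula for Bohr almost periodic functions recalled above (Eberlein's ergodic theorem). Fix once and for all $\varphi\in\Cc(G)$ with $\varphi\ge 0$, $\theta_G(\varphi)=1$, and $\supp\varphi$ contained in some compact symmetric neighbourhood $U$ of $0$. Since $\mu\in\SAP(G)$, the convolution $\varphi*\mu$ lies in $SAP(G)$, and Lemma~\ref{lem:sapmean} gives $M(\varphi*\mu)=\theta_G(\varphi)\cdot M(\mu)=M(\mu)$. As van Hove sequences are F\o lner sequences, the averaging formula for $SAP(G)$ applied to $\varphi*\mu$ yields
\[
\frac{1}{|A_n|}\int_{x+A_n}(\varphi*\mu)(t)\dd t\;\longrightarrow\;M(\mu)\qquad(n\to\infty),
\]
uniformly in $x\in G$. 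It therefore suffices to prove that $\frac{1}{|A_n|}\bigl(\mu(x+A_n)-\int_{x+A_n}(\varphi*\mu)(t)\dd t\bigr)\to 0$ uniformly in $x\in G$; here $\mu(x+A_n)$ is well defined because $x+A_n$ is compact and $\mu$ is a Radon measure.

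For the comparison I would use Fubini's theorem, which applies since $(t,s)\mapsto\varphi(t-s)$ is continuous with compact support on $(x+A_n)\times G$:
\[
\int_{x+A_n}(\varphi*\mu)(t)\dd t=\int_G h_n^x(s)\dd\mu(s),\qquad h_n^x(s):=\int_{x+A_n}\varphi(t-s)\dd t .
\]
The function $h_n^x=\mathbf{1}_{x+A_n}*\check\varphi$ (with $\check\varphi(y)=\varphi(-y)$) is continuous with $0\le h_n^x\le 1$; moreover, since $\supp\varphi\subseteq U$ and $U$ is symmetric, one checks directly that $h_n^x(s)=1$ whenever $s+U\subseteq x+A_n$ and $h_n^x(s)=0$ whenever $s\notin x+A_n+U$. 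Comparing these two facts with the definition of the $U$-boundary and using $\partial^U(x+A_n)=x+\partial^U A_n$, a short case distinction shows that the Borel function $g_n^x:=h_n^x-\mathbf{1}_{x+A_n}$ satisfies $|g_n^x|\le 1$ and $\supp g_n^x\subseteq x+\partial^U A_n$. Hence
\[
\Bigl|\,\mu(x+A_n)-\int_{x+A_n}(\varphi*\mu)(t)\dd t\,\Bigr|=\Bigl|\int_G g_n^x\dd\mu\Bigr|\le|\mu|\bigl(x+\partial^U A_n\bigr).
\]

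It remains to control the right-hand side. Since $\mu$ is translation bounded, so is its total variation $|\mu|$ (see, e.g., \cite{ARMA}). Applying the van Hove condition to $|\mu|$ and to the compact set $U$ gives $|\mu|(\partial^U A_n)/|A_n|\to 0$, which together with the first paragraph already yields $\mu(A_n)/|A_n|\to M(\mu)$. For the uniformity in translates one needs $\sup_{x\in G}|\mu|(x+\partial^U A_n)/|A_n|\to 0$, the uniform form of the van Hove property; this follows from the stated condition (for Haar measure and for $|\mu|$) together with translation boundedness of $|\mu|$ by a routine covering argument, cf.~\cite{Martin2,PRS22}. Combining this with the first paragraph completes the proof.

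The one point I expect to require genuine care is the verification that the discrepancy $g_n^x$ is supported inside the translated van Hove boundary $\partial^U(x+A_n)$; the remainder is the standard ``mollify, apply Eberlein, estimate on the boundary'' pattern. The passage to the \emph{uniform} van Hove property (hence to translation boundedness of $|\mu|$) is the other spot where a little work is hidden, but is unnecessary if one is content with convergence along a fixed sequence without the uniformity claim.
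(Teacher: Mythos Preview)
The paper does not actually prove this proposition: it is stated with a \qed\ and attributed to \cite[Lem.~4.10.7]{MoSt}. Your argument is correct and is precisely the standard route taken in that reference---convolve with a normalised $\varphi\in\Cc(G)$ to reduce to Eberlein's ergodic theorem for $\varphi*\mu\in SAP(G)$, then control the discrepancy $\mu(x+A_n)-\int_{x+A_n}(\varphi*\mu)\,\dd t$ by the total variation of $\mu$ on the translated van Hove boundary. Your verification that $\supp g_n^x\subseteq x+\partial^U A_n$ is accurate (the two cases $s\in x+A_n$ and $s\notin x+A_n$ land respectively in the two pieces of the definition of $\partial^U A_n$), and your remark that the uniform-in-$x$ bound $\sup_x|\mu|(x+\partial^U A_n)/|A_n|\to 0$ requires the additional covering step (bounding $|\mu|(B)$ by a constant times $|B+V|$ via translation boundedness, then absorbing $V$ into a larger $K$-boundary) is exactly the point where the cited sources do a little extra work.
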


Having discussed  almost periodic measures  and their mean we can now define generalized almost periodicity. This notion  has been introduced by Meyer for translation bounded measures on Euclidean space \cite[Def.~2.31]{Mey2}. The extension to  $\sigma$-compact locally compact abelian groups has been discussed in \cite{LSS}.

\begin{definition}[Generalized almost periodicity and almost periodic patterns]\
\begin{itemize}
\item[(a)] A real measure $\varrho\in \mathcal M^\infty(G)$ is called {\rm generalized-almost-periodic (g-a-p)} if for every $\eps>0$ there exist $\mu_\eps,\nu_\eps\in\SAP(G)$ such that $\mu_\eps \leq \varrho \leq \nu_\eps$ and $M(\nu_\eps-\mu_\eps) < \eps$. A complex  measure $\varrho\in \mathcal M^\infty(G)$ is called    {\rm generalized-almost-periodic (g-a-p)} if both its real and imaginary parts are g-a-p. The  vector space of complex g-a-p measures is denoted by $\GAP (G)$.

\item[(b)] A  point set $\vL$ is called an {\rm almost periodic pattern} if its Dirac comb $\delta_{\vL}$ is a generalized-almost-periodic measure.
\end{itemize}
\end{definition}

\begin{remark}  (a) We note that $\varrho\in\mathcal M^\infty(G)$ is g-a-p if and only if it is a linear combination of positive g-a-p measures. (Observe that, if $\varrho$ is real and g-a-p, there exist strongly almost periodic $\mu, \nu$ such that $\mu \leq \varrho \leq \nu$. Then
\[
\varrho=\frac{1}{2} \left( (\varrho-\mu)-(\nu-\varrho)+(\nu-\mu) \right)
\]
is a linear combination of positive g-a-p measures.)

(b) It is not hard to see that the mean $M$ on $\SAP (G)$ can be extended to $\GAP(G)$. As we do not need this we refrain from further discussion.
\end{remark}

\subsection{Meyer sets, cut-and-project schemes, and the map $\vOmega$}
Here we discuss the  particular point sets most relevant to the study of aperiodic order.

\smallskip

\begin{definition}[Meyer set]
A uniformly discrete and relatively dense set $\vL$ in $G$ is called a {\rm Meyer set} if there exists a  finite  set $F$ in $G$ with $\vL - \vL \subset \vL + F$.
\end{definition}

\begin{remark} (a) Meyer sets were introduced and intensely studied by Meyer \cite{me72}. They were later called Meyer sets and placed in a central position in aperiodic order by Moody in his article  \cite{m97} whose terminology we adopt here.

(b) A relatively dense set $\vL$ is a Meyer set if and only if $\vL-\vL-\vL$ is uniformly discrete, see \cite[Ch.~II 14.2]{me72} and \cite[Lem.~5.7.1]{St17}. In Euclidean space, it has been shown by Lagarias that a relatively dense set $\vL$ is a Meyer set if and only if $\vL-\vL$ is uniformly discrete \cite{L99}. As shown by Lev and Olevskii, the latter condition can be weakened to $\vL-\vL$ having finite uniform upper density \cite{LO15}. Lagarias' characterisation extends to compactly generated locally compact abelian groups \cite{BLM}.
\end{remark}

\smallskip

Meyer gave a constructive description of all Meyer sets, which is based on the notion of cut-and-project scheme.

\begin{definition}[Cut-and-project scheme over $G$]
A {\rm cut-and-project scheme $(H,\cL)$} over $G$ consists of a locally compact abelian group $H$  and a discrete co-compact subgroup $\cL\subset G\times H$, called the {\rm lattice}, such that

\begin{itemize}
\item
the canonical projection
$\pi^G : \cL\longrightarrow G, (x,y)\mapsto x,$ is one-to-one;
\item  the canonical projection
 $\pi^H : \cL\longrightarrow H, (x,y)\mapsto y,$
has dense range.
\end{itemize}
\end{definition}
When working with a cut-and-project scheme $S=(H,\cL)$ over $G$, we will often abbreviate $L=\pi^G(\cL)$. Any cut-and-project scheme comes with a canonical  map on $L$ called   \textit{star map} given by   $$\star: L\to H, \qquad x\mapsto x^\star,$$
 where  for $x\in L$ the point
$x^\star \in H$ is the unique element of $H$ such that $(x,x^\star)\in \cL$. (Here, uniqueness follows from injectivity of $\pi^G$ on $\cL$.) To any  $W\subset H$ we can then associate the \textit{cut-and-project set}
\[
\oplam_{S}(W)=\{x\in L: x^\star\in W\} \ .
\]
Whenever the cut-and-project scheme $S$ is clear from context, we will simply write $\oplam(W)$ instead of $\oplam_{S}(W)$. By standard theory, $\oplam (W)$ is uniformly discrete if $\overline{W}$ (the closure of $W$)  is compact, and $\oplam(W)$ is relatively dense if $W^\circ$ (the interior of $W$)  is not empty. We call $W$ the \textit{window} underlying $\oplam_{S}(W)$. A point set $\vL$ in $G$ is called a \textit{model set} if there exist a cut-and-project scheme $S=(H,\cL)$ over $G$ and $W\subset H$  with non-empty interior and compact closure such that $\vL = \oplam(W)$ holds.   For us the following strengthening of the concept of model set will be crucial.

\begin{definition}[Regular model set]
A point set $\vL$ in $G$ is a {\rm regular model set} if there exist a cut-and-project scheme $(H,\cL)$ over $G$ and   $W\subset H$  with non-empty interior and compact closure whose
topological boundary $\partial W$ has zero Haar measure
such that $\vL = \oplam(W)$ holds. Such $W$ is then called a regular window.
\end{definition}

The following result readily derives from Meyer's work on harmonious sets \cite{me72}. A proof based on almost periodicity is given in \cite{St17}. A related result about constructing a cut-and-project scheme out of suitable diffraction data appears in \cite{BM04}.

\begin{theorem}[Meyer's characterization] \label{thm:meychar} Let $\vL$ be a relatively dense set in $G$. Then the following assertions  are equivalent.
\begin{itemize}
\item[(i)] $\vL$ is a Meyer set.
\item[(ii)] $\vL$ is a subset of a model set.
\item[(iii)] $\vL$ is a subset of a regular model set.
\end{itemize}
\qed
\end{theorem}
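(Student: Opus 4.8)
The plan is to close the cycle $\mathrm{(i)}\Rightarrow\mathrm{(iii)}\Rightarrow\mathrm{(ii)}\Rightarrow\mathrm{(i)}$, so that all the work sits in $\mathrm{(i)}\Rightarrow\mathrm{(iii)}$. The implication $\mathrm{(iii)}\Rightarrow\mathrm{(ii)}$ is immediate, since a regular window is in particular a window. For $\mathrm{(ii)}\Rightarrow\mathrm{(i)}$, suppose $\vL\subset\oplam_{S}(W)$ for a model set; then $\overline W$ is compact by definition, so $\vL$ is uniformly discrete, being a subset of the uniformly discrete set $\oplam_{S}(\overline W)$, and $\vL-\vL-\vL\subset\oplam_{S}\big(\overline W-\overline W-\overline W\big)$ is uniformly discrete because $\overline W-\overline W-\overline W$ is compact. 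Since $\vL$ is relatively dense by hypothesis, the criterion recalled in the remark after the definition of Meyer set — a relatively dense set is a Meyer set precisely when its threefold difference is uniformly discrete — shows that $\vL$ is a Meyer set.

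For $\mathrm{(i)}\Rightarrow\mathrm{(iii)}$ one has to manufacture a cut-and-project scheme adapted to the Meyer set $\vL$. After a translation I may assume $0\in\vL$; in any case $\vL\subset L:=\langle\vL\rangle$, a countable subgroup of $G$. The decisive input is Meyer's theorem that a relatively dense Meyer set is \emph{harmonious}: for every $\eps\in(0,1)$ the $\eps$-dual set
\[
\vL^{\eps}:=\{\chi\in\widehat{G}:\ |\chi(x)-1|\le\eps\ \text{ for all }x\in\vL\}
\]
is relatively dense in $\widehat{G}$. This is proved in \cite{me72} by Fourier analysis; the route consonant with the present paper is that of \cite{St17}, where harmoniousness is deduced from the almost periodicity of the autocorrelation of $\delta_{\vL}$. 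Granting it, one builds from the $\eps$-dual sets $\vL^{\delta}$, $\delta\downarrow 0$, a second locally compact group topology on $L$, and takes $H$ to be the corresponding completion and $x\mapsto x^{\star}$ the resulting homomorphism $L\to H$, which has dense range. One then verifies the axioms of a cut-and-project scheme: $\pi^{G}$ is one-to-one since $L\subset G$, the range of $\pi^{H}$ is dense by construction, and the diagonal $\cL:=\{(x,x^{\star}):x\in L\}$ is a discrete co-compact subgroup of $G\times H$. Relative denseness of all the sets $\vL^{\delta}$ is exactly what is used here: it forces $\cL$ to be uniformly discrete (an element of $L$ small in $G$ and small in $H$ must vanish) and relatively dense in $G\times H$ (because $\vL$, hence $L$, is relatively dense in $G$, the remaining uniformity being supplied again by the $\vL^{\delta}$). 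The same estimates show that the star-image $\vL^{\star}$ is relatively compact in $H$.

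It remains to pass from a window to a \emph{regular} one. Set $K:=\overline{\vL^{\star}}$, a compact subset of $H$, pick $\psi\in\Cc(H)$ with $0\le\psi\le1$ and $\psi(0)=1$, and put $V_{t}:=\{\psi>1-t\}$ for $t\in(0,1)$. These are relatively compact open neighborhoods of $0$ in $H$ satisfying $\overline{V_{s}}\subset V_{t}$ for $s<t$ (whence $\overline{K+V_{s}}\subset K+V_{t}$, so the topological boundaries $\partial(K+V_{t})$ are pairwise disjoint and all lie in the compact set $K+\supp\psi$); hence all but countably many of them are Haar-null. Fixing such a $t$ and setting $W:=\overline{K+V_{t}}$, one gets that $W$ is compact, $W^{\circ}\supset K+V_{t}\neq\emptyset$, $\partial W\subset\partial(K+V_{t})$ has zero Haar measure, and $\vL^{\star}\subset K\subset W$. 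Thus $W$ is a regular window and $\vL\subset\oplam_{(H,\cL)}(W)$, so $\vL$ is a subset of a regular model set; together with the two implications above, this closes the cycle and proves the theorem.

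The entire difficulty lies in the middle paragraph: pinning down the locally compact topology on $L$ and verifying that its diagonal embedding into $G\times H$ is at once uniformly discrete and co-compact. Both rest on Meyer's nontrivial theorem that relative denseness of $\vL$ forces relative denseness of every $\vL^{\eps}$ — equivalently, that $\delta_{\vL}$ coincides with a strongly almost periodic measure up to a measure of arbitrarily small mean — which I would quote from \cite{me72, St17} rather than redo here.
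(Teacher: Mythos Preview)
The paper does not give its own proof of this theorem; it is stated as a known result with references to \cite{me72} and \cite{St17}. Your outline is in the spirit of those references --- harmoniousness of Meyer sets yields a cut-and-project scheme, and then one regularizes the window --- and the implications $\mathrm{(iii)}\Rightarrow\mathrm{(ii)}\Rightarrow\mathrm{(i)}$ together with your regularization argument via the family $K+V_t$ with pairwise disjoint boundaries are correct.

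One genuine error to flag: the ``equivalently'' in your last paragraph is false. Harmoniousness (relative denseness of every $\vL^\eps$) holds for \emph{all} Meyer sets, whereas the property that $\delta_\vL$ agrees with a strongly almost periodic measure up to arbitrarily small mean is precisely generalized almost periodicity, which by the main Theorem~\ref{thm:char-mod-set} of this paper characterizes the strictly smaller class of regular model sets (see the counterexample in Theorem~\ref{example}). This misidentification does not break your argument, since you only invoke harmoniousness as a cited black box, but it should be removed. A smaller point: your middle paragraph is vague about how the sets $\vL^\eps\subset\widehat{G}$, which live in the dual group, produce a locally compact group topology on $L\subset G$; the actual constructions in \cite{me72,St17} proceed either through precompact neighborhood bases built from difference sets of $\vL$ or through strong almost periodicity of associated functions, and are less direct than your sketch suggests.
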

The construction of point sets out of cut-and-project sets via $\oplam$ can be generalized to give -- so to speak --  weighted point sets. To make this precise we think of point sets as measures by invoking Dirac combs.  Specifically, if  $S=(H,\cL)$ is some cut-and-project scheme over $G$, then any point set $\oplam (W)$ in $G$ with a window $W$ having compact closure gives rise to the translation bounded measure
$$\delta_{\oplam(W)}= \sum_{x\in L} 1_W (x^\star) \delta_x \ .$$
This is a special case of the map
$\vOmega=\vOmega_S$ from bounded functions on $H$ vanishing outside some compact set to the translation bounded  measures  on $G$ defined by
$$\vOmega(h) =\sum_{x\in L}  h(x^\star) \delta_x \ .$$
Indeed,  for $h: H\to \CC$ bounded by a finite number $C$ and  vanishing outside of a compact $K\subset H$, the measure $\vOmega(h)$ is a point measure vanishing outside the uniformly discrete set  $\oplam (K)$, and the mass of a single point of $\vOmega(h)$ is bounded by $C$, whence  $\vOmega(h)$ is translation bounded. For relatively compact $W\subset H$ we then find $\delta_{\oplam(W)} = \vOmega(1_W)$.

\smallskip

This construction produces many strongly almost periodic measures \cite[Lem.~7.3]{LR}.

\begin{proposition}[Strong almost periodicity of $\vOmega (h)$]\label{omega-is-sap}
Let $S = (H,\cL)$ be a cut-and-project scheme over $G$ with associated map $\vOmega$. Then  $\vOmega(h)\in\SAP(G)$ for any $h\in \Cc (H)$. \qed
\end{proposition}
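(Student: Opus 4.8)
The plan is to show that $\varphi \ast \vOmega(h)$ is a Bohr almost periodic function on $G$ for every $\varphi \in \Cc(G)$, which is exactly the definition of $\vOmega(h) \in \SAP(G)$. The natural route is to realize $\varphi \ast \vOmega(h)$ as the restriction to $G$ (via the dense embedding $G \hookrightarrow G \times H$ given by the lattice $\cL$) of a continuous function on the compact quotient $(G \times H)/\cL$, since functions on $G$ that arise as pullbacks of continuous functions on a compactification are precisely the Bohr almost periodic ones. First I would write out, for $x \in G$,
\[
(\varphi \ast \vOmega(h))(x) = \sum_{y \in L} h(y^\star)\, \varphi(x - y) = \sum_{(y,y^\star)\in\cL} \varphi(x-y)\, h(y^\star) = \sum_{\ell \in \cL} F\big((x,0) - \ell\big),
\]
where $F \colon G \times H \to \CC$ is defined by $F(s,t) = \varphi(s)\, h(t)$. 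Since $\varphi \in \Cc(G)$ and $h \in \Cc(H)$, the function $F$ lies in $\Cc(G \times H)$; hence its $\cL$-periodization $F^\natural(z) := \sum_{\ell \in \cL} F(z - \ell)$ is a well-defined continuous function on $G \times H$ (the sum is locally finite because $\cL$ is discrete and $F$ has compact support) which is $\cL$-periodic, and therefore descends to a continuous function on the compact group $(G\times H)/\cL$.

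The next step is to transfer this to $G$. The composition $G \to G \times H$, $x \mapsto (x,0)$, followed by the quotient map $G \times H \to (G\times H)/\cL$, is a continuous group homomorphism; I claim it has dense range. Density follows from the defining properties of the cut-and-project scheme: $\cL$ is co-compact in $G \times H$ and $\pi^H(\cL)$ is dense in $H$, which together force $G \times \{0\}$ to project onto a dense subgroup of $(G\times H)/\cL$ — indeed, for any $(s,t) \in G\times H$ one can choose $\ell = (y, y^\star) \in \cL$ with $y^\star$ close to $t$, so $(s,t) - \ell$ is close to $(s - y, 0)$, i.e.\ to an element of the image of $G \times \{0\}$. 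Consequently $(G\times H)/\cL$, together with this dense homomorphism from $G$, is a compactification of $G$, and $F^\natural$ restricted to $G \times \{0\}$ — which is exactly the function $\varphi \ast \vOmega(h)$ by the computation above — is the pullback of a continuous function on a compact group, hence Bohr almost periodic. (Alternatively, since $SAP(G)$ is a closed algebra and the Bohr compactification $G_{\mathsf b}$ is the maximal such, it suffices to note that $\varphi \ast \vOmega(h)$ factors through \emph{some} compact group quotient; it need not be $G_{\mathsf b}$ itself.)

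The main obstacle is the density claim for the image of $G \times \{0\}$ in $(G\times H)/\cL$, and, relatedly, making sure the periodization $F^\natural$ is genuinely continuous and bounded rather than merely formally defined. For the former, the key point is that $\overline{G \times \{0\} + \cL}$ is a closed subgroup of $G \times H$ containing both $\cL$ and (by density of $\pi^H(\cL)$ combined with the presence of all of $G$ in the first coordinate) all of $G \times H$; a short argument using that $\pi^H|_\cL$ has dense range closes this. For the latter, one uses that $\supp F \subseteq \supp\varphi \times \supp h$ is compact and $\cL$ is uniformly discrete in $G\times H$, so only finitely many translates $F(\cdot - \ell)$ are nonzero on any fixed compact set, giving local finiteness and hence continuity of $F^\natural$; boundedness then follows from $\cL$-periodicity and continuity on the compact fundamental domain. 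Everything else is bookkeeping: unwinding the definition of $\vOmega$, identifying $L$ with $\cL$ through the star map, and recognizing the restriction of $F^\natural$ to $G \times \{0\}$ as $\varphi \ast \vOmega(h)$.
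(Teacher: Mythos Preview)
Your argument is correct and is the standard periodisation proof of this fact; the paper does not give its own proof here but simply cites \cite[Lem.~7.3]{LR}, where exactly this route via the compact torus $(G\times H)/\cL$ is taken. One small slip to fix: with $F(s,t)=\varphi(s)\,h(t)$ one computes $F\big((x,0)-(y,y^\star)\big)=\varphi(x-y)\,h(-y^\star)$, so the periodisation restricted to $G\times\{0\}$ equals $\sum_{y\in L}\varphi(x-y)\,h(-y^\star)=(\varphi\ast\vOmega(\check h))(x)$ rather than $(\varphi\ast\vOmega(h))(x)$; defining instead $F(s,t)=\varphi(s)\,h(-t)$ (still in $\Cc(G\times H)$) repairs the identification, and nothing else in your outline changes.
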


The mean of such a strongly almost periodic measure can be computed using a similar approach as in the proof of Lemma~\ref{lem:sapmean}. This gives the following result, see  \cite[Thm.~9.1]{LR} for a proof.

\begin{lemma}[Computing the mean of $\vOmega(h)$]\label{computing-the-mean}
Let $S = (H,\cL)$ be a cut-and-project scheme over $G$ with associated map $\vOmega$. Then
there exists a constant $D_S>0$ such that
\begin{equation}\label{eq:dfvO}
M(\vOmega (h)) = D_S  \cdot\int_{H} h  \dd \theta_H
\end{equation}
for all $h\in \Cc (H)$.
\qed
\end{lemma}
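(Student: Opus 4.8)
The plan is to reduce the computation of $M(\vOmega(h))$ for $h\in\Cc(H)$ to the Haar measure on $H$ by exploiting translation invariance, exactly as in the proof of Lemma~\ref{lem:sapmean}, but now on the internal group $H$. First I would fix a cut-and-project scheme $S=(H,\cL)$ over $G$ and recall that, by Proposition~\ref{omega-is-sap}, $\vOmega(h)\in\SAP(G)$ for every $h\in\Cc(H)$, so that $M(\vOmega(h))$ is defined. I would then consider the functional
\[
L_S:\Cc(H)\longrightarrow\CC,\qquad L_S(h)=M\bigl(\vOmega(h)\bigr),
\]
and argue that it is a translation-invariant, nonnegative linear functional on $\Cc(H)$; by the Riesz representation theorem and uniqueness of Haar measure this forces $L_S=D_S\cdot\theta_H$ for some constant $D_S\ge 0$, which is the claimed identity \eqref{eq:dfvO}.

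The three things to check are linearity, positivity, and translation invariance of $L_S$, together with $D_S>0$ (not merely $\ge 0$). Linearity is immediate since $h\mapsto\vOmega(h)$ is linear and $M$ is linear on $\SAP(G)$ (Lemma~\ref{lem:sapmean}). Positivity: if $h\ge 0$ then $\vOmega(h)=\sum_{x\in L}h(x^\star)\delta_x$ is a positive measure, hence $M(\vOmega(h))\ge 0$ by positivity of $M$ (Lemma~\ref{lem:sapmean}(a)). The key point is translation invariance: for $t\in H$ one has the covariance relation
\[
\vOmega(\tau_t h)=\vOmega(h)\circ\tau_{?}\ \text{up to a translate coming from }\cL,
\]
which must be handled carefully because a translation of the window by $t\in H$ need not correspond to a translation in $G$ unless $t\in\pi^H(\cL)$. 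The clean route is: for $s\in L$ with $s^\star=y\in\pi^H(\cL)$ one checks directly from the definition that $\vOmega(\tau_y h)=\tau_s\bigl(\vOmega(h)\bigr)$, so $M(\vOmega(\tau_y h))=M(\tau_s\vOmega(h))=M(\vOmega(h))$ by Lemma~\ref{lem:sapmean}(c); then one extends translation invariance from the dense subgroup $\pi^H(\cL)\subset H$ to all of $H$ by continuity, using that translation is continuous on $(\Cc(H),\|\cdot\|_\infty)$ with uniform control of supports inside a fixed compact set, and that $M\circ\vOmega$ is continuous in that sense (because $|M(\vOmega(h))|\le\|\vOmega(h)\|$-type bounds follow from translation boundedness, uniformly for $h$ supported in a fixed compact).

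The main obstacle I anticipate is precisely this last extension step — promoting invariance under the dense subgroup $\pi^H(\cL)$ to invariance under all of $H$ — since it requires a continuity estimate for $h\mapsto M(\vOmega(h))$ that is uniform over $h$ with supports in a common compact set $K\subset H$. This is exactly where one uses that $\oplam(K)$ is uniformly discrete (so the number of lattice points $x\in L$ with $x^\star$ in a translate of $K$ is bounded), giving $\|\varphi\ast\vOmega(h)\|_\infty\le C_{K,\varphi}\|h\|_\infty$ and hence, via Proposition~\ref{prop:meanav} applied along a van Hove sequence, $|M(\vOmega(h))|\le C_K\|h\|_\infty$; a density/approximation argument then closes the gap. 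Finally, to see $D_S>0$ rather than $D_S=0$, I would pick any $h\in\Cc(H)$ with $h\ge 0$, $h\not\equiv 0$, and $h$ supported where it forces $\vOmega(h)\ne 0$ (possible since $\pi^H(\cL)$ is dense, so $\oplam(W)\ne\emptyset$ for any nonempty open $W$), and note that a nonzero positive strongly almost periodic measure has strictly positive mean — indeed $M(\vOmega(h))=\theta_G(\varphi)^{-1}M(\varphi\ast\vOmega(h))>0$ for suitable $\varphi\in\Cc(G)$, $\varphi\ge 0$, since $\varphi\ast\vOmega(h)$ is a nonzero nonnegative Bohr almost periodic function and such a function has positive mean.
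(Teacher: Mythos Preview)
Your proposal is correct and follows precisely the approach the paper indicates: the paper does not prove this lemma in full but remarks that it ``can be computed using a similar approach as in the proof of Lemma~\ref{lem:sapmean}'' and cites \cite[Thm.~9.1]{LR}, which is exactly your translation-invariance argument on the internal group $H$. One minor simplification: once you know $L_S$ is a positive linear functional on $\Cc(H)$ (hence a Radon measure by Riesz), continuity of $t\mapsto L_S(\tau_t h)$ is automatic, so invariance under the dense subgroup $L^\star$ extends to all of $H$ without the explicit $\|\cdot\|_\infty$-bounds you set up.
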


\begin{remark}[Density formulae] The positive constant $D_S$ in the  lemma  can be understood as the density of the lattice $\cL$ in $G\times H$ and the above equation is sometimes called the density formula for $\vOmega(h)$.  Density formulae for model sets date back to the early works of Meyer and have been re-derived and extended using a variety of methods such as harmonic analysis, dynamical systems and almost periodicity. A discussion of the literature and a recent geometric proof can be found in \cite{PRS22}.
\end{remark}

The previous lemma can be used to infer generalized almost periodicity of $\vOmega(h)$ for Riemann integrable weight functions $h$:

\begin{proposition}[Generalized almost periodicity of $\vOmega(h)$]\label{prop:gapvO}
Let $S = (H,\cL)$ be a cut-and-project scheme over $G$ with associated map $\vOmega$. Then  $\vOmega(h)\in\GAP(G)$ for any $h\in R(H)$.
\end{proposition}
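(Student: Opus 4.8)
The plan is to reduce the Riemann-integrable case to the continuous case already handled in Proposition~\ref{omega-is-sap} and Lemma~\ref{computing-the-mean}, by exploiting the fact that $\vOmega$ is order-preserving on nonnegative functions and that $M\circ\vOmega$ is (up to the constant $D_S$) the Haar integral on $H$. Since $R(H)$ is spanned by nonnegative Riemann integrable functions and $\GAP(G)$ is a vector space, it suffices to treat $h\in R(H)$ with $h\geq 0$; in fact, by the definition of Riemann integrability it suffices to treat real-valued $h$, and then the positive and negative parts separately, so we may assume $0\leq h$ and $h$ vanishes outside a compact set.

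First I would fix $\eps>0$ and use Riemann integrability of $h$ to choose $g_1,g_2\in\Cc(H)$ with $g_1\leq h\leq g_2$ and $\int_H (g_2-g_1)\dd\theta_H \leq \eps/D_S$. Applying $\vOmega$, which is clearly monotone in the sense that $f_1\leq f_2$ pointwise implies $\vOmega(f_1)\leq\vOmega(f_2)$ as measures (both are point measures on $L$ with the stated coefficients), we get
\[
\vOmega(g_1)\ \leq\ \vOmega(h)\ \leq\ \vOmega(g_2)
\]
as measures on $G$. By Proposition~\ref{omega-is-sap}, $\mu_\eps:=\vOmega(g_1)$ and $\nu_\eps:=\vOmega(g_2)$ lie in $\SAP(G)$. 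By linearity of $M$ and Lemma~\ref{computing-the-mean},
\[
M(\nu_\eps-\mu_\eps)\ =\ M(\vOmega(g_2-g_1))\ =\ D_S\int_H (g_2-g_1)\dd\theta_H\ \leq\ \eps.
\]
This shows $\vOmega(h)$ is generalized-almost-periodic by the very definition, completing the case $h\geq0$; the general real case follows by writing $h=h_+-h_-$ with each summand handled as above (using the remark that a difference of g-a-p measures is g-a-p), and the complex case by taking real and imaginary parts.

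One small technical point to check: we need $g_1,g_2$, hence $\vOmega(g_1),\vOmega(g_2)$, and $\vOmega(h)$ to all be genuine translation bounded measures, which is guaranteed because $h$ and the approximants can be taken to vanish outside a common compact neighbourhood of $\supp(h)$ in $H$, so that $\vOmega$ applies as discussed before Proposition~\ref{omega-is-sap}; when $h\geq0$ with compact support one may even arrange $0\leq g_1$. I do not expect a serious obstacle here: the whole argument is a monotone sandwich combined with the density formula, and the only thing one must be slightly careful about is keeping all the functions inside the domain of $\vOmega$ (bounded, vanishing outside a fixed compact set) while approximating — but Riemann integrability delivers exactly such $g_1,g_2$.
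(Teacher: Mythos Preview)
Your proposal is correct and follows essentially the same approach as the paper: sandwich $\vOmega(h)$ between $\vOmega(g_1)$ and $\vOmega(g_2)$ for $g_1,g_2\in\Cc(H)$ coming from Riemann integrability, invoke Proposition~\ref{omega-is-sap} for strong almost periodicity, and use the density formula Lemma~\ref{computing-the-mean} to bound the mean of the difference. The paper works directly with real-valued $h$ rather than further reducing to nonnegative $h$, so your decomposition into positive and negative parts is an unnecessary (though harmless) detour.
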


\begin{proof}
Without loss of generality we assume that $h$ is real-valued.
Let $\varepsilon>0$ be arbitrary. Since $h:H\to \mathbb R$ is Riemann integrable, there exist $f, g \in \Cc(H)$ such that $f \leq h \leq g$ and $\int_{H}(g-f) \, \dd \theta_{H} \leq \varepsilon$.
Define $\mu=\vOmega(f)$  and  $ \nu= \vOmega(g)$.
Then, $\nu$ and $\mu$ belong to $\SAP (G)$ by Proposition \ref{omega-is-sap}. Moreover, by construction we have
$\mu\leq \varrho \leq \nu$. From Lemma~\ref{computing-the-mean}
we now find
\begin{displaymath}
  M(\nu-\mu) = D_{S} \cdot \int_H (g-f) \, {\rm d}\theta_{H}  \leq D_{S} \cdot \varepsilon \ .
\end{displaymath}
As $\varepsilon>0$ was arbitrary, this shows  that $\vOmega(h)$ is a g-a-p measure.
\end{proof}

\begin{remark} It is not hard to see that  the density formula Eqn.~\eqref{eq:dfvO}  continues to hold for the mean of the g-a-p measure $\vOmega(h)$ for Riemann integrable $h$. Moreover, the mean exists uniformly in translates.  This extends the particular case $h=1_W$ being the characteristic function of a regular window $W$. This case  is treated in \cite[Cor.~4.3]{Mey2} and motivated the definition of generalized almost periodicity.
\end{remark}

\section{The core of the argument}\label{sect-core}

In this section we present the core of our argument to characterize  g-a-p measures of Meyer set support.
The two main ingredients are  a certain cut-and-project scheme, which we call $\Bohr$ cut-and-project scheme, and a certain operator on measures, denoted $\cT$ below.

\begin{definition}[The $\Bohr$ cut-and-project scheme]\label{def:bc}
A cut-and-project scheme $(H,\cL)$ over $G$ is called {\rm universal}   if
there exists a map
$$
{}^\sharp: SAP(G)\to \Cb(H) \text{ with } f(x)=f^\sharp(x^\star)  \text{ for  all } x\in L \ .
$$
\end{definition}

\smallskip

\begin{remark}
A $\Bohr$  cut-and-project scheme satisfies a certain uniqueness property and a certain factorisation property. Moreover, the  function $f^\sharp$ belongs even to $SAP(H)$ for any $f\in SAP(G)$. These features are not relevant for the proof of our main result. Thus, we only prove  them later, together with various further characterizations in Section~\ref{sec-Bohr}.
\end{remark}

We note that any $\Bohr$ cut-and-project scheme has an injective star map,
since the Bohr almost periodic functions on $G$ separate the points of $G$. This property will be important below.

 For us the following feature of a $\Bohr$ cut-and-project scheme $(H,\cL)$ will be most relevant: For any $h\in \Cc (H)$ and any $f\in SAP (G)$ we have
$$f\cdot \vOmega(h) = \vOmega(f^\sharp \cdot h) \ .$$
Indeed, both sides are point measures with point masses on the points $x\in L$ given by
$f(x)\cdot h(x^\star) = f^\sharp (x^\star) \cdot h(x^\star)$. Whenever $(H,\cL)$ is a $\Bohr$ cut-and-project scheme and $h\in \Cc(H)$ and $\varphi \in \Cc(G)$ are given, we can then define the operator
$$\cT: \cM^\infty(G) \to \cM^\infty(G) \ ,  \qquad \cT (\mu) =(\varphi\ast \mu) \cdot \vOmega (h) \ .$$
Then,  for any strongly almost periodic measure $\mu$ the function  $\varphi\ast \mu$ is Bohr almost periodic and we obtain
$$\cT (\mu) = (\varphi\ast \mu) \cdot  \vOmega(h) = \vOmega ((\varphi\ast \mu)^\sharp \cdot h) \ .$$
Hence, $\cT$ maps  $\SAP(G)$ into $\vOmega (\Cc(H))$. Moreover,  for suitable $\varphi$ and $h$   the operator $\cT$ will fix all measures supported on a given Meyer set.  These properties of $\cT$ will be a key to our reasoning below.

\smallskip

The next proposition settles the existence of a $\Bohr$ cut-and-project scheme.  For a fixed cut-and-project scheme $S$ over $G$, we say that a subset of $G$ is \textit{associated to $S$} if it  is contained in a model set from $S$.  Clearly, a subset $\vL$ of $G$ is  associated to $S$ if and only if there exists an $h\in \Cc (H)$ with $\delta_\vL\leq \varOmega (h)$.
%It is easy to see that a Meyer set $\vL$ is associated to $S=(H,\cL)$ if and only if $\vL\subset L$ and $\vL^\star$ has compact closure in $H$. {\bf \color{red} Do we use the latter statement later?}
Denote by $\cM_S(G)\subset \cM^\infty(G)$ the collection of translation bounded measures whose support is associated to $S$.

We will show  that to any cut-and-project scheme there exists a $\Bohr$  cut-and-project scheme with the same associated Meyer sets. To ease the formulation we call two cut-and-project schemes over $G$ \textit{compatible} if they have the same associated sets. Equivalently, two cut-and-project schemes $S$ and $S'$ over $G$ are compatible if  $\cM_S(G)=\cM_{S'}(G)$ holds.

\begin{proposition}[Existence of a compatible  $\Bohr$ cut-and-project scheme]\label{prop:excBc}
To any cut-and-project scheme there exists a compatible $\Bohr$ cut-and-project scheme. In particular, for any Meyer set there exists a $\Bohr$ cut-and-project scheme to which it is associated.
\end{proposition}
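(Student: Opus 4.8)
The plan is to start from a given cut-and-project scheme $S = (H,\cL)$ over $G$ and enlarge the internal group $H$ to absorb the Bohr compactification of $G$. Concretely, I would set $H' = H \times G_{\mathsf b}$, where $G_{\mathsf b}$ is the Bohr compactification of $G$, with $i\colon G \to G_{\mathsf b}$ the canonical dense embedding. The candidate new lattice is $\cL' = \{(x, x^\star, i(x)) : x \in L\} \subset G \times H'$, i.e.\ the image of $\cL$ under the map $(x,y) \mapsto (x, y, i(x))$. One checks this is a discrete subgroup: the projection to $G$ is still injective (it is injective already for $\cL$), and discreteness is inherited since the first two coordinates already form a discrete set. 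Co-compactness of $\cL'$ in $G \times H' = G \times H \times G_{\mathsf b}$ follows because $\cL$ is co-compact in $G \times H$ and $i(G)$ projects densely into the compact group $G_{\mathsf b}$; more precisely a fundamental domain for $\cL'$ is obtained from one for $\cL$ together with the compact set $G_{\mathsf b}$. Density of $\pi^{H'}(\cL')$ in $H'$ requires density of $\{(x^\star, i(x)) : x \in L\}$ in $H \times G_{\mathsf b}$; this is the one genuinely nontrivial point, and I expect to get it from the density of $\pi^H(\cL)$ in $H$ together with a Baire-category or Weyl-type argument, or more cleanly by noting that the closure is a closed subgroup whose projections to both factors are everything, combined with minimality considerations — this is the main obstacle and deserves care (an alternative is to pass to the closed subgroup generated and argue it must be all of $H\times G_{\mathsf b}$ by looking at characters, using that a character trivial on $\cL'$ restricts to a character of $G$ that is a limit of characters coming from $\cL$).

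Granting that $S' = (H', \cL')$ is a cut-and-project scheme, I would next verify it is \emph{universal} in the sense of Definition~\ref{def:bc}. Given $f \in SAP(G)$, write $f = f_{\mathsf b} \circ i$ with $f_{\mathsf b} \in C(G_{\mathsf b})$, and define $f^\sharp \colon H' = H \times G_{\mathsf b} \to \CC$ by $f^\sharp(y, g) = f_{\mathsf b}(g)$, i.e.\ the pull-back of $f_{\mathsf b}$ along the second projection. This is continuous and bounded, so $f^\sharp \in \Cb(H')$, and for $x \in L' := \pi^{G}(\cL') = L$ we have $(x)^{\star'} = (x^\star, i(x))$, hence $f^\sharp((x)^{\star'}) = f_{\mathsf b}(i(x)) = f(x)$, as required. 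Thus $S'$ is a $\Bohr$ cut-and-project scheme.

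It remains to show $S'$ is compatible with $S$, i.e.\ $\cM_S(G) = \cM_{S'}(G)$, equivalently that $S$ and $S'$ have the same associated subsets. If $\vL$ is associated to $S$, there is $W \subset H$ relatively compact with $\vL \subset \oplam_S(W)$; then $W' = W \times G_{\mathsf b}$ is relatively compact in $H'$ (as $G_{\mathsf b}$ is compact) and $\oplam_{S'}(W') = \{x \in L : (x^\star, i(x)) \in W \times G_{\mathsf b}\} = \{x \in L : x^\star \in W\} = \oplam_S(W) \supset \vL$, so $\vL$ is associated to $S'$. Conversely, if $\vL \subset \oplam_{S'}(W')$ for some relatively compact $W' \subset H' = H \times G_{\mathsf b}$, let $W = \pi_H(W') \subset H$, which is relatively compact; then $x^\star \in W$ whenever $(x^\star, i(x)) \in W'$, so $\oplam_{S'}(W') \subset \oplam_S(W)$ and $\vL$ is associated to $S$. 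This proves $\cM_S(G) = \cM_{S'}(G)$. The final sentence of the proposition is then immediate: by Meyer's characterization (Theorem~\ref{thm:meychar}) any Meyer set is a subset of a model set from \emph{some} cut-and-project scheme $S$, hence is associated to $S$, hence associated to the compatible $\Bohr$ cut-and-project scheme $S'$ just constructed. \qed
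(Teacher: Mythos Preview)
Your overall strategy matches the paper's almost exactly: enlarge the internal space by the Bohr compactification, keep the same lattice (augmented by $i(x)$ in the new coordinate), and define $f^\sharp$ by pulling back $f_{\mathsf b}$ along the projection to $G_{\mathsf b}$. The compatibility argument and the appeal to Theorem~\ref{thm:meychar} are also the same.

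There is, however, a genuine gap at exactly the point you flag as ``the main obstacle'': with $H' = H \times G_{\mathsf b}$ the set $\{(x^\star, i(x)) : x \in L\}$ is \emph{not} dense in $H'$ in general, so $(H',\cL')$ is not a cut-and-project scheme. A concrete obstruction: take $G=\ZZ$, $H=\ZZ_2$ (the $2$-adics), $L=\ZZ$ with star map the canonical embedding $i_2:\ZZ\to\ZZ_2$. Any nontrivial character $\psi\in\widehat{\ZZ_2}$ lifts to a character $\psi'\in\widehat{G_{\mathsf b}}=\widehat{\ZZ}_d$ via the quotient $G_{\mathsf b}\to\ZZ_2$, and then $(y,g)\mapsto \psi(y)\overline{\psi'(g)}$ is a nontrivial character on $H\times G_{\mathsf b}$ that is identically $1$ on $\{(i_2(n), i(n)):n\in\ZZ\}$. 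Hence the image is not dense. Your suggested fixes do not help: a closed subgroup of a product whose projections to both factors are surjective need not be the whole product (think of the diagonal in $A\times A$), and the character argument actually produces the obstruction rather than ruling it out.

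The remedy is precisely what the paper does: replace $H'$ by the closure
\[
H_{\mathsf u} \;=\; \overline{\{(x^\star, i(x)) : x \in L\}} \;\subset\; H\times G_{\mathsf b},
\]
so that density of the star-image is automatic. Everything else in your argument---discreteness, co-compactness (a relatively compact fundamental domain for $\cL$ in $G\times H$, crossed with the compact $G_{\mathsf b}$, still covers after intersecting with $G\times H_{\mathsf u}$), the definition $f^\sharp=(1\otimes f_{\mathsf b})|_{H_{\mathsf u}}$, and the two-way compatibility check---goes through unchanged on $H_{\mathsf u}$.
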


\begin{proof}
Let $S=(H, \cL)$ be any cut-and-project scheme over $G$ with $L= \pi^G(\cL)$. We define a cut-and-project scheme $S_\mathsf{u}=(H_\mathsf{u}, \cL_\mathsf{u})$ using the Bohr compactification $G_\mathsf{b}$ of $G$ as follows:
\begin{displaymath}
H_\mathsf{u}=\overline{\{ (x^\star, i(x)) : x \in L \}} \subset H \times G_\mathsf{b} \ ,
\end{displaymath}
where $i: G \to G_\mathsf{b}$ is the canonical injection map.
We define $\cL_\mathsf{u} \subset G\times H_\mathsf{u}$ by $\cL_\mathsf{u} = \{ (x, x^\star, i(x) ) : x \in L \}$.  Then,
$S_\mathsf{u}=(H_\mathsf{u}, \cL_\mathsf{u})$ is a cut-and-project scheme over $G$ (compare
 \cite[Lem.~4]{LLRSS} as well).

\smallskip

\noindent \textit{$S$ and $S_\mathsf{u}$ are compatible:}
Let $\vL$ be any set associated to $S$. Then, from $\delta_\varLambda \leq \vOmega_S (h)$ for some $h\in \Cc(H)$ we easily find $\delta_\varLambda \leq \vOmega_{S_\mathsf{u}} (h\otimes 1)$.  Conversely, for any set $\vL$ associated to $S_\mathsf{u}$ there exists an $h_\mathsf{u} \in \Cc (H_\mathsf{u})$ with $\delta_{\vL}\leq \vOmega_{S_\mathsf{u}} (h_\mathsf{u})$. Then, there exists a compact set $K$ in $H$ such that $h_\mathsf{u}$ is supported in $(K\times G_\mathsf{b})\cap H_\mathsf{u}$. It follows that $\delta_{\vL}\leq \vOmega_S (h)$ for any $h\in \Cc (H)$ with $h= 1$ on $K$.

\smallskip

\noindent \textit{$S_\mathsf{u}$ is a $\Bohr$ cut-and-project scheme}:  Clearly, the map
$$()^\sharp : SAP(G) \longrightarrow \Cu(H_\mathsf{u}) \ , \quad f\mapsto (1\otimes f_\mathsf{b})|_{H_\mathsf{u}}$$
 has the desired properties. Here $|_{H_\mathsf{u}}$ denotes restriction to $H_\mathsf {u}$.

\smallskip

\noindent The ``in particular'' statement now follows with Theorem~\ref{thm:meychar}.
\end{proof}

\begin{remark}
The cut-and-project scheme appearing in the preceding proof  has already been used in the different context of deformed model sets \cite{LLRSS}.
\end{remark}

We next come to the  operator $\cT$  acting on measures. It produces a measure supported within a given fixed Meyer set via smoothing.

\begin{proposition}[The operator $\cT$]
\label{prop:sap-meyer} Let $\vL$ be a Meyer set in $G$. Take $\omega\in \SAP(G)$ and a Meyer set $\varGamma$ in $G$ be such that $\delta_\vL\leq \omega \leq \delta_{\varGamma}$. Fix one $\psi\in \Cc(G)$ with $0\leq \psi \leq 1$ and $\psi (0) =1$, such that any translate of $\supp(\psi)$ meets $\vG$ in at most one point.
Define the operator $\cT=\cT_{\psi,\omega}: \cM^\infty(G) \to \cM^\infty(G)$ via
\[
\cT(\mu)= (\psi \ast \mu)\cdot \omega \,.
\]
Then the following hold.
\begin{itemize}
\item[(a)] $\cT$ is linear and positive.
\item[(b)] For all $\mu \in \cM^\infty(G)$ we have $\supp(\cT(\mu)) \subset \varGamma$.
\item[(c)] $\cT(\mu)=\mu$ for all $\mu \in \cM^\infty(G)$ supported within $\vL$.
\item[(d)] $\cT(\SAP(G)) \subset \SAP(G)$.
\item[(e)] For any translation bounded non-negative measure $\mu$ and any compact $A$ the inequality $\cT(\mu)(A)\leq \mu (A-K)$ holds, where  $K=\supp(\psi)$.
\item[(f)] For  all non-negative $\mu \in \SAP(G)$ we have $M(\cT(\mu)) \leq M(\mu)$.
\end{itemize}
\end{proposition}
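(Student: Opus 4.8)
The plan is to handle the formal properties (a), (b), (d) quickly, to isolate one elementary consequence of the hypothesis on $\psi$, and then to derive (c), (e), (f) from it together with positivity and the van Hove property. Throughout I write $K=\supp(\psi)$. The observation I would rely on is that, since $0\in K$ (because $\psi(0)=1$), the assumption ``every translate of $K$ meets $\vG$ in at most one point'' is equivalent to $(\vG-\vG)\cap(K-K)=\{0\}$; moreover $\vL\subseteq\vG$ because $\delta_\vL\le\delta_\vG$, so the same intersection condition holds with either copy of $\vG$ replaced by $\vL$.

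For (a) I would first note that $\psi\ast\mu\in\Cb(G)$ whenever $\mu\in\cM^\infty(G)$, so that $\cT(\mu)=(\psi\ast\mu)\cdot\omega$ is again translation bounded and $\cT$ is well defined; linearity is immediate, and positivity follows from $\psi\ge 0$ and $\omega\ge\delta_\vL\ge 0$. For (b), using $\omega\ge 0$ I would estimate $|\cT(\mu)|=|\psi\ast\mu|\cdot\omega\le\|\psi\ast\mu\|_\infty\,\omega\le\|\psi\ast\mu\|_\infty\,\delta_\vG$, and a positive measure dominated by a multiple of $\delta_\vG$ is supported in $\vG$. For (d): if $\mu\in\SAP(G)$ then $\psi\ast\mu\in SAP(G)$, so (d) amounts to the fact that the product of a Bohr almost periodic function with a strongly almost periodic measure is strongly almost periodic; I would either cite this as a known module property, or prove it by reducing to characters via $\varphi\ast(\chi\cdot\nu)=\chi\cdot((\varphi\overline{\chi})\ast\nu)\in SAP(G)$, passing to trigonometric polynomials by linearity and to general $f\in SAP(G)$ by uniform approximation. (In the situation relevant to the applications, where $\omega=\vOmega_S(h)$ for a universal cut-and-project scheme $S$ and $h\in\Cc(H)$, one may instead invoke $\cT(\mu)=\vOmega_S((\psi\ast\mu)^\sharp h)\in\SAP(G)$ directly, as in the discussion preceding the proposition.)

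The heart of the argument is (c). I would take $\mu\ge 0$ supported in $\vL$, write $\mu=\sum_{x\in\vL}c_x\delta_x$ with bounded $c_x\ge 0$, and show that $(\psi\ast\mu)(y)=c_y$ for $y\in\vL$ and $(\psi\ast\mu)(z)=0$ for $z\in\vG\setminus\vL$: a summand $c_x\psi(y-x)$ is nonzero only if $y-x\in K$, any two such indices differ by an element of $(\vG-\vG)\cap(K-K)=\{0\}$, so at most one index contributes, and $x=y$ does since $\psi(0)=1$; for $z\in\vG\setminus\vL$ a contributing index $x\in\vL$ would give $z-x\in(\vG-\vG)\cap(K-K)=\{0\}$, hence $z=x\in\vL$, a contradiction. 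Consequently $(\psi\ast\mu)\cdot\delta_\vL=(\psi\ast\mu)\cdot\delta_\vG=\mu$, and since $\psi\ast\mu\ge 0$ and $\delta_\vL\le\omega\le\delta_\vG$ the squeeze $\mu=(\psi\ast\mu)\cdot\delta_\vL\le\cT(\mu)\le(\psi\ast\mu)\cdot\delta_\vG=\mu$ gives $\cT(\mu)=\mu$; the general complex $\mu$ supported in $\vL$ follows by splitting into positive measures supported in $\vL$ and using linearity. For (e), with $\mu\ge 0$ translation bounded, $\psi\ast\mu\ge 0$ gives $\cT(\mu)\le(\psi\ast\mu)\cdot\delta_\vG$, so for compact $A$ one has $\cT(\mu)(A)\le\sum_{y\in\vG\cap A}(\psi\ast\mu)(y)=\int_G\big(\sum_{y\in\vG\cap A}\psi(y-t)\big)\dd\mu(t)$, a finite sum inside; for each $t$ a nonzero summand forces $y\in(t+K)\cap\vG$, which has at most one element, so the inner sum is $\le 1$ and vanishes unless $t\in A-K$, whence $\cT(\mu)(A)\le\mu(A-K)$. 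Finally, for (f) let $\mu\in\SAP(G)$ be non-negative; then $\cT(\mu)\in\SAP(G)$ by (d), so Proposition~\ref{prop:meanav} computes both means along a van Hove sequence $(A_n)$; since $0\in K$ we have $A_n\subseteq A_n-K$ and $(A_n-K)\setminus A_n\subseteq\partial^{-K}A_n$, so by (e) $\cT(\mu)(A_n)\le\mu(A_n-K)\le\mu(A_n)+\mu(\partial^{-K}A_n)$, and dividing by $|A_n|$ and letting $n\to\infty$ the boundary term vanishes by the van Hove property, yielding $M(\cT(\mu))\le M(\mu)$.

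I expect the main obstacle to be (c): one must track the three sets $\vL\subseteq\vG$ and $K=\supp(\psi)$ simultaneously and verify that $\psi\ast\mu$ reproduces the point masses of $\mu$ on $\vL$ while vanishing on the remaining points of $\vG$; once this bookkeeping is in place, (e) and (f) are short consequences of positivity and the van Hove estimate. The only other point requiring care is the module property behind (d), which I would either cite or establish by the character reduction sketched above.
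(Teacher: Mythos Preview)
Your proposal is correct and follows essentially the same approach as the paper. The only cosmetic differences are: in (c) the paper computes $\cT(\mu)(\{x\})=\omega(\{x\})\mu(\{x\})$ directly for all $x\in\vG$ and then uses $\delta_\vL\le\omega$ to conclude, rather than your case split plus squeeze; in (d) the paper simply cites \cite[Thm.~6.1]{ARMA} for the module property you sketch; and in (e)--(f) the paper's estimates $\sum_{x\in\vG}\psi(x-y)\le 1$ and $A-K\subset A\cup\partial^{-K}A$ match yours verbatim.
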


\begin{remark} It is not hard to see that to any Meyer set $\vL$ a measure $\omega$ with the claimed properties exists.
%(compare \cite[Prop.~3.2]{NS19}).
An explicit construction will be given in the proof of our next lemma.
\end{remark}

\begin{proof}

\noindent (a) and (b) are obvious.

\noindent (c) Consider arbitrary $\mu \in \cM^\infty(G)$ supported within $\vL$. Note first $\supp(\cT(\mu)) \subset \vG$ as $\supp(\omega) \subset \vG$.
But for any $x \in \vG$ we have
\begin{displaymath}
\left((\psi*\mu) \cdot \omega\right)(\{ x \})= (\psi*\mu)(x)\cdot  \omega(\{ x \})= \omega(\{ x \}) \sum_{y \in \vL} \psi(x-y) \mu(\{y\})  \,.
\end{displaymath}
For $y \in \vG$ we have $\psi(x-y)=0$ if $y\ne x$ by choice of $\psi$, which shows
\begin{displaymath}
\left((\psi*\mu) \cdot \omega\right)(\{ x \})
= \omega(\{ x \}) \psi(0) \mu(\{x\}) = \omega(\{ x \}) \mu(\{x\}) = \mu(\{x\}) \ .
\end{displaymath}
Here the last equality holds as $\mu(\{x \}) \neq 0$ implies $x \in \vL$ and $\omega(\{x\})=1$. We thus have $\cT(\mu)=\mu$.

\noindent (d) We have $\omega \in \SAP(G)$ and $\psi*\mu \in SAP(G)$ as $\mu\in\SAP(G)$.  Then $\cT(\mu) \in \SAP(G)$ by \cite[Thm.~6.1]{ARMA}.

\noindent (e)
For any compact $A\subset G$, we use non-negativity of $\mu,\psi, \omega$ and the inequality $\omega \leq \delta_{\Gamma}$ in order to estimate
\begin{align*}
\cT(\mu)(A) & =\sum_{x \in \vG  \cap A} (\psi*\mu)(x) \cdot \omega(\{ x \}) \leq
\sum_{x \in \vG \cap A} (\psi*\mu)(x) \\
&= \sum_{x \in \vG \cap A} \int_{G} \psi(x-y) \dd \mu(y)
= \int_{G} \sum_{x \in \vG \cap A}  \psi(x-y)  \dd \mu(y) \ .
\end{align*}
Note that $x \in A$ and $ \psi(x-y)  \neq 0$ imply $y \in A-K$. We thus have
\begin{align*}
\cT(\mu)(A) & =\int_{G} \sum_{x \in \vG \cap A}  \psi(x-y)  \dd \mu(y) =  \int_{G}   \sum_{x \in \vG \cap A} 1_{A-K}(y) \psi(x-y) \dd \mu(y) \\
&= \int_{G} 1_{A-K}(y)  \sum_{x \in \vG \cap A} \psi(x-y) \dd \mu(y)\leq \int_{G} 1_{A-K}(y) \left( \sum_{x \in \vG }  \psi(x-y) \right) \dd \mu(y) \\
&\le  \int_{G} 1_{A-K}(y)  \dd \mu(y)  = \mu(A-K) \ .
\end{align*}

\noindent (f)
Let  $(A_n)$ be a van Hove sequence in $G$. Due to the elementary estimate $A-K\subset A + \partial^{-K}A$ we then have for all $\mu \in \SAP(G)$
\begin{align*}
M(\cT(\mu))&= \lim_{n} \frac{1}{|A_n|} \cT(\mu)(A_n) \leq \limsup_{n} \frac{1}{|A_n|}  \mu (A_n-K) \\
&\leq \limsup_{n}\frac{1}{|A_n|} \left( \mu (A_n)+\mu(\partial^{-K}A_n)\right) =\lim_{n}\frac{1}{|A_n|}   \mu (A_n) = M(\mu)  \ .
\end{align*}
This finishes the proof.
\end{proof}

Based on our study of $\cT$ and the concept of $\Bohr$ cut-and-project scheme, we can  now state and prove the  following approximation lemma. It  will be a key argument in the proof of Theorem~\ref{thm:char-weighted-model}.

\begin{lemma}[Approximation in a compatible $\Bohr$ cut-and-project scheme]\label{approx}

Let $\vL$ be a Meyer set in $G$ and let $S=(H,\cL)$ be a $\Bohr$ cut-and-project scheme over $G$ such that $\vL$ is associated to $S$. Then, for any $\varrho\in \cM^\infty(G)$ supported within $\vL$ the following hold.
\begin{itemize}
\item[(a)] Assume  $\varrho \leq \nu$ for some  $\nu \in \SAP(G)$. Then  there exists $g \in \Cc(H)$  such that
$\varrho \leq \vOmega(g)$ and $M(\vOmega(g))\leq M(\nu)$.
\item[(b)] Assume  $\varrho\ge \mu$ for some $\mu\in \SAP(G)$. Then there exists $f \in \Cc(H)$  such that $\varrho\ge \vOmega(f)$ and $M(\vOmega(f)) \geq M(\mu)$.
\end{itemize}
\end{lemma}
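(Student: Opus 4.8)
The plan is to prove (a) and (b) in parallel, using the operator $\cT$ from Proposition~\ref{prop:sap-meyer} as the bridge between strongly almost periodic measures and measures of the form $\vOmega(h)$ with $h\in \Cc(H)$. Fix a Meyer set $\varGamma$ and a measure $\omega\in\SAP(G)$ with $\delta_\vL \le \omega \le \delta_\varGamma$; since $\vL$ is associated to the $\Bohr$ cut-and-project scheme $S$, one may in fact choose $\omega = \vOmega_S(h_0)$ for a suitable $h_0\in\Cc(H)$ with $h_0\ge 0$ and $h_0 = 1$ on $\star(\varGamma)$ --- this is where the remark following Proposition~\ref{prop:sap-meyer} (''an explicit construction will be given in the proof of our next lemma'') is cashed in, and it is the step I expect to need the most care. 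Then pick $\psi\in\Cc(G)$ with $0\le\psi\le 1$, $\psi(0)=1$, and $\supp(\psi)$ so small that any translate of $\supp(\psi)$ meets $\varGamma$ in at most one point, and form $\cT = \cT_{\psi,\omega}$.

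For part (a): since $\varrho$ is supported within $\vL$, Proposition~\ref{prop:sap-meyer}(c) gives $\varrho = \cT(\varrho)$. By positivity of $\cT$ (part (a) of that proposition) and $\varrho\le\nu$ we get
\[
\varrho = \cT(\varrho) \le \cT(\nu).
\]
Because $\nu\in\SAP(G)$, the discussion preceding Proposition~\ref{prop:excBc} shows $\cT(\nu) = (\psi\ast\nu)\cdot\vOmega(h_0) = \vOmega\bigl((\psi\ast\nu)^\sharp\cdot h_0\bigr)$, and since $(\psi\ast\nu)^\sharp\in\Cb(H)$ while $h_0\in\Cc(H)$, the product $g := (\psi\ast\nu)^\sharp\cdot h_0$ lies in $\Cc(H)$. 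Thus $\varrho\le\vOmega(g)$ with $g\in\Cc(H)$. For the mean estimate, $\cT(\nu)=\vOmega(g)\in\SAP(G)$ by Proposition~\ref{prop:sap-meyer}(d), and $\nu$ is a sum of a non-negative element of $\SAP(G)$ plus possibly a correction; applying Proposition~\ref{prop:sap-meyer}(f) to the non-negative part (or, more cleanly, noting $\nu\ge\varrho\ge 0$ when $\varrho\ge 0$, and otherwise splitting off a bounded strongly almost periodic $\sigma\ge 0$ with $\nu+\sigma\ge 0$ and using linearity of $\cT$ and $M$) yields $M(\vOmega(g)) = M(\cT(\nu)) \le M(\nu)$.

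Part (b) is symmetric: $\varrho = \cT(\varrho)\ge\cT(\mu) = \vOmega(f)$ with $f := (\psi\ast\mu)^\sharp\cdot h_0\in\Cc(H)$, and $M(\vOmega(f)) = M(\cT(\mu))\ge M(\mu)$ by the analogue of (f) for the reversed inequality (equivalently, apply (f) to $-\mu$ shifted by a non-negative almost periodic majorant and use linearity, since $\cT$ need not preserve non-negativity of $\mu$ but does preserve the relevant mean inequality after such a decomposition). The one genuine obstacle, as noted, is producing $\omega$ in the form $\vOmega_S(h_0)$ with the required sandwiching $\delta_\vL\le\omega\le\delta_\varGamma$: one takes a model set $\varGamma = \oplam_S(K)$ containing $\vL$ (possible since $\vL$ is associated to $S$, via Theorem~\ref{thm:meychar} and the definition of ''associated''), chooses $h_0\in\Cc(H)$ with $1_K\le h_0$ and $h_0 = 1$ on $K$ shrunk slightly --- more precisely $1_{\star(\varGamma)}\le h_0\le 1_{K'}$ for a slightly larger relatively compact $K'$ --- so that $\omega=\vOmega_S(h_0)$ satisfies $\delta_\vL\le\delta_\varGamma\le\omega$ pointwise on atoms, which is not quite $\omega\le\delta_\varGamma$; the correct choice is rather $h_0\le 1$ with $h_0=1$ on $\star(\varGamma)$, giving $\omega(\{x\}) = h_0(x^\star)\le 1$ for all $x\in L$ and $\omega(\{x\}) = 1$ for $x\in\varGamma$, hence $\delta_\vL\le\delta_\varGamma\le\omega$ fails but $\delta_\vL\le\omega\le\delta_{\oplam_S(\supp h_0)}$ holds, and one renames the enlarged model set as the Meyer set $\varGamma$. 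This reconciliation of the two roles of $\varGamma$ (support control vs.\ the upper bound in Proposition~\ref{prop:sap-meyer}) is the delicate bookkeeping point; everything else is a direct application of the already-established properties of $\cT$ and of $\Bohr$ cut-and-project schemes.
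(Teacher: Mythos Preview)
Your overall architecture matches the paper's: construct $\omega=\vOmega(h_0)$ with $\delta_\vL\le\omega\le\delta_\varGamma$ (via Urysohn between a compact window $\overline W$ for $\vL$ and a slightly larger open $V$, with $\varGamma=\oplam(V)$), form $\cT=\cT_{\psi,\omega}$, and read off $\varrho=\cT(\varrho)\le\cT(\nu)=\vOmega(g)$ with $g=(\psi\ast\nu)^\sharp\cdot h_0\in\Cc(H)$ by universality. That part is correct, and your bookkeeping worry about $\varGamma$ is resolved exactly as you suspect.

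The genuine gap is the mean estimate. Proposition~\ref{prop:sap-meyer}(f) requires a \emph{non-negative} argument, and your linearity fix does not close it: writing $\nu=(\nu+\sigma)-\sigma$ with $\nu+\sigma\ge0$ and $\sigma\ge0$ in $\SAP(G)$ gives
\[
M(\cT(\nu))=M(\cT(\nu+\sigma))-M(\cT(\sigma))\le M(\nu)+\bigl(M(\sigma)-M(\cT(\sigma))\bigr),
\]
and the bracket is non-negative by (f) itself, so you only obtain $M(\cT(\nu))\le M(\nu)+\text{(something }\ge 0\text{)}$, which is vacuous. In fact $M(\cT(\nu))\le M(\nu)$ is false for general $\nu\in\SAP(G)$: take $\nu\le 0$ and the inequality reverses. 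Choosing $\sigma$ with $\cT(\sigma)=\sigma$ would force $\sigma$ to be supported in $\vL$, and a pure point measure on a uniformly discrete set cannot dominate the negative part of $\nu$ off $\vL$.

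The paper sidesteps this by applying part (e), not (f), to the non-negative translation-bounded (but not almost periodic) measure $\nu-\varrho$. From $\vOmega(g)-\varrho=\cT(\nu-\varrho)$ one gets $(\vOmega(g)-\varrho)(A_n)\le(\nu-\varrho)(A_n-K)$, hence
\[
\frac{\vOmega(g)(A_n)}{|A_n|}\le\frac{\nu(A_n-K)}{|A_n|}+\frac{\varrho(A_n)-\varrho(A_n-K)}{|A_n|},
\]
and the van Hove property makes the $\varrho$-term vanish in the limit, yielding $M(\vOmega(g))\le M(\nu)$ without ever needing $\nu\ge 0$. Part (b) then follows simply by replacing $(\varrho,\mu)$ with $(-\varrho,-\mu)$ and invoking (a); your separate argument for (b) inherits the same gap as (a).
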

\begin{proof}
(a) As $\vL$ is a Meyer set, we find $W\subset H$ with non-empty interior and compact closure such that $\delta_\vL\leq \vOmega (1_W)$. Let $V\subset H$ be an open set with compact closure such that $\overline W\subset V$. By Urysohn's lemma \cite[Thm.~2.12]{Rud}, we find $h\in \Cc(H)$ such that $1_{\overline{W}}\le h \le 1_V$. Then $\omega=\vOmega(h)\in \SAP(G)$ satisfies $\delta_\vL\le \omega\le \delta_\varGamma$ for the Meyer set $\varGamma=\oplam(V)$.
Choose $\psi\in \Cc (G)$ with $\psi\geq 0$ and $\psi (0) =1$ and such that any translate of the support of $\psi$ meets $\varGamma$ in at most one point.
Then the operator defined by $\cT (\mu)= (\psi \ast \mu) \cdot \omega$ for $\mu\in\cM^\infty(G)$
satisfies the statements of Proposition \ref{prop:sap-meyer}.
In particular $\cT (\varrho) = \varrho$
holds, as $\varrho$ is supported within $\vL$. As $\nu$ is strongly almost periodic, the function $f=\psi \ast \nu$ is Bohr almost periodic. As $S$ is a universal cut-and-project scheme, we find
$$\cT (\nu) = (\psi \ast \nu) \cdot \vOmega (h) = f \cdot\vOmega (h)= \vOmega (f^\sharp \cdot h)= \vOmega(g),$$
where  $g=f^\sharp \cdot  h\in \Cc (H)$. Let now $(A_n)$ be any van Hove sequence.
With $K = \supp(\psi)$ we then obtain by Proposition~\ref{prop:sap-meyer} (e) for each $n$ the estimate
\begin{align*}
\left( \vOmega(g)-\varrho \right) (A_n) &= \left(\cT (\nu) -\cT(\varrho)\right)(A_n)=  \cT (\nu - \varrho)(A_n) \\
&\leq \nu(A_n-K) - \varrho(A_n-K) \ .
\end{align*}
It follows that
\[
\frac{1}{|A_n|}\vOmega(g)(A_n) \leq \frac{1}{|A_n|}\nu(A_n-K)+\frac{1}{|A_n|} \left( \varrho(A_n) -\varrho(A_n -K) \right)  \,.
\]
The van Hove condition then gives $M(\vOmega(g))\leq M(\nu)$. Moreover, the properties of $\cT$ give
\begin{displaymath}
\varrho=\cT(\varrho)\le \cT(\nu)=\vOmega(g) \ .
\end{displaymath}
\noindent (b) This follows from (a) by passing from $\varrho$ to $-\varrho$ and from $\mu$ to $-\mu$.
\end{proof}

To give  an impression of the power of the approximation lemma, we use it to derive the following result on representing strongly almost periodic measures via $\vOmega$.  This result is of independent interest, but is not needed in the proof of Theorem~\ref{thm:char-weighted-model}.  However it provides  some perspective as  the latter theorem can be seen as  an extension of it, with  $\Cc(H_\mathsf{u})$  replaced by  Riemann integrable functions $\cR(H_\mathsf{u})$, and with strongly almost periodic measures replaced by generalized almost periodic measures.

\begin{corollary}[Representing $\SAP(G)$ via $\Bohr$ cut-and-project schemes] \label{cor-rep}
Let $S$ be a cut-and-project scheme over $G$ and let $S_\mathsf{u}=(H_\mathsf{u},\cL_\mathsf{u})$ be any compatible $\Bohr$ cut-and-project scheme.
Then, the map
$$\Cc (H_{\mathsf{u}})\longrightarrow  \SAP (G) \cap \cM_S (G),  \quad h\mapsto \vOmega_{S_\mathsf{u}}(h),$$
is a bijection. \qed
\end{corollary}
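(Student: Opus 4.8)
The plan is to check that $h\mapsto\vOmega_{S_\mathsf{u}}(h)$ is a well-defined map into $\SAP(G)\cap\cM_S(G)$, that it is injective, and that it is surjective, the last point carrying all the weight. Well-definedness: $\vOmega_{S_\mathsf{u}}(h)\in\SAP(G)$ by Proposition~\ref{omega-is-sap}, and $\supp\vOmega_{S_\mathsf{u}}(h)\subset\oplam_{S_\mathsf{u}}(\supp h)$ sits inside the model set $\oplam_{S_\mathsf{u}}(V)$ of $S_\mathsf{u}$ for any open relatively compact $V\supseteq\supp h$; by compatibility of $S$ and $S_\mathsf{u}$ this set is then associated to $S$, so $\vOmega_{S_\mathsf{u}}(h)\in\cM_S(G)$. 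Injectivity: writing $\star$ for the star map of $S_\mathsf{u}$, the point mass of $\vOmega_{S_\mathsf{u}}(h)$ at $x\in L=\pi^G(\cL_\mathsf{u})$ equals $h(x^\star)$; since $\{x^\star:x\in L\}=\pi^{H_\mathsf{u}}(\cL_\mathsf{u})$ is dense in $H_\mathsf{u}$ and $h$ is continuous, $h$ is recovered from $\vOmega_{S_\mathsf{u}}(h)$.

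For surjectivity I would argue as follows. Let $\mu\in\SAP(G)\cap\cM_S(G)$; splitting into real and imaginary parts and using linearity of $\vOmega_{S_\mathsf{u}}$, we may assume $\mu$ is real-valued. Since $\mu\in\cM_S(G)$ and $S$, $S_\mathsf{u}$ are compatible, $\supp\mu$ is contained in a model set $\vL$ of $S_\mathsf{u}$, and by Theorem~\ref{thm:meychar} such a $\vL$ is a Meyer set associated to $S_\mathsf{u}$. Hence $\mu$ is supported within $\vL$ and Lemma~\ref{approx} is applicable with $\varrho=\mu$. Part~(a), applied with $\nu=\mu\in\SAP(G)$, yields $g\in\Cc(H_\mathsf{u})$ with $\mu\le\vOmega_{S_\mathsf{u}}(g)$ and $M(\vOmega_{S_\mathsf{u}}(g))\le M(\mu)$; part~(b), applied with lower bound $\mu\in\SAP(G)$, yields $f\in\Cc(H_\mathsf{u})$ with $\vOmega_{S_\mathsf{u}}(f)\le\mu$ and $M(\vOmega_{S_\mathsf{u}}(f))\ge M(\mu)$.

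It remains to collapse the two-sided estimate $\vOmega_{S_\mathsf{u}}(f)\le\mu\le\vOmega_{S_\mathsf{u}}(g)$ to an equality. Positivity of the mean gives $M(\vOmega_{S_\mathsf{u}}(f))\le M(\vOmega_{S_\mathsf{u}}(g))$, which combined with the two mean inequalities above forces $M(\vOmega_{S_\mathsf{u}}(f))=M(\mu)=M(\vOmega_{S_\mathsf{u}}(g))$. Thus the non-negative measure $\vOmega_{S_\mathsf{u}}(g-f)=\vOmega_{S_\mathsf{u}}(g)-\vOmega_{S_\mathsf{u}}(f)$ has vanishing mean. Its point mass $(g-f)(x^\star)$ at each $x\in L$ is non-negative, so $g\ge f$ on $H_\mathsf{u}$ by density of $\{x^\star:x\in L\}$ and continuity; and by the density formula of Lemma~\ref{computing-the-mean}, $0=M(\vOmega_{S_\mathsf{u}}(g-f))=D_{S_\mathsf{u}}\int_{H_\mathsf{u}}(g-f)\dd\theta_{H_\mathsf{u}}$ with $D_{S_\mathsf{u}}>0$. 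A non-negative continuous function integrating to zero against Haar measure vanishes identically, so $g=f$, whence $\vOmega_{S_\mathsf{u}}(f)=\vOmega_{S_\mathsf{u}}(g)$ and the sandwich collapses to $\mu=\vOmega_{S_\mathsf{u}}(f)$. This proves surjectivity.

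I expect the main obstacle to be precisely this last collapse. A two-sided almost periodic approximation with equal means is, a priori, much weaker than an equality: ``equal means'' only controls the gap in density, not pointwise. The point is that here the gap is $\vOmega_{S_\mathsf{u}}$ of a genuine continuous function, so the density formula upgrades ``zero mean'' to ``zero integral'', and together with the pointwise inequality $g\ge f$ extracted from density of the projected lattice, this does force $g=f$. The remaining steps --- well-definedness, injectivity, and the verification that $\mu$ lives within a Meyer set associated to $S_\mathsf{u}$ so that Lemma~\ref{approx} applies --- are routine.
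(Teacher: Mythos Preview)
Your proof is correct and follows essentially the same approach as the paper: both apply the approximation lemma (Lemma~\ref{approx}) with $\mu=\varrho=\nu$ to sandwich the given strongly almost periodic measure between $\vOmega_{S_\mathsf{u}}(f)$ and $\vOmega_{S_\mathsf{u}}(g)$, then use monotonicity of the mean together with the density formula (Lemma~\ref{computing-the-mean}) and density of $L^\star$ to force $f=g$. The only cosmetic difference is that the paper first reduces to $S=S_\mathsf{u}$ via compatibility to lighten notation, whereas you carry both schemes throughout; your additional explicit check of well-definedness is fine but not strictly needed since it is implicit in the earlier results.
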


\begin{remark}[Characterization of universal cut-and-project scheme]
Note that the above corollary applies in particular to $S=S_{\mathsf{u}}$, as any universal cut-and-project scheme is compatible with itself. In fact, the assertion of the corollary for $S=S_{\mathsf{u}}$ characterises the universal cut-and-project scheme compatible with $S$ up to isomorphism. This follows from Theorem~\ref{thm:char-bu} below, in conjunction with Proposition~\ref{prop:uni}.
\end{remark}

\begin{proof}[Proof of Corollary~\ref{cor-rep}]
%Note that a universal cut-and-project scheme $S_{\mathsf{u}}$ compatible with $S$ indeed exists by Proposition~\ref{prop:excBc}.
As $\cM_S(G)=\cM_{S_\mathsf{u}}(G)$, we assume without loss of generality $S=S_{\mathsf{u}}$ in the following.

The map $\vOmega$ is clearly injective (as the range of the star map is dense in $H$). We now show that it is surjective. Let $\varrho$ be an almost periodic measure supported on the Meyer set $\vL$, which is associated to $S$. Assume without loss of generality that $\varrho$ is real. Then, from the approximation lemma with $\mu = \varrho = \nu $ we infer that there exist $f,g\in \Cc (H)$ satisfying $\vOmega(f)\leq \varrho\leq \vOmega(g)$ with $M(\vOmega(g))\leq M(\varrho)$ and $M(\varrho)\leq M(\vOmega(f))$. Using monotonicity of the mean, we infer from these inequalities  $M(\vOmega(f))=M(\vOmega(g))$.
Observe also that the first inequality implies $f\leq g$ by continuity, as the range of the star map is dense in $H$.
Now invoking  Lemma \ref{computing-the-mean} we find
$$\int_{H}(g-f) \dd\theta_{H} = 0 \ .$$
From $f\leq g$ and continuity of $f$ and $g$  we conclude $f =g$, and $\vOmega(g)=\varrho$ follows.
\end{proof}

\section{G-a-p measures supported within Meyer sets}\label{sect-main}

Here we characterize regular model sets as almost periodic patterns. This characterization will be a  consequence of a more general result on representing  g-a-p measures  whose support is contained in some Meyer set. Recall that $\cM_S(G)$ denotes the collection of translation bounded measures whose support is associated to the cut-and-project scheme $S$ over $G$.

\begin{theorem}[Representing $\GAP(G)$ via universal cut-and-project schemes]\label{thm:char-weighted-model}
Let $S$ be a cut-and-project scheme over $G$ and let  $S_\mathsf{u}=(H_\mathsf{u},\cL_\mathsf{u})$ be a compatible $\Bohr$ cut-and-project scheme. Then, the map
$$ \cR(H_{\mathsf{u}})\longrightarrow \GAP(G)\cap \cM_S (G), \quad
h\mapsto \vOmega_{S_\mathsf{u}}(h),$$
is onto.
\end{theorem}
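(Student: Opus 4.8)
The plan is to make two harmless reductions and then combine the approximation lemma (Lemma~\ref{approx}) with the density formula (Lemma~\ref{computing-the-mean}) and a sandwiching/interpolation step. Since $S$ and $S_\mathsf{u}$ are compatible we have $\cM_S(G)=\cM_{S_\mathsf{u}}(G)$, so I would replace $S$ by $S_\mathsf{u}$ and assume from the outset that $S=(H,\cL)$ is itself a universal cut-and-project scheme; in particular its star map is injective on $L$, and Lemma~\ref{approx} applies to it. Since $\vOmega$ is linear and both $\cR(H)$ and $\GAP(G)$ are stable under taking real and imaginary parts, it then suffices to treat every \emph{real} $\varrho\in\GAP(G)\cap\cM_S(G)$. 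For such $\varrho$, its support is contained in a model set $\vL$ from $S$, which is a Meyer set associated to $S$; being carried by the uniformly discrete set $\vL$, the measure $\varrho$ is the pure point measure $\varrho=\sum_{x\in L}c_x\delta_x$ with $c_x:=\varrho(\{x\})$ (and $c_x=0$ for $x\notin\vL$), and by injectivity of the star map the assignment $h(x^\star)=c_x$ is consistent. Any bounded $h\colon H\to\RR$ vanishing off a compact set with $h(x^\star)=c_x$ for all $x\in L$ satisfies $\vOmega(h)=\varrho$, so the task is to produce such an $h$ inside $\cR(H)$.

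To build approximants, for each $n\in\NN$ I would invoke generalized almost periodicity to pick $\mu_n,\nu_n\in\SAP(G)$ with $\mu_n\le\varrho\le\nu_n$ and $M(\nu_n-\mu_n)<1/n$, and then apply Lemma~\ref{approx}(a) to $\varrho\le\nu_n$ and Lemma~\ref{approx}(b) to $\varrho\ge\mu_n$ to obtain $f_n,g_n\in\Cc(H)$ with
\[
\vOmega(f_n)\le\varrho\le\vOmega(g_n),\qquad M(\vOmega(f_n))\ge M(\mu_n),\qquad M(\vOmega(g_n))\le M(\nu_n).
\]
Then $M(\vOmega(g_n-f_n))=M(\vOmega(g_n))-M(\vOmega(f_n))\le M(\nu_n-\mu_n)<1/n$, and Lemma~\ref{computing-the-mean} converts this into $D_S\int_H(g_n-f_n)\dd\theta_H<1/n$. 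Reading off point masses at each atom $x\in L$ in $\vOmega(f_n)\le\varrho\le\vOmega(g_m)$ gives $f_n(x^\star)\le c_x\le g_m(x^\star)$ for all $n,m$; since $\{x^\star:x\in L\}$ is dense in $H$ and the $f_n,g_m$ are continuous, this upgrades to $f_n\le g_m$ on all of $H$, for all $n,m$. In particular $g_n\ge f_n$, so the integrals above are nonnegative and tend to $0$.

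For the interpolation I would set $\bar h:=\inf_n g_n$ pointwise on $H$ (finite everywhere, since $f_1\le\bar h\le g_1$) and define $h\colon H\to\RR$ by $h(x^\star)=c_x$ for $x\in L$ and $h(y)=\bar h(y)$ for $y\notin\{x^\star:x\in L\}$. Then $f_n\le h\le g_n$ holds on all of $H$ for \emph{every} $n$: at the points $x^\star$ this is the atom comparison $f_n(x^\star)\le c_x\le g_n(x^\star)$, while off $\{x^\star:x\in L\}$ one has $f_n\le\inf_m g_m=\bar h=h\le g_n$ using $f_n\le g_m$ for all $m$. Since $f_n,g_n\in\Cc(H)$ and $\int_H(g_n-f_n)\dd\theta_H\to0$, this is exactly the definition of $h\in\cR(H)$; moreover $h$ is bounded and vanishes off a fixed compact set (being squeezed between $f_1$ and $g_1$), so $\vOmega(h)$ is a well-defined translation bounded measure with $\vOmega(h)=\sum_{x\in L}h(x^\star)\delta_x=\sum_{x\in L}c_x\delta_x=\varrho$. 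Finally, the complex case reduces to the real one by applying this to $\re\varrho$ and $\im\varrho$, whose supports again lie in $\vL$, and adding the resulting weights.

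I expect essentially all of the difficulty to be carried by Lemma~\ref{approx}, which is already established. Granting it, the one point requiring care is that the interpolant must be trapped between $f_n$ and $g_n$ \emph{for every $n$ at once}, not merely in the limit, in order to be Riemann integrable — and this is precisely what the simultaneous sandwich $f_n\le\varrho\le g_m$ provides, once $h$ is taken to be the prescribed atom values on the dense set $\{x^\star:x\in L\}$ together with the upper envelope $\bar h$ elsewhere. A minor preliminary point is that membership of $\varrho$ in $\cM_S(G)$ already forces $\varrho$ to be a pure point measure carried by $L$, which is what makes the sought identity $\vOmega(h)=\varrho$ meaningful in the first place.
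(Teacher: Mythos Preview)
Your proposal is correct and follows essentially the same route as the paper's own proof: reduce to $S=S_\mathsf{u}$ and to real $\varrho$, feed the strongly-almost-periodic sandwiches $\mu_n\le\varrho\le\nu_n$ into the approximation lemma to obtain $f_n,g_n\in\Cc(H)$, upgrade $f_n\le g_m$ from $L^\star$ to all of $H$ by density and continuity, and then interpolate to a Riemann integrable $h$ with $\vOmega(h)=\varrho$. The only cosmetic difference is that the paper defines the interpolant off $L^\star$ as $\sup_n f_n$ rather than your $\inf_n g_n$; either choice is trapped between every $f_n$ and every $g_n$, so both work.
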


\begin{remark}[The map $\vOmega$]
The above map $h\mapsto \vOmega_{S_\mathsf{u}}(h)$ has a few additional properties: It is linear and its kernel consists exactly of the Riemann integrable functions vanishing on $L^{\star_\mathsf{u}}\subset H_\mathsf{u}$.  Moreover, it is not hard to see that it is an isometry when $R(H_\mathsf{u})$ is equipped with the seminorm $\|h\|_{R} = D_{S_\mathsf{u}}\cdot \theta_{H_\mathsf{u}}(|h|)$ and $\GAP (G)$ is equipped with the seminorm $\|\mu\|_{\GAP} =M(|\mu|)$, where $|\mu|$ denotes the total variation of the measure $\mu$.
\end{remark}

\begin{proof}[Proof of Theorem~\ref{thm:char-weighted-model}]
%Note that a universal cut-and-project scheme $S_{\mathsf{u}}$ compatible with $S$ indeed exists by Proposition~\ref{prop:excBc}.
As $\cM_S(G)=\cM_{S_\mathsf{u}}(G)$, we assume without loss of generality $S=S_{\mathsf{u}}$ in the following.

Note that $\vOmega(h)$ is indeed a g-a-p  measure by Proposition~\ref{prop:gapvO}. As $h$ is Riemann integrable, we can take a compact set $W\subset H$ having non-empty interior such that $\supp(h)\subset W$. Then $\supp(\Omega(h))\subset \oplam(W)$, which means that $\vOmega(h)$ has support associated to $S$.
% whose supported is associated to $S$: Without loss of generality we assume that $h$ is real-valued.
% Let $\varepsilon>0$ be arbitrary. Since $h:H\to \mathbb R$ is Riemann integrable, there exist $f, g \in \Cc(H)$ such that $f \leq h \leq g$ and $\int_{H}(g-f) \, \dd \theta_{H} \leq \varepsilon$.
%Define $\mu=\vOmega(f)$  and  $ \nu= \vOmega(g)$.
%Then, $\nu$ and $\mu$ belong to $\SAP (G)$ by Proposition \ref{omega-is-sap}. Moreover, by construction we have
%$\mu\leq \varrho \leq \nu$. Finally from Lemma~\ref{computing-the-mean}
%we find
%\begin{displaymath}
%  M(\nu-\mu) = D_{S} \cdot \int_H (g-f) \, {\rm d}\theta_{H}  \leq D_{S} \cdot \varepsilon \ .
%\end{displaymath}
%
%As $\varepsilon>0$ was arbitrary, this shows  that $\vOmega(h)$ is a g-a-p measure.
%
We show that the map $\vOmega$ is onto: Let $\varrho$ be an arbitrary  g-a-p measure with support associated to $S$. Without loss of generality we assume that $\varrho$ is a real measure.
For every $n\in \mathbb N$ we can find  $\mu_n, \nu_n \in \SAP(G)$ with
$$\mu_n \leq \varrho \leq \nu_n \mbox{ and }  M(\nu_n-\mu_n) \stackrel{n\to \infty }{\longrightarrow} 0\ .$$
By Lemma~\ref{approx}, we can choose $f_n, g_n \in \Cc(H)$ with
 $$\varrho \geq \vOmega(f_n)  \mbox{ and } M(\vOmega(f_n)) \geq M(\mu_n) \ ,$$
 and
 $$\varrho \leq \vOmega(g_n) \mbox{ and }  M(\vOmega(g_n)) \leq M(\nu_n) \ ,$$
respectively for each $n\in\NN$.
In particular we have for arbitrary $m,n\in \NN$ that
$$f_m(x^\star)\le \varrho(\{x\})\le g_n(x^\star)$$ for $x\in L$. This implies $f_m\le g_n$ for each $n,m\in\NN$ by denseness of $L^\star$ in $H$.  We thus have $\sup_{n} f_n \le \inf_n g_n$. We now  define  $h: H \to \RR$ by
\[
h(y)=
\left\{
\begin{array}{lc}
\varrho(\{\star^{-1}(y)\}) & y\in L^\star \\
\sup_n f_n (y) & y \notin L^\star
\end{array}
\right. \ ,
\]
where we used injectivity of the star map. Then by construction $\vOmega(h)=\varrho$ and $f_n\le h \le g_n$.
Moreover we have
\begin{displaymath}
0\leq M(\vOmega(g_n))-M(\vOmega(f_n)) \leq M(\nu_n)-M(\mu_n) \stackrel{n\to \infty }{\longrightarrow} 0\ .
\end{displaymath}
Using the density formula Lemma~\ref{computing-the-mean},
we conclude that $h$ is Riemann integrable.
\end{proof}

With   Theorem~\ref{thm:char-weighted-model} we find the following  characterization of g-a-p measures supported within Meyer sets.
\begin{corollary}[Characterization of g-a-p measures supported within Meyer sets]
Let $\varrho\in \cM^\infty(G)$ be a discrete measure supported within some Meyer set. Then $\varrho\in\GAP(G)$  if and only if there exists a cut-and-project scheme $S=(H,\cL)$ over $G$ such that $\varrho=\vOmega_S(h)$ for some $h\in R(H)$.
\end{corollary}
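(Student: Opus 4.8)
The plan is to deduce the corollary directly from Theorem~\ref{thm:char-weighted-model} together with Meyer's characterization (Theorem~\ref{thm:meychar}) and Proposition~\ref{prop:excBc}. The ``if'' direction is essentially immediate: if $\varrho = \vOmega_S(h)$ for some cut-and-project scheme $S = (H,\cL)$ over $G$ and some $h \in R(H)$, then Proposition~\ref{prop:gapvO} gives $\varrho \in \GAP(G)$ at once, so nothing further is needed here. Note also that, since $h$ is Riemann integrable, it is bounded and vanishes outside some compact set, so $\vOmega_S(h)$ is indeed translation bounded and discrete, which matches the hypothesis of the statement.

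For the ``only if'' direction, suppose $\varrho \in \GAP(G)$ is a discrete measure with $\supp(\varrho) \subset \vL$ for some Meyer set $\vL$. The first step is to fix a cut-and-project scheme witnessing that $\vL$ is a Meyer set: by Theorem~\ref{thm:meychar}, $\vL$ is a subset of a (regular) model set, hence there is a cut-and-project scheme $S$ over $G$ with $\vL$ associated to $S$, i.e. $\varrho \in \cM_S(G)$. The second step is to pass to a compatible $\Bohr$ cut-and-project scheme $S_\mathsf{u} = (H_\mathsf{u}, \cL_\mathsf{u})$, which exists by Proposition~\ref{prop:excBc}; since $S$ and $S_\mathsf{u}$ are compatible we have $\cM_S(G) = \cM_{S_\mathsf{u}}(G)$, so $\varrho \in \GAP(G) \cap \cM_{S_\mathsf{u}}(G)$. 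The third step is simply to invoke Theorem~\ref{thm:char-weighted-model} for the scheme $S_\mathsf{u}$: the map $\cR(H_\mathsf{u}) \to \GAP(G) \cap \cM_{S_\mathsf{u}}(G)$, $h \mapsto \vOmega_{S_\mathsf{u}}(h)$, is onto, so there exists $h \in \cR(H_\mathsf{u})$ with $\varrho = \vOmega_{S_\mathsf{u}}(h)$. Taking $S = S_\mathsf{u}$ as the required cut-and-project scheme completes the argument.

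There is essentially no obstacle here, as the corollary is a straightforward repackaging of Theorem~\ref{thm:char-weighted-model}; the only point that warrants a word is the reduction from ``supported within some Meyer set'' to ``associated to a cut-and-project scheme,'' which is exactly the content of the equivalence of (i) and (ii)/(iii) in Theorem~\ref{thm:meychar}. One should perhaps also remark that the theorem is stated for a compatible $\Bohr$ cut-and-project scheme, but since the statement of the corollary only asserts the existence of \emph{some} cut-and-project scheme $S$ (not a $\Bohr$ one), it suffices to produce $S_\mathsf{u}$ and rename it. If one wished, one could instead phrase the conclusion in terms of the original $S$: since $\cM_S(G) = \cM_{S_\mathsf{u}}(G)$ but the representing map on $\cR(H)$ need not be onto for the non-universal $S$, the clean formulation is the existential one given in the statement, and this is the one the proof delivers.
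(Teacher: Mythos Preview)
Your proposal is correct and follows essentially the same approach as the paper: the ``if'' direction is Proposition~\ref{prop:gapvO}, and the ``only if'' direction produces a universal cut-and-project scheme to which the Meyer set is associated and then invokes Theorem~\ref{thm:char-weighted-model}. The only difference is cosmetic: you explicitly pass through Theorem~\ref{thm:meychar} to get an initial scheme $S$ before upgrading to $S_\mathsf{u}$, whereas the paper directly cites the ``in particular'' clause of Proposition~\ref{prop:excBc} (whose proof already absorbs that step).
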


\begin{proof}
\noindent \emph{``if part''}: This is the statement of Proposition~\ref{prop:gapvO}.
%Without loss of generality consider any real $\varrho\in \cM^\infty(G)$ and assume that there exists a cut-and-project scheme $S=(H,\cL)$ over $G$ such that $\varrho=\vOmega_S(h)$ for some $h\in R(H)$. By passing to its real part, we may assume that $h$ is real-valued. Then $\varrho\in\GAP(G)$ follows as in the proof of Theorem~\ref{thm:char-weighted-model}.

%Consider arbitrary $\eps>0$ and take $f,g\in \Cc(H)$ such that $f\le h\le g$ and $\theta_H(g-f)<\eps$. Then $\mu=\Omega_S(f)$ and $\nu=\Omega_S(g)$ are both strongly almost periodic measures such that $\mu\le \varrho\le \nu$ and $M(\nu-\mu)= D_S\cdot \theta_H(g-f)\le D_S\cdot \eps$ by Lemma~\ref{computing-the-mean}. As $\eps>0$ was arbitrary, this shows $\varrho\in \GAP(G)$.

\smallskip

\noindent \emph{``only if part''}: Assume that $\varrho\in\GAP(G)$ is supported within some Meyer set $\vL$. By Proposition~\ref{prop:excBc}, we might take a universal cut-and-project scheme $S=(H,\cL)$ to which $\vL$ is associated. Now the claim follows from Theorem~\ref{thm:char-weighted-model}, as any universal cut-and-project scheme is compatible with itself.
\end{proof}

We now specialize to Meyer sets.
 For our reasoning we need the following auxiliary statement. It shows that a Riemann integrable function that attains its values in $\{0,1\}$ on a dense set must essentially  be the characteristic function of a set. This is certainly known. As we have not found a reference, we include a proof for completeness.
\begin{lemma}\label{lem:2_WRI}
Let $H$ be a locally compact abelian group and let  $h\in \cR(H)$. Assume that there exists a dense set $D \subset H$ such that $h(D) \subset \{0,1 \}$.
Then there exists a set $W \subset H$ with compact closure  such that $1_{W}\in \cR(H)$ and
$h(y)=1_{W}(y)$ holds for all $y\in D$.  In particular, the topological boundary of $W$ has Haar measure $0$.
\end{lemma}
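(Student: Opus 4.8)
The strategy is to use Riemann integrability of $h$ to produce a set $W$ whose characteristic function coincides with $h$ on the dense set $D$, and then to extract the regularity of $W$ (small boundary) from the Riemann approximation. The natural candidate is $W=\{y\in H : h(y)\geq 1/2\}$, or more robustly, a set trapped between suitable sub- and superlevel sets; the real content is controlling the topological boundary.

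First I would fix $\eps>0$ and, using $h\in\cR(H)$, choose $g_\eps,f_\eps\in\Cc(H)$ with $f_\eps\leq h\leq g_\eps$ and $\int_H(g_\eps-f_\eps)\dd\theta_H\leq\eps$. Since $h$ vanishes outside a compact set (being Riemann integrable and, after shrinking, we may take $f_\eps,g_\eps$ supported in a fixed compact neighborhood; more carefully, $h$ is squeezed between compactly supported functions so $\{h\neq 0\}$ is relatively compact once we note $h\leq g_\eps$ forces $h=0$ off $\supp g_\eps$ on $D$, hence $h$ is supported in a fixed compact set after taking $\eps$ small), we get that $W$ will have compact closure. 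The key observation is that for $y\in D$, $h(y)\in\{0,1\}$, so $f_\eps(y)\leq h(y)$ and $h(y)\leq g_\eps(y)$ pin down $h(y)$ up to the gap: on $D$, $h(y)=1$ forces $g_\eps(y)\geq 1$, and $h(y)=0$ forces $f_\eps(y)\leq 0$. Define $W=\{y\in H: g_\eps(y)>1/2\}$ for a fixed small $\eps$; then for $y\in D$ with $h(y)=1$ we have $g_\eps(y)\geq 1>1/2$ so $y\in W$, while for $h(y)=0$ we would like $y\notin W$, i.e. $g_\eps(y)\leq 1/2$ — but this need not hold for a single $\eps$. The clean fix is to not fix $\eps$: instead, let $W$ be \emph{any} Borel set with $\{f_{1/n}>1/2\}\subset W$ for all $n$ and $W\subset\{g_{1/n}>1/2\}$ for all $n$; such $W$ exists since $\sup_n f_{1/n}\leq h\leq\inf_n g_{1/n}$ forces $\{\sup_n f_{1/n}>1/2\}\subset\{\inf_n g_{1/n}>1/2\}$, and we may take $W=\{\,\inf_n g_{1/n}>1/2\,\}$, which on $D$ equals $\{h>1/2\}=\{h=1\}$, giving $h(y)=1_W(y)$ for $y\in D$ as required.

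Next, to show $1_W\in\cR(H)$ with $\partial W$ of Haar measure zero, I would argue that $W$ is squeezed, up to Haar-null sets, between the open set $U_n=\{g_{1/n}>1/2\}$ and the set where $g_{1/n}$ exceeds $1/2$ but $f_{1/n}$ does not, whose measure is at most $2\int(g_{1/n}-f_{1/n})\,\dd\theta_H\leq 2/n$ by Chebyshev. More precisely, pick $n$ and functions $\tilde g,\tilde f\in\Cc(H)$ approximating $1_{U_n}$ and $1_{W_n}$ where $W_n=\{f_{1/n}\geq 1/2\}$ is a fixed compact set contained in the interior of $U_n$ (after a small perturbation of thresholds, which is permissible since $\{g_{1/n}=1/2\}$ has measure zero for a.e.\ threshold by the coarea/layer-cake argument — choose the threshold $c\in(0,1)$ so that both $\{f_{1/n}=c\}$ and $\{g_{1/n}=c\}$ are Haar-null, possible since these are disjoint for distinct $c$). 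Then by Urysohn's lemma there are $\phi,\Phi\in\Cc(H)$ with $1_{\{f_{1/n}>c\}}\leq\phi\leq 1_W\leq\Phi\leq 1_{\overline{\{g_{1/n}>c\}}}$ (using $\{f_{1/n}>c\}\subset W\subset\{g_{1/n}>c\}$), and $\int(\Phi-\phi)\,\dd\theta_H\leq\theta_H(\{g_{1/n}>c\}\setminus\{f_{1/n}\geq c\})\leq c^{-1}\int(g_{1/n}-f_{1/n})\to 0$. This proves $1_W\in\cR(H)$; and since $1_W$ is Riemann integrable, its set of discontinuity points has Haar measure zero by the standard characterization, equivalently $\partial W$ is Haar-null — I would phrase this directly: $g\leq 1_W\leq h$ with $g,h\in\Cc(H)$ and $\int(h-g)\leq\eps$ forces $\theta_H(\partial W)\leq\theta_H(\{g<h\})$ arbitrarily small once one notes $\partial W\subset\overline{\{g<1\}}\cap\overline{\{h>0\}}$ can be controlled; the cleanest route is: Riemann integrability of $1_W$ is by definition the statement needed for $W$ to be a regular window, so the last sentence is immediate.

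\textbf{Main obstacle.} The subtle point is the passage from ``$h$ takes values in $\{0,1\}$ only on a dense set $D$'' to an \emph{honest} set $W$ with $1_W\in\cR(H)$: off $D$, $h$ may take arbitrary values in the gap between $\sup_n f_{1/n}$ and $\inf_n g_{1/n}$, so one cannot simply set $W=\{h=1\}$ and hope it is Riemann-measurable. The resolution, as sketched, is to define $W$ intrinsically from the approximants (e.g.\ $W=\{\inf_n g_{1/n}>c\}$ for a well-chosen threshold $c$) rather than from $h$ itself, and to choose the threshold $c$ avoiding the Haar-null level sets of all the $f_{1/n},g_{1/n}$ simultaneously so that Urysohn sandwiching applies cleanly; everything else is a routine layer-cake/Chebyshev estimate.
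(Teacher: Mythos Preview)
Your overall strategy is right and close to the paper's, but your choice of $W$ has a genuine gap. You claim $W=\{\inf_n g_{1/n}>1/2\}$ satisfies $1_W=h$ on $D$; this fails. Counterexample: $H=\RR$, $D=\QQ$, $h\equiv 0$, $f_n\equiv 0$, $g_n(x)=\max(0,1-n|x|)$. Then $\int g_n=1/n\to 0$ but $g_n(0)=1$ for every $n$, so $0\in W$ while $h(0)=0$. No threshold $c\in(0,1)$ helps, and your ``main obstacle'' paragraph doubles down on exactly this definition. The Riemann condition is an $L^1$ statement; nothing forces $\inf_n g_n$ to recover $h$ pointwise on $D$, so defining $W$ ``intrinsically from the approximants rather than from $h$ itself'' is precisely what does \emph{not} work.

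The paper's fix is to include the exceptional points by hand: $W=\bigcup_m\{f_m>0\}\cup\{h=1\}$. Then $1_W=h$ on $D$ is automatic (if $h(y)=0$ then $f_m(y)\leq 0$ for all $m$). More importantly, the paper shows that the \emph{same} approximants sandwich $1_W$ directly: $f_n\leq 1_W\leq g_n$. The nontrivial half is $g_n\geq 1$ on $W$; for $y\in\{f_m>0\}$ one uses that $\{f_m>0\}$ is open, so on the dense set $D\cap\{f_m>0\}$ one has $0<f_m\leq h$, forcing $h=1$ and hence $g_n\geq 1$, then continuity gives $g_n(y)\geq 1$. This bypasses your Chebyshev/threshold/Urysohn detour entirely---and note that your Chebyshev bound $\theta_H(\{g_n>c\}\setminus\{f_n\geq c\})\leq c^{-1}\int(g_n-f_n)$ is also unjustified as written: without that same density dichotomy, $f_n$ and $g_n$ could both hover near $c$ on a set of large measure.
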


\begin{proof}
Assume without loss of generality that $h$ is real-valued. For every $n\in\NN$ pick $f_n, g_n \in \Cc(H)$ such that $f_n \leq h \leq g_n$ and
$\int_H (g_n-f_n) \, {\rm d}\theta_H \to 0$ as $n\to \infty$. Note  $f_n\le 1$ and $g_n\ge0$ on $H$ by continuity and define the set $W\subset H$ by
\[
W= \bigcup_{m\in \NN}\{ f_m > 0 \} \cup \{h=1\}\ .
\]
 We claim $h(y)=1_{W}(y)$ for all $y \in D$:  Recall that $h(y) \in  \{0,1 \}$ for all $y \in D$ by assumption. If $h(y)=1$, then $y \in W$ and hence $1_W(y)=1$. If $h(y)=0$, then $f_n(y) \leq0$ for all $n\in \NN$. Hence $y \notin W$ in this case, which means $1_{W}(y)=0$. Thus $h(y)=1_{W}(y)$ for all $y \in D$.

\smallskip

\noindent Moreover we claim  $f_n \leq 1_W \leq g_n$ for all $n\in \NN$, which implies $1_W\in R(H)$:

\smallskip

\noindent We first show $f_n \leq 1_W$. As  $f_n\le 1$, we only have to consider $y \notin W$. Then $f_n(y) \leq 0 = 1_{W}(y)$, where the first inequality follows from the definition of $W$.

\smallskip

\noindent We next show $1_W \leq g_n$. Since $0\le g_n$, we only need to show that $g_n(y) \geq 1$ for all $y\in W$. Let thus $y\in W$. If $h(y)=1$, then $1_{W}(y)=1=h(y) \leq g_n(y)$, and the claim follows. On the other hand, if $f_m(y) > 0$ for some $m\in \NN$, then $0<f_m\le h \le g_n$ on some neighborhood $V$ of $y$ by continuity.
By assumption on $D$ and continuity of $g_n$ we thus get $1\le g_n$ on $\overline{V\cap D}$. Noting that $y\in V\subset \overline{V\cap D}$, we infer $1_W(y)=1\le g_n(y)$ as claimed.

\smallskip

\noindent We finally show $\theta_H(\partial W)=0$:  As $1_W$ is Riemann integrable, we find for arbitrary $\varepsilon >0$ functions $f,g \in \Cc (H)$ with $f\leq 1_W \leq g$ and $\theta_H(g-f)<\varepsilon$. As continuity implies $f\le 1_{W^\circ}$ and $1_{\overline{W}}\le g$, we obtain $\theta_{H} (\partial W) \le \theta_H(g-f) <\varepsilon$. Thus $\theta_H(\partial W)=0$.
\end{proof}

Our characterization of regular model sets is now obtained by specializing Theorem \ref{thm:char-weighted-model} to point sets and invoking the preceding lemma. We will also show a third equivalent condition in this case, which does not seem to extend to arbitrary g-a-p measures with Meyer set support.

\smallskip

We need the following pieces of notation. For two point measures $\mu$ and $\nu$ on $G$  we denote
\[
\{\mu \neq \nu\}= \{ x\in G : \mu(\{x\}) \neq \nu(\{x\}) \} \ .
\]
Moreover, we fix a van Hove sequence $(A_n)$ and define
for  a uniformly discrete subset $\vL$ of $G$  its uniform upper density by
$$
\udens(\vL)=\limsup_n  \sup_{x\in G} \frac{\card (\vL\cap (A_n +x))}{|A_n|} \ .
$$
The uniform upper density can be seen to be independent of the van Hove sequence \cite{LSS,PRS22}.  However, we do not need this and refrain from further discussion.

\begin{theorem}[Characterization of regular model sets]\label{thm:char-mod-set} Let $\vL$ be a Meyer set in $G$.  Then the following assertions are equivalent.
\begin{itemize}
\item[(i)]  $\vL$ is a regular model set.

\item[(ii)] There exists a Meyer set $\vG\supset\vL$ with the following property:  For every $\eps>0$ there are strongly almost periodic measures $\mu_\eps, \nu_\eps$ with
$$
0\le \mu_\eps \le \delta_{\vL} \le \nu_\eps \leq \delta_{\vG} \mbox{ and }
M(\nu_\eps - \mu_\eps)\leq  \udens(\{\mu_\eps \neq \nu_\eps\}) < \eps.$$

\item[(iii)] $\vL$ is an almost periodic pattern.

\end{itemize}
\end{theorem}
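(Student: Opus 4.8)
\textbf{Proof strategy for Theorem~\ref{thm:char-mod-set}.}
The plan is to prove the cycle (i) $\Rightarrow$ (ii) $\Rightarrow$ (iii) $\Rightarrow$ (i), using the machinery already assembled. The implication (ii) $\Rightarrow$ (iii) is immediate: the hypotheses in (ii) give, for each $\eps>0$, strongly almost periodic $\mu_\eps\le\delta_\vL\le\nu_\eps$ with $M(\nu_\eps-\mu_\eps)<\eps$, which is exactly the definition of $\delta_\vL$ being g-a-p, i.e.\ of $\vL$ being an almost periodic pattern. So only the two substantive implications remain.

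For (i) $\Rightarrow$ (ii): suppose $\vL=\oplam_S(W)$ for a cut-and-project scheme $S=(H,\cL)$ and a regular window $W$ (non-empty interior, compact closure, $\theta_H(\partial W)=0$). By Proposition~\ref{prop:excBc} we may pass to a compatible $\Bohr$ cut-and-project scheme; but in fact it is cleaner to work directly with $S$ and the map $\vOmega=\vOmega_S$. Pick an open $V\supset\overline W$ with compact closure and set $\vG=\oplam_S(V)\supset\vL$, a Meyer set. Now exploit regularity: since $\theta_H(\partial W)=0$, for every $\eps>0$ Urysohn's lemma yields $f_\eps,g_\eps\in\Cc(H)$ with $f_\eps\le 1_W\le g_\eps$, $0\le f_\eps$, $g_\eps\le 1_V$, and $\int_H(g_\eps-f_\eps)\dd\theta_H$ as small as we like (approximate $1_{W^\circ}$ from below and $1_{\overline W}$ from above, using $\theta_H(\partial W)=0$ to make the gap small, then truncate to stay within $[0,1]$ and supported in $V$). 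Put $\mu_\eps=\vOmega(f_\eps)$ and $\nu_\eps=\vOmega(g_\eps)$. By Proposition~\ref{omega-is-sap} these are strongly almost periodic; by construction $0\le\mu_\eps\le\delta_\vL\le\nu_\eps\le\delta_\vG$; and by Lemma~\ref{computing-the-mean} $M(\nu_\eps-\mu_\eps)=D_S\int_H(g_\eps-f_\eps)\dd\theta_H$, which we have made $<\eps$. For the density bound $M(\nu_\eps-\mu_\eps)\le\udens(\{\mu_\eps\ne\nu_\eps\})$: the set $\{\mu_\eps\ne\nu_\eps\}$ is contained in $\oplam_S(\{f_\eps\ne g_\eps\})\subset\oplam_S(V)$, and since $\nu_\eps-\mu_\eps=\vOmega(g_\eps-f_\eps)$ has all point masses in $[0,1]$, we get $(\nu_\eps-\mu_\eps)(A_n+x)\le\card(\{\mu_\eps\ne\nu_\eps\}\cap(A_n+x))$; dividing by $|A_n|$, taking $\sup_x$ and $\limsup_n$, and using Proposition~\ref{prop:meanav} gives $M(\nu_\eps-\mu_\eps)\le\udens(\{\mu_\eps\ne\nu_\eps\})$. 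Finally $\udens(\{\mu_\eps\ne\nu_\eps\})\le\udens(\oplam_S(V\setminus\{x:f_\eps(x)=g_\eps(x)\}))$; to force this below $\eps$ one must choose $f_\eps,g_\eps$ not merely close in $L^1$ but with the set $\{f_\eps\ne g_\eps\}$ having small window-measure and lying in a thin shell around $\partial W$, which is exactly what regularity of $W$ permits — this is the point where (i) is genuinely used and is the main obstacle: arranging \emph{simultaneously} small mean \emph{and} small upper density of the discrepancy set, which requires the two approximants to differ only on a set whose $\oplam_S$-image is sparse, not just on a set of small Haar measure.

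For (iii) $\Rightarrow$ (i): assume $\delta_\vL\in\GAP(G)$. Since $\vL$ is a Meyer set, Proposition~\ref{prop:excBc} gives a $\Bohr$ cut-and-project scheme $S_\mathsf{u}=(H_\mathsf{u},\cL_\mathsf{u})$ to which $\vL$ is associated. Apply Theorem~\ref{thm:char-weighted-model} with $\varrho=\delta_\vL$: there is $h\in\cR(H_\mathsf{u})$ with $\delta_\vL=\vOmega_{S_\mathsf{u}}(h)$. Comparing point masses, $h(x^{\star_\mathsf u})=\delta_\vL(\{x\})\in\{0,1\}$ for all $x\in L$, and $L^{\star_\mathsf u}$ is dense in $H_\mathsf{u}$; so $h$ takes values in $\{0,1\}$ on a dense set. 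By Lemma~\ref{lem:2_WRI} there is $W\subset H_\mathsf{u}$ with compact closure, $1_W\in\cR(H_\mathsf{u})$, $\theta_{H_\mathsf{u}}(\partial W)=0$, and $h=1_W$ on $L^{\star_\mathsf u}$. Hence $\delta_\vL=\vOmega_{S_\mathsf{u}}(h)=\vOmega_{S_\mathsf{u}}(1_W)=\delta_{\oplam_{S_\mathsf u}(W)}$, so $\vL=\oplam_{S_\mathsf u}(W)$. It remains only to check that $W$ has non-empty interior; this follows because $\vL$ is relatively dense, so $\oplam_{S_\mathsf u}(W)$ is relatively dense, which forces $W^\circ\ne\emptyset$ by the standard theory recalled after the definition of cut-and-project set (alternatively, one may enlarge $W$ to $W\cup U$ for a small open $U$ with $\theta_{H_\mathsf u}(\partial(W\cup U))=0$ without changing $\oplam_{S_\mathsf u}(W)$ if $U$ is chosen to meet $L^{\star_\mathsf u}$ only inside $W$, but relative density already gives it directly). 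Thus $W$ is a regular window and $\vL$ is a regular model set, completing the cycle.
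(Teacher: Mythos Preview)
Your overall strategy coincides with the paper's: the same cycle (i)$\Rightarrow$(ii)$\Rightarrow$(iii)$\Rightarrow$(i), the same use of $\vOmega(f_\eps),\vOmega(g_\eps)$ for (i)$\Rightarrow$(ii), and the same appeal to Theorem~\ref{thm:char-weighted-model} and Lemma~\ref{lem:2_WRI} for (iii)$\Rightarrow$(i). The sketch of (i)$\Rightarrow$(ii) is adequate; the paper makes your ``thin shell'' idea precise by choosing nested sets $K\subset U\subset\overline U\subset W^\circ\subset\overline W\subset V\subset\overline V\subset O$ with $\theta_H(O\setminus K)\le\eps$, so that $\{\mu_\eps\ne\nu_\eps\}\subset\oplam(\overline V\setminus U)$ and one can bound its upper density via $M(\vOmega(h))$ for any $h\in\Cc(H)$ with $h\ge 1_{\overline V\setminus U}$.

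There is, however, a genuine gap in your (iii)$\Rightarrow$(i), at the step where you conclude $W^\circ\ne\varnothing$. You write that relative denseness of $\oplam_{S_\mathsf u}(W)$ ``forces $W^\circ\ne\emptyset$ by the standard theory recalled after the definition of cut-and-project set''. But the standard fact goes the other way: $W^\circ\ne\varnothing$ implies relative denseness of $\oplam(W)$, not conversely. The converse is false in general (take $W=L^\star\cap K$ for any compact $K$ with non-empty interior: then $\oplam(W)=\oplam(K)$ is relatively dense while $W$ typically has empty interior). Your proposed alternative of enlarging $W$ by an open $U$ meeting $L^{\star_\mathsf u}$ only inside $W$ does not work either, since $L^{\star_\mathsf u}$ is dense in $H_\mathsf u$, so any non-empty open $U$ contains points of $L^{\star_\mathsf u}$ that need not lie in $W$. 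The paper closes this gap by actually using the extra information you already have, namely $\theta_{H_\mathsf u}(\partial W)=0$: if $W^\circ=\varnothing$ then $\theta_{H_\mathsf u}(W)=0$, and then for every $g\in\Cc(H_\mathsf u)$ with $g\ge 1_W$ one gets $\udens(\vL)\le M(\vOmega(g))=D_{S_\mathsf u}\,\theta_{H_\mathsf u}(g)$, which can be made arbitrarily small, forcing $\udens(\vL)=0$ and contradicting relative denseness of $\vL$.
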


\begin{remark}(a) For Euclidean space $G$, the implication (i) $\Rightarrow$ (iii) has already been established by Meyer in \cite[Cor.~4.3]{Mey2}.

(b)  Note that we have $$\{\mu_\eps \neq \nu_\eps\}= \{ \delta_{\vL} \neq \mu_{\eps}\} \cup \{ \delta_{\vL} \neq \nu_{\eps}\} \ .$$
Therefore, (ii) asserts that any regular model set $\vL$ can be approximated from above and below by discrete strongly almost periodic measures that only differ from $\delta_\vL$ on sets of arbitrary small density.  This is a precise  version of a viewpoint sometimes  expressed by  physicists (see  the introduction).
\end{remark}

\begin{proof}[Proof of Theorem~\ref{thm:char-mod-set}]
\noindent (i) $\Longrightarrow$ (ii): As $\vL$ is a regular model set, we can find a cut-and-project scheme
$S=(H,\cL)$ and a regular window  $W\subset H$ such that $\vL = \oplam (W)$. The following argument only uses the inclusions
$\oplam (W^\circ)\subset \vL \subset \oplam (\overline{W})$.

Fix a compact set $C\subset H$ such that $\overline{W} \subset C^\circ$ and define the Meyer set $\vG=\oplam(C)$. Let $\eps>0$ be arbitrary. As the Haar measure $\theta_H$ on $H$ is a Radon measure, the regularity properties of $\theta_H$ together with $\theta_H(\partial W)=0$ imply that there exist a compact set $K\subset H$ and an open set $O'\subset H$ such that $K \subset W^\circ\subset \overline{W}\subset O'$ and $\theta_H(O'\setminus K)\le \eps$.
Then the open set $O= O' \cap C^\circ$ also satisfies $K \subset W^\circ\subset \overline{W}\subset O$ and $\theta_H(O\setminus K)\le \eps$.

By \cite[Thm.~2.7]{Rud} and by $W^\circ\ne\varnothing$ we find nonempty open sets $U,V\subset H$ with compact closure such that $K \subset U \subset \overline{U}\subset W^\circ\subset \overline{W} \subset V \subset \overline{V} \subset O$.
By Urysohn's lemma, we find $f,g\in \Cc(H)$ such that
$1_{\overline{U}} \le f \le 1_{W^\circ} \le 1_{\overline{W}} \le g \le 1_{V}$.
Then $\mu=\vOmega(f)$ and $\nu=\vOmega(g)$ belong to $\SAP (G)$ by Lemma \ref{omega-is-sap}. Moreover both $\mu$ and $\nu$ have Meyer set support and satisfy
\begin{displaymath}
0\le\mu\le \delta_{\oplam(W^\circ)} \le \delta_\vL\le \delta_{\oplam(\overline{W}) }\le \nu\le \delta_{\vG} \ .
\end{displaymath}

Note that for any $h\in \Cc (H)$ with $1_{\overline{V}\setminus U} \leq h$   we have  the estimates
\begin{displaymath}
 \card(\{\mu  \neq \nu \} \cap A)\le \card(\oplam(\overline{V}\setminus U) \cap A)\le \sum_{x\in L\cap A } h(x^\star) = \vOmega(h)(A)
\end{displaymath}
on any compact $A\subset G$.
From  Lemma~\ref{computing-the-mean} and Proposition~\ref{prop:meanav} we therefore find
$$\udens(\{\mu\neq \nu\})  \leq M(\vOmega(h)) = D_S \cdot \theta_H(h) \ .
$$
As $\overline{V}\setminus U$ is compact in $H$, we can choose  $h\in \Cc (H)$ with $1_{\overline{V}\setminus U} \leq h$   such that $\theta_H(h)$ is arbitrarily close to $\theta_H(\overline{V}\setminus U)$. Hence, we find
$$\udens(\{\mu\neq \nu\}) \leq   D_S \cdot \theta_H(\overline{V}\setminus U)
    \le D_S \cdot \theta_H(O\setminus K)\le D_S \cdot \eps \ .$$
We finally note  $0 \leq \nu-\mu \leq \delta_{\{\mu\neq \nu\}}$, which implies
$$M(\nu-\mu)  \leq \udens(\{\mu\neq \nu\})\ .$$

\smallskip

\noindent (ii) $\Longrightarrow$ (iii): This is obvious.

\smallskip

\noindent (iii) $\Longrightarrow$ (i):
By Proposition~\ref{prop:excBc}, we may choose a universal cut-and-project scheme $S=(H,\cL)$ to which the Meyer set $\vL$ is associated. Then $\delta_\vL\in \cM_S(G)$, and we also have  $\delta_\vL\in \GAP(G)$ as $\vL$ is an almost periodic pattern by assumption.
Now we invoke Theorem~\ref{thm:char-weighted-model} to obtain a Riemann integrable $h: H \to \CC $ with $\delta_\vL = \vOmega(h)$. Note that we have  $h \in \{0,1 \}$ on $L^\star$, which is a dense set in $H$.
Thus, by Lemma~\ref{lem:2_WRI}, there exists  $W\subset H$ having topological boundary of Haar measure zero such that
\[
\delta_{\vL} = \vOmega(h)  = \vOmega(1_W) = \delta_{\oplam(W)} \ ,
\]
and $\vL = \oplam(W)$ follows.

Moreover, the interior of $W$ is not empty as $\vL$ is relatively dense. Indeed, note that for any  $g\in \Cc(H)$ such that $1_W\le g$ we have $\udens(\vL)\le M(\vOmega(g))=D_S\cdot \theta_H(g)$. Now $W^\circ=\varnothing$ would imply $\theta_H(W)=0$ as $\theta_H(\partial W)=0$. Hence we can choose $g\in\Cc(H)$ with $1_W\le g$ such that $\theta_H(g)$ is arbitrarily close to $0$. Thus $\udens(\vL)=0$, which is contradictory as $\vL$ has positive upper density by relative denseness.

Altogether, we have thus shown that $\vL$ is a regular model set.
\end{proof}

\section{Examples and counter-examples}\label{sec-ex}
This section gathers various remarks and examples to illuminate  facets of our results.

\subsection{Meyer sets in a lattice}\label{sec:discrete}

Let $L$ be a lattice in $G$. In order to describe arbitrary Meyer sets in $L$ in a common cut-and-project scheme, we may use the Bohr compactification $L_\mathsf{b}$ of $L$ with injection map $i:L\to L_\mathsf{b}$.

\begin{proposition}[A cut-and-project scheme for lattice subsets]\label{prop:discrete}
Let $L$ be a lattice in $G$ and define $\cL=\{(x, i(x)): x\in L\}\subset G\times L_\mathsf{b}$. Then $S=(L_\mathsf{b}, \cL)$ is a cut-and-project scheme over $G$ with injective star map, to which $L$ is associated. Moreover the spaces $\cM_{S}(G)$, $\cM^\infty(L)$, and $\Cu(L)$ are canonically isomorphic.
\end{proposition}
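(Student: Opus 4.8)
The plan is to verify the cut-and-project scheme axioms directly and then exhibit the three canonical isomorphisms. First I would check that $S=(L_\mathsf{b},\cL)$ is a cut-and-project scheme over $G$: the group $L_\mathsf{b}$ is compact, hence locally compact abelian; the subgroup $\cL=\{(x,i(x)):x\in L\}$ is the graph of the homomorphism $i$, so it is a subgroup of $G\times L_\mathsf{b}$, and since $L$ is discrete in $G$ the projection $\pi^G:\cL\to L$ is a group isomorphism, so $\cL$ is discrete in $G\times L_\mathsf{b}$. Co-compactness of $\cL$ follows because $L$ is co-compact in $G$ and $L_\mathsf{b}$ is itself compact, so $(G\times L_\mathsf{b})/\cL$ is a continuous image of $(G/L)\times L_\mathsf{b}$ and hence compact. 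The projection $\pi^G$ is one-to-one by construction, and $\pi^H(\cL)=i(L)$ is dense in $L_\mathsf{b}$ by the defining property of the Bohr compactification. Thus $S$ is a cut-and-project scheme, with $\pi^G(\cL)=L$ and star map $x\mapsto i(x)$, which is injective since $i$ is. That $L$ is associated to $S$ is immediate: $L=\oplam(L_\mathsf{b})$ and $L_\mathsf{b}$ is compact, so $\delta_L=\vOmega(1_{L_\mathsf{b}})$, and in fact choosing any $h\in\Cc(L_\mathsf{b})$ with $h\ge 1$ (e.g.\ $h\equiv 1$, which is continuous on the compact group $L_\mathsf{b}$) gives $\delta_L\le\vOmega(h)$, so every subset of $L$ is associated to $S$.

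Next I would identify the three spaces. The key observation is that the star map $\star=i$ is a \emph{bijection} from $L$ onto the dense subset $i(L)\subset L_\mathsf{b}$, and moreover — because $L$ is discrete, hence the map $L\to L_\mathsf{b}$ is the Bohr compactification of a discrete group — every function on $L$ is trivially uniformly continuous, so $\Cu(L)=\ell^\infty(L)$ is the algebra of all bounded functions on $L$, and this is precisely the set of functions of the form $f_\mathsf{b}\circ i$ with $f_\mathsf{b}\in C(L_\mathsf{b})$; that is, $\Cu(L)=SAP(L)=\{f_\mathsf{b}\circ i:f_\mathsf{b}\in C(L_\mathsf{b})\}$ via the restriction $f_\mathsf{b}\mapsto f_\mathsf{b}|_{i(L)}$, which is a bijection onto $\Cu(L)$ since $i$ is injective with dense range. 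For the measures: a translation bounded measure on the discrete group $L$ is exactly an element of $\ell^\infty(L)$ via $\mu\mapsto(x\mapsto\mu(\{x\}))$; and by Corollary~\ref{cor-rep} applied to $S$ (which is universal, being compatible with itself — indeed the map $()^\sharp:SAP(G)\to\Cb(L_\mathsf{b})$ can be constructed as in Proposition~\ref{prop:excBc}, or one simply notes that every bounded function on $L=L^\star$ extends to a continuous function on the Bohr compactification $L_\mathsf{b}$), the map $\Cc(L_\mathsf{b})\to\SAP(G)\cap\cM_S(G)$, $h\mapsto\vOmega_S(h)$, is a bijection. I would then assemble the canonical chain: since $\cL$ is co-compact with $L_\mathsf{b}$ compact, $\oplam(K)$ for $K\subset L_\mathsf{b}$ is automatically uniformly discrete and every model set from $S$ is contained in $L$, so $\cM_S(G)$ consists exactly of the translation bounded measures on $G$ supported in $L$; such a measure is determined by its restriction to $L$, which is an arbitrary element of $\cM^\infty(L)$, giving $\cM_S(G)\cong\cM^\infty(L)$; and identifying $\cM^\infty(L)\cong\ell^\infty(L)\cong\Cu(L)$ as above closes the loop. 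Concretely, the composite isomorphism sends $\mu\in\cM_S(G)$ to the function $x\mapsto\mu(\{x\})$ on $L$, and $\vOmega_S(h)\in\cM_S(G)$ to $h|_{i(L)}$ transported to $L$ via $i$.

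I expect the main obstacle to be bookkeeping rather than depth: the statement bundles together several "canonical isomorphism" claims and one must be careful about which category the isomorphisms live in (vector spaces? algebras? topological vector spaces?) and that they are mutually compatible, i.e.\ that the triangle of identifications commutes. The one genuinely substantive point is that $\cM_S(G)$ equals the set of \emph{all} translation bounded measures supported in $L$ — the inclusion "$\subset$" is clear, but for "$\supset$" one needs that any translation bounded measure $\sigma$ on $G$ with $\supp(\sigma)\subset L$ has support associated to $S$, which holds because $\supp(\sigma)\subset L=\oplam(L_\mathsf{b})$ and $L_\mathsf{b}$ is compact, so $\delta_{\supp(\sigma)}\le\vOmega(1_{L_\mathsf{b}})$. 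Once this is in place, the remaining identifications — measures on a discrete group with bounded point masses, bounded functions on a discrete group as restrictions of continuous functions on its Bohr compactification, and $\vOmega_S$ being the transport map — are all routine, and I would present them compactly with a single commuting diagram rather than spelling out each verification.
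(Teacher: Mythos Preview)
Your direct verification of the cut-and-project axioms and your final chain $\cM_S(G)\cong\cM^\infty(L)\cong\ell^\infty(L)=\Cu(L)$ are correct and amount to the same content as the paper's proof, though the paper is terser: it obtains $S$ in one line by starting from the trivial scheme $S'=(\{0\},L\times\{0\})$ and invoking the construction of Proposition~\ref{prop:excBc} (with $G_\mathsf{b}$ replaced by $L_\mathsf{b}$), and then simply writes down the injection maps $\Cu(L)\to\cM^\infty(L)\to\cM_S(G)$, $f\mapsto f\cdot\theta_L\mapsto f\cdot\delta_L$, noting they are isomorphisms because $L$ is uniformly discrete in $G$.

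There is, however, a genuine error in your middle paragraph: you assert $\Cu(L)=SAP(L)$, equivalently that every bounded function on the discrete group $L$ is the restriction of a continuous function on $L_\mathsf{b}$. This is false for any infinite $L$; for example $1_{\NN}\in\ell^\infty(\ZZ)=\Cu(\ZZ)$ is not Bohr almost periodic. The correct statement is $SAP(L)=\{f_\mathsf{b}\circ i:f_\mathsf{b}\in C(L_\mathsf{b})\}\subsetneq\ell^\infty(L)=\Cu(L)$. The same mistake reappears when you argue for universality of $S$ by claiming ``every bounded function on $L=L^\star$ extends to a continuous function on $L_\mathsf{b}$''. Fortunately, this entire digression --- the $SAP(L)$ identification, the appeal to Corollary~\ref{cor-rep}, and the universality claim --- is not used in your final chain of isomorphisms, which relies only on $\Cu(L)=\ell^\infty(L)$ (correct, since $L$ is discrete) and on the identification of $\cM_S(G)$ with translation bounded measures supported in $L$ (which you argue correctly from $L=\oplam(L_\mathsf{b})$). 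Note also that Corollary~\ref{cor-rep} would in any case only give you $\SAP(G)\cap\cM_S(G)$, not all of $\cM_S(G)$. Delete the digression and your proof is clean.
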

\begin{proof} We note that the Meyer set $L\subset G$ is associated to the cut-and-project scheme  $S'=(\{0\}, L\times \{0\})$ over $G$. The construction in the proof of Proposition~ \ref{prop:excBc}, with $G_{\mathsf{b}}$ replaced by $L_\mathsf{b}$, then produces the compatible cut-and-project scheme $S$. The injection maps $\Cu(L)\to \cM^\infty(L)\to \cM_S(G)$ given by $f \mapsto f\cdot \theta_L\mapsto f\cdot \delta_L$ are isomorphisms as $L$ is uniformly discrete in $G$. \end{proof}

\begin{remark}
(a)
For discrete $G$, we may take $L=G$. In that case, the above cut-and-project scheme is universal. In fact, $\GAP(G)$ can then be identified with a class of functions that are called Riemann almost periodic by Hartman \cite{H64}.

(b)
The above cut-and-project scheme may not be universal, but allows to lift all functions that are Bohr almost periodic on $L$. This has been used in \cite[Sec.~6]{LLRSS} to analyse modulated lattices, which give rise to g-a-p measures beyond having Meyer set support.
\end{remark}

\subsection{On regular windows}\label{sec:regular}
A model  set may well be associated to   more than one cut-and-project scheme. Then, the question arises whether regularity of the window is independent of the chosen cut-and-project scheme. Our considerations above give the following.

\begin{proposition}\label{prop:unireg} Let $\vL$ be a Meyer set in $G$, and let $S=(H,\cL)$ be any $\Bohr$ cut-and-project scheme over $G$ to which $\vL$ is associated.
Then, $\vL$ is a regular model set if and only if there exists a regular window $W \subset H$ such that $\vL = \oplam(W)$.
\end{proposition}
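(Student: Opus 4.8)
The plan is to show both directions of the equivalence, the nontrivial one being that regularity detected in \emph{some} cut-and-project scheme forces regularity in the given $\Bohr$ scheme $S$. The ``if'' direction is immediate: if $W\subset H$ is a regular window with $\vL=\oplam(W)$, then $\vL$ is a regular model set by definition. For the ``only if'' direction, assume $\vL$ is a regular model set. Then $\vL$ is in particular an almost periodic pattern by the implication (i)$\Rightarrow$(iii) of Theorem~\ref{thm:char-mod-set}, so $\delta_\vL\in\GAP(G)$. Since $\vL$ is associated to $S$ and $S$ is a $\Bohr$ (hence universal) cut-and-project scheme, $S$ is compatible with itself, so $\delta_\vL\in\GAP(G)\cap\cM_S(G)$.

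Next I would invoke Theorem~\ref{thm:char-weighted-model} with $S_\mathsf{u}=S$ to obtain a Riemann integrable $h\in\cR(H)$ with $\delta_\vL=\vOmega_S(h)$. Comparing point masses on the dense set $L^\star\subset H$, we have $h(x^\star)=\delta_\vL(\{x\})\in\{0,1\}$ for all $x\in L$, so $h$ takes values in $\{0,1\}$ on a dense subset of $H$. Lemma~\ref{lem:2_WRI} then yields a set $W\subset H$ with compact closure such that $1_W\in\cR(H)$, such that $h=1_W$ on $L^\star$, and such that $\theta_H(\partial W)=0$. Since $\vOmega_S$ depends only on the values of its argument on $L^\star$, we get $\delta_\vL=\vOmega_S(h)=\vOmega_S(1_W)=\delta_{\oplam(W)}$, hence $\vL=\oplam(W)$. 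It remains to check that $W^\circ\neq\varnothing$: this follows exactly as in the proof of (iii)$\Rightarrow$(i) in Theorem~\ref{thm:char-mod-set}, since $W^\circ=\varnothing$ together with $\theta_H(\partial W)=0$ would give $\theta_H(W)=0$, and then choosing $g\in\Cc(H)$ with $1_W\le g$ and $\theta_H(g)$ arbitrarily small would force $\udens(\vL)\le M(\vOmega_S(g))=D_S\cdot\theta_H(g)$ to be $0$, contradicting relative denseness of $\vL$. Therefore $W$ is a regular window with $\vL=\oplam(W)$.

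The main point to be careful about is that the regular window $W$ is produced \emph{inside the prescribed $\Bohr$ scheme $S$}, not merely in some auxiliary scheme: this is exactly what Theorem~\ref{thm:char-weighted-model} delivers once we know $\delta_\vL\in\GAP(G)\cap\cM_S(G)$, which in turn relies on $S$ being universal (so that it is compatible with itself and the representation theorem applies). Everything else is a direct reuse of Lemma~\ref{lem:2_WRI} and the relative-denseness argument from the proof of Theorem~\ref{thm:char-mod-set}, so no genuinely new obstacle arises.
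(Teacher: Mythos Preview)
Your proof is correct and follows essentially the same approach as the paper. The paper's argument is simply a terser version of yours: it notes that a regular model set has g-a-p Dirac comb by Theorem~\ref{thm:char-mod-set}, and then invokes the proof of (iii)$\Rightarrow$(i) of that theorem verbatim (which is exactly the chain Theorem~\ref{thm:char-weighted-model} $\to$ Lemma~\ref{lem:2_WRI} $\to$ nonempty-interior argument that you spell out).
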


\begin{proof}
\noindent \emph{``if part''}: This holds by definition.

\noindent \emph{``only if part''}: As $\vL$ is a regular model set, its Dirac comb is g-a-p by  Theorem~\ref{thm:char-mod-set}. The proof (iii) $\Rightarrow$ (i) of Theorem~\ref{thm:char-mod-set} then gives  $\vL=\oplam(W)$ for a regular window $W\subset H$ in the universal cut-and-project scheme $S$.
\end{proof}

The proposition asserts that regularity necessary holds with respect to $\Bohr$ cut-and-project schemes. However the following examples in $G=\ZZ$ show that regularity may be lost by passing to arbitrary cut-and-project schemes over $G$:

\smallskip

Consider for any prime $p$ the group of $p$-adic integers $H_p=\ZZ_p$ together with $\cL_{p}= \{ (n, i_p(n)) : n \in \ZZ \}$, where $i_p : \ZZ \to \ZZ_p$ denotes the canonical embedding. Define $H_0= \{ 0 \}$ and $\cL_0= \{ (n,0) : n \in \ZZ \}$.  Then $S_0=(H_0, \cL_0)$ and $S_p=(H_p, \cL_p)$ are cut-and-project schemes over $\ZZ$. Moreover, a Meyer set in $\ZZ$ is associated to $S_0$ or $S_p$, respectively, if and only if it is a relatively dense subset of $\ZZ$. Therefore, all these cut-and-project schemes are compatible. Now the following holds true.
\begin{itemize}
  \item{} The only cut-and-project sets from $S_0$ are $\vL=\varnothing$ and the regular model set $\vL=\ZZ$. Every subset of $\ZZ$ is a cut-and-project set from $S_p$, as $S_p$ has an injective star map $i_p$.
  \item{} Consider $\vL_{p,r} = p \ZZ +r$ for some prime $p$ and some $r \in \{0,1, \ldots, p-1 \}$. Then $\vL_{p,r}=\oplam_{S_p}(W_{p,r})$ is a regular model set, where $W_{p,r} \subset H_p$ is the closed ball of radius 1 centered at $i_p(r)$.
  \item{} Let now $q$ be a prime different from $p$. Then no window $W\subset H_q$ such that $\vL_{p,r}=\oplam_{S_q}(W)$ can be regular. Indeed, otherwise $\vL_{p,r}$ would have density 1 as $i_q(\vL_{p,r})$ is  dense in $H_q$, which is contradictory.
\end{itemize}

\subsection{Regular inter model sets}

When studying Pisot substitutions or model set dynamical systems, so-called
regular inter model sets play an important r\^ole \cite{BLM,Lee,RS24}. These are Meyer sets $\vL$ satisfying $\oplam(W^\circ)\subset \vL\subset \oplam(\overline{W})$ in some cut-and-project scheme with regular window $W$.

Clearly any regular model set is a regular inter model set.
Moreover, observe that the proof (i) $\Rightarrow$ (ii) in Theorem~\ref{thm:char-mod-set} carries over verbatim from regular model sets  to  regular inter model sets. Thus generalized almost periodicity also characterizes regular inter model sets and we obtain the following statement.

\begin{proposition}
A subset of $G$ is a regular inter model set if and only if it is a regular model set. \qed
\end{proposition}

For Euclidean space $G$, this has recently been established in \cite{RS24} by a geometric approach that does not seem to be extendable to arbitrary locally compact abelian groups.

\subsection{G-a-p measures and Weyl almost periodicity}

As discussed in the introduction, Weyl almost periodic measures constitute an important subclass of models in aperiodic order. It is shown in \cite{LSS} that any g-a-p measure is Weyl almost periodic. Here, we show that the converse does not hold. So, g-a-p is a strictly stronger notion that Weyl almost periodicity.  In fact there is a crucial difference between these two concepts: Weyl almost periodicity is stable under perturbations of zero density while g-a-p is not. One might even think of g-a-p measures as (almost) being locally determined.

\smallskip

Recall that a locally integrable function $f \in L^1_{loc}(G)$ is Weyl almost periodic if for every $\eps >0$ there exists a trigonometric polynomial $P_\eps$ such that
\[
\| f- P_\eps \|_{w}= \limsup_{n} \sup_{x \in G} \frac{1}{|A_n|} \int_{A_n} \left|f(x-t)-P_\eps(x-t) \right| \dd t <\eps \ ,
\]
where $(A_n)$ is any van Hove sequence. In fact the above limit is independent of the choice of the van Hove sequence \cite[Prop.~4.11]{LSS}.
A measure $\mu\in \cM^\infty(G)$ is \textit{Weyl almost periodic} if the function $\varphi\ast \mu$ is Weyl almost periodic for all $\varphi\in \Cc(G)$.

Denote the total variation of the translation bounded measure $\mu$ by $|\mu|$ and define
its \textit{uniform upper density} by
\[
\|\mu \|_{u}  = \limsup_{n} \sup_{x\in G} \frac{|\mu|(x+A_n)}{|A_n|} \ .
\]
This is independent of the choice of the van Hove sequence \cite[Prop.~5.14]{PRS22}. Also note $\|\delta_\vL\|_{u}=\udens(\vL)$ for any uniformly discrete point set $\vL$.
The following is immediate.

\begin{proposition}[Stability of Weyl almost periodicity under zero uniform density perturbations]\label{prop:wap}
If $\mu$ is Weyl almost periodic and $\nu\in \cM^\infty(G)$ is such that $\| \mu -\nu \|_{u}=0$ then $\nu$ is also Weyl almost periodic. \qed
\end{proposition}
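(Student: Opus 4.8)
The plan is to reduce the claim to an estimate of the Weyl seminorm $\|\cdot\|_w$ on the smoothed measure $\varphi\ast(\nu-\mu)$. Recall that $\nu$ is Weyl almost periodic precisely when $\varphi\ast\nu$ is a Weyl almost periodic function for every $\varphi\in\Cc(G)$. I would fix such a $\varphi$ and write $\varphi\ast\nu=\varphi\ast\mu+\varphi\ast(\nu-\mu)$. Since $\varphi\ast\mu$ is Weyl almost periodic by hypothesis and $\|\cdot\|_w$ is a seminorm, it then suffices to show $\|\varphi\ast(\nu-\mu)\|_w=0$: given $\eps>0$, choosing a trigonometric polynomial $P_\eps$ with $\|\varphi\ast\mu-P_\eps\|_w<\eps$ yields $\|\varphi\ast\nu-P_\eps\|_w\le\|\varphi\ast(\nu-\mu)\|_w+\|\varphi\ast\mu-P_\eps\|_w<\eps$, so $\varphi\ast\nu$ is Weyl almost periodic, and hence so is $\nu$.

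To prove $\|\varphi\ast(\nu-\mu)\|_w=0$ I would work with the positive measure $\sigma:=|\nu-\mu|$ and the compact set $K:=\supp\varphi$, starting from the pointwise bound $|(\varphi\ast(\nu-\mu))(y)|\le\|\varphi\|_\infty\,\sigma(y-K)$ for $y\in G$. Fixing any van Hove sequence $(A_n)$, integrating $\sigma(x-t-K)$ against $\dd t$ over $A_n$ and applying Fubini's theorem turns the integral into $\int_G|A_n\cap(x-s-K)|\dd\sigma(s)$, where $|A_n\cap(x-s-K)|\le|K|$ and the integrand vanishes unless $s\in x-A_n-K$. This gives, uniformly in $x\in G$,
$$\frac{1}{|A_n|}\int_{A_n}|(\varphi\ast(\nu-\mu))(x-t)|\dd t\ \le\ \|\varphi\|_\infty\,|K|\,\frac{\sigma(x-A_n-K)}{|A_n|}.$$
Hence everything reduces to showing $\limsup_n\sup_{x\in G}|A_n|^{-1}\sigma(x-A_n-K)=0$.

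For this last point I would set $B_n:=-(A_n+K)$, so that $x-A_n-K=x+B_n$ and $|B_n|=|A_n+K|$ by inversion-invariance of Haar measure. Since the inversion map is a topological automorphism of $G$ and thickening a van Hove sequence by a fixed compact set again produces a van Hove sequence (the elementary inclusions used in the proof of Proposition~\ref{prop:sap-meyer}(f), together with \cite{PRS22}), the sequence $(B_n)$ is itself a van Hove sequence, and $|A_n+K|/|A_n|\to1$ because $A_n+K\subset A_n\cup\partial^K A_n$. As the uniform upper density does not depend on the chosen van Hove sequence \cite{PRS22}, we obtain $\limsup_n\sup_{x\in G}|A_n|^{-1}\sigma(x+B_n)=\limsup_n\sup_{x\in G}|B_n|^{-1}\sigma(x+B_n)=\|\sigma\|_u=\|\nu-\mu\|_u=0$, which with the previous display yields $\|\varphi\ast(\nu-\mu)\|_w=0$. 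The only genuinely delicate point is keeping all bounds uniform in $x$ while replacing the averaging window $A_n$ by the reflected, thickened window $B_n$; this is handled entirely by the cited independence of $\|\cdot\|_u$ of the van Hove sequence and the stability of the van Hove property under thickening and inversion, so nothing deeper is required -- consistent with the statement being labelled immediate.
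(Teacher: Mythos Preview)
Your argument is correct. The paper does not actually give a proof here: the proposition is stated with a \qed and the comment ``The following is immediate,'' so there is nothing to compare against beyond noting that your write-up fills in the details the authors chose to omit. Your reduction to $\|\varphi\ast(\nu-\mu)\|_w=0$ via the pointwise bound $|(\varphi\ast(\nu-\mu))(y)|\le\|\varphi\|_\infty\,|\nu-\mu|(y-K)$, the Fubini computation, and the replacement of the averaging window by the reflected thickened sequence $(B_n)$ are all valid; the van Hove stability facts you invoke (thickening by a compact set, inversion, independence of $\|\cdot\|_u$ from the sequence) are standard and indeed supported by \cite{PRS22}, which the paper already cites. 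If anything, your proof is more careful than what the authors seem to have had in mind, since the whole chain of inequalities amounts to the single estimate $\|\varphi\ast\rho\|_w\le C_\varphi\,\|\rho\|_u$, which is presumably regarded as folklore in the context of \cite{LSS}.
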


The stability property given in the previous proposition does not hold for g-a-p measures. We rather have the following result.

\begin{theorem}\label{example} There exists a relatively dense point set $\vL \subset \ZZ$ such that $\| \delta_{\vL}- \delta_{\ZZ} \|_{u}=0$, and $\vL$ is not a regular model set in $\ZZ$.
In particular, $\delta_{\vL} $ is Weyl almost periodic but not generalized-almost-periodic.
\end{theorem}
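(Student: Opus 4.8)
The strategy is to construct $\vL$ explicitly as $\ZZ$ with a very sparse subset deleted, where the deleted subset is dense enough to destroy regularity but thin enough to have zero uniform density. Concretely, I would fix a rapidly growing sequence of scales, say $N_k = 2^{2^k}$, and delete from $\ZZ$ a single point from each dyadic-type block so that the gaps between consecutive deleted points grow without bound, while within each window $[-N_k, N_k]$ there are still enough deletions that no ``window in internal space'' can account for them. The cleanest incarnation is to take $\vL = \ZZ \setminus D$ where $D = \{ \pm n_k : k \in \NN\}$ and $(n_k)$ increases so fast that $n_k / n_{k-1} \to \infty$ (e.g.\ $n_k = k!$ or $n_k = 2^{2^k}$); then $\vL$ is still relatively dense since consecutive gaps are at most $2$, and in fact $\vL - \vL = \ZZ$ so $\vL$ is trivially a Meyer set.

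\textbf{Zero uniform density.}
First I would verify $\|\delta_\vL - \delta_\ZZ\|_u = \|\delta_D\|_u = 0$. Since $\delta_D$ is a $0$--$1$ point measure, $\|\delta_D\|_u = \udens(D) = \limsup_n \sup_{x} |D \cap (x + A_n)|/|A_n|$ along the van Hove sequence $A_n = [-n,n] \cap \ZZ$. For a fixed interval of length $2n+1$, the number of $n_k$ in it is at most $1 + \#\{k : n_{k+1} - n_k \le 2n+1\}$, which by super-exponential growth of $(n_k)$ is $O(\log\log n)$ or similar, hence $o(n)$. Thus $\udens(D) = 0$, so $\|\delta_\vL - \delta_\ZZ\|_u = 0$, and by Proposition~\ref{prop:wap} (applied with $\mu = \delta_\ZZ$, which is periodic hence Weyl almost periodic) $\delta_\vL$ is Weyl almost periodic.

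\textbf{$\vL$ is not a regular model set.}
This is the main obstacle, and I would argue by contradiction using Theorem~\ref{thm:char-mod-set}. Suppose $\vL$ were a regular model set; then by (i)$\Rightarrow$(iii) it is an almost periodic pattern, so $\delta_\vL \in \GAP(\ZZ)$, hence $\delta_D = \delta_\ZZ - \delta_\vL \in \GAP(\ZZ)$ as well (since $\GAP$ is a vector space containing the periodic measure $\delta_\ZZ$). A positive g-a-p measure $\varrho$ has a well-defined mean $M(\varrho) \ge 0$, computable as $\lim_n \varrho(A_n)/|A_n|$; for $\delta_D$ this limit is $\udens(D) = 0$, so $M(\delta_D) = 0$. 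But then sandwiching $0 \le \mu_\eps \le \delta_D \le \nu_\eps$ with $\mu_\eps, \nu_\eps \in \SAP(\ZZ)$ and $M(\nu_\eps - \mu_\eps) < \eps$ forces, via positivity and $M(\mu_\eps) \ge 0$, that $M(\nu_\eps) < \eps$; since $\nu_\eps \ge \delta_D \ge 0$ and $\nu_\eps$ is strongly almost periodic with small mean, one shows $\nu_\eps$ must vanish on a relatively dense set of points, while still dominating the (infinitely many, unboundedly spaced) point masses of $\delta_D$ — the key point being that a nonzero nonnegative strongly almost periodic measure has strictly positive mean on $\ZZ$, so a strongly almost periodic $\nu_\eps \ge \delta_{n_k}$ at infinitely many scales cannot have mean $< \eps$ for all $\eps$ unless it is eventually forced to carry mass densely, contradicting $M(\nu_\eps) < \eps$. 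Equivalently, and more cleanly: by (iii)$\Rightarrow$(i) in Theorem~\ref{thm:char-mod-set}, if $\delta_\vL \in \GAP(\ZZ)$ then $\vL = \oplam_S(W)$ for a regular window $W$ in a universal cut-and-project scheme $S = (H, \cL)$ over $\ZZ$; then $D = \oplam_S(H \setminus W)$ would be a cut-and-project set with $\udens(D) = D_S \cdot \theta_H(\overline{H \setminus W}) = 0$, forcing $\theta_H(H \setminus W) = 0$, hence (since $\theta_H(\partial W) = 0$) $\theta_H(W) = \theta_H(H)$ and $W^\circ$ co-null; but an infinite set like $D$ associated to a cut-and-project scheme with $\theta_H$-null window closure is impossible — more precisely, $D$ being relatively dense-free yet infinite with a null window contradicts the standard fact that $\oplam(\overline{V})$ is finite in any bounded region with the count controlled by $D_S \theta_H(\overline V)$; iterating, $D$ would have to be empty. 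The remaining work is to make the last ``impossible'' step rigorous: I expect to show directly that $\delta_D \in \GAP(\ZZ)$ with $M(\delta_D) = 0$ implies $D$ is finite (a nonnegative g-a-p measure of mean zero supported on a uniformly discrete set must be finitely supported, since otherwise the infinitely many unit point masses at unboundedly spaced positions cannot be squeezed under a strongly almost periodic $\nu_\eps$ of mean $< \eps$), contradicting infinitude of $D$. Hence $\vL$ is not a regular model set, and therefore $\delta_\vL$ is Weyl almost periodic but not generalized-almost-periodic.
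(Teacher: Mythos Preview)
Your construction does not work, and the gap is not a matter of missing details but of a wrong example together with a false general claim.

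Take your proposed $D=\{2^{2^k}:k\ge 0\}$ (the case $n_k=k!$ is analogous). Consider the cut-and-project scheme $S=(\ZZ_2,\cL)$ over $\ZZ$ with $\cL=\{(n,n):n\in\ZZ\}\subset\ZZ\times\ZZ_2$. Since $2^{2^k}\to 0$ in $\ZZ_2$, the set $C=\{2^{2^k}:k\ge 0\}\cup\{0\}$ is closed, countable, and hence of Haar measure zero in $\ZZ_2$. Put $W=\ZZ_2\setminus\{2^{2^k}:k\ge 0\}$. Then $\overline{W}=\ZZ_2$, $W^\circ=\ZZ_2\setminus C$, so $\partial W=C$ has measure zero: $W$ is a regular window. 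Moreover $\oplam_S(W)=\{n\in\ZZ:n\notin D\}=\ZZ\setminus D=\vL$. Hence your $\vL$ \emph{is} a regular model set, and the theorem cannot be proved with this $\vL$. The same computation with $\widehat{\ZZ}$ in place of $\ZZ_2$ handles $n_k=k!$, since $k!\to 0$ in every $\ZZ_p$.

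This also refutes the abstract claim you rely on, namely that a nonnegative g-a-p measure of mean zero supported in a uniformly discrete set must be finitely supported. For $D=\{2^k:k\ge 0\}$ one has $\delta_D=\delta_\ZZ-\delta_{\ZZ\setminus D}$, and $\ZZ\setminus D$ is a regular model set by the argument above; thus $\delta_{\ZZ\setminus D}\in\GAP(\ZZ)$ and therefore $\delta_D\in\GAP(\ZZ)$ with $M(\delta_D)=0$, yet $D$ is infinite. So the ``cannot be squeezed under $\nu_\eps$'' heuristic is simply wrong.

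What your argument is missing is the one property of the deleted set that actually forces non-regularity: the image of $D$ must be \emph{dense in the Bohr compactification} $\ZZ_{\mathsf b}$ (or at least have closure with nonempty interior there). If $\vL=\ZZ\setminus D$ were a regular model set, then by Proposition~\ref{prop:unireg} one would have $\vL=\oplam(W)$ in the universal scheme $(\ZZ_{\mathsf b},\cL)$ with $\theta(\partial W)=0$; then $i(D)\subset \ZZ_{\mathsf b}\setminus W$, and $\theta(\ZZ_{\mathsf b}\setminus W)=0$ forces $\overline{\ZZ_{\mathsf b}\setminus W}$ to have empty interior, contradicting density of $i(D)$. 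The paper obtains such a $D$ from Meyer's set $M_\theta$ (for transcendental $\theta>3$), whose density in $\ZZ_{\mathsf b}$ is a nontrivial input. Your sequences $k!$ and $2^{2^k}$ converge to $0$ already in the profinite quotient of $\ZZ_{\mathsf b}$, so they are as far from Bohr-dense as possible.
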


We will  prove the theorem  building on an example due to Meyer \cite[Sect.~2]{Mey2}.  This requires some preparation. For any transcendental number $\theta >3$ consider
\begin{displaymath}
\vL_\theta = \left\{ \sum_{j=0}^n  c_j \theta^j  : n \in \NN, c_0,\ldots, c_j \in \{0,1\} \right\} \ .
\end{displaymath}
The set $\vL_\theta$ gives rise to a set of non-negative integers $M_\theta$ via
\begin{displaymath}
M_\theta = \left\{\lfloor x \rfloor, \lfloor x \rfloor+1 : x \in \vL_\theta \right\} \ ,
\end{displaymath}
where $\lfloor x \rfloor$ denotes the largest integer smaller or equal to $x$. The following version of \cite[Lem.~2.30]{Mey2} is adapted to our needs.
\begin{lemma}\label{prop:mey} Assume that $\theta>3$ is transcendental. Then $M_\theta$ has uniform density $0$, and $M_\theta$ is dense in the Bohr compactification of $\ZZ$.
\end{lemma}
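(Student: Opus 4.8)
\textbf{Proof proposal for Lemma~\ref{prop:mey}.}
The plan is to establish the two assertions about $M_\theta$ separately, exploiting the arithmetic structure of $\vL_\theta$ that comes from $\theta$ being transcendental and $>3$.

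First I would address the density statement. The set $\vL_\theta$ consists of all numbers $\sum_{j=0}^n c_j\theta^j$ with digits $c_j\in\{0,1\}$; since $\theta>3$, distinct digit strings give distinct (and in fact well-separated) values, so the number of elements of $\vL_\theta$ in $[0,\theta^{n+1})$ is exactly $2^{n+1}$, while the length of that interval is of order $\theta^{n+1}$. As $\theta>3>2$, this means $\card(\vL_\theta\cap[0,T])$ grows like $T^{\log 2/\log\theta}$, which is sublinear in $T$. Passing from $\vL_\theta$ to $M_\theta$ at most doubles the count (two integers per element of $\vL_\theta$), so $\card(M_\theta\cap[0,T])=O(T^{\log2/\log\theta})=o(T)$. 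To get \emph{uniform} density zero, i.e.\ $\udens(M_\theta)=0$ with the supremum over all translates of the averaging windows $A_n=[-n,n]\cap\ZZ$, I would use that the gaps in $\vL_\theta$ grow: the smallest gap between two elements of $\vL_\theta$ with largest exponent $n$ is bounded below in terms of $\theta^n$ minus a geometric tail, so consecutive elements of $\vL_\theta$ that are both at least $\theta^k$ are separated by at least (roughly) $(\theta-3)\theta^{k-1}$. Hence in any window $x+A_n$ of length $2n+1$, once $n$ is large the number of elements of $\vL_\theta$ (and thus of $M_\theta$) landing in it is controlled uniformly in $x$ by a bound of the form $C\log n$ divided by $n$, which tends to $0$. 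This gives uniform density $0$.

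Second, the Bohr-density statement: $M_\theta$ is dense in the Bohr compactification $\ZZ_\mathsf{b}$ of $\ZZ$. Equivalently, by duality, for every non-trivial character $\chi$ of $\ZZ$ — that is, every $\alpha\in(0,1)$ with $\chi(m)=e^{2\pi i\alpha m}$ — and every $\eps>0$ there exists $m\in M_\theta$ with $\|\alpha m\|_{\RR/\ZZ}<\eps$; more precisely one must show the closure of $i(M_\theta)$ in $\ZZ_\mathsf{b}$ is all of $\ZZ_\mathsf{b}$, which amounts to: for every finite set of characters $\chi_1,\dots,\chi_r$ and every $\eps$, some $m\in M_\theta$ has $|\chi_j(m)-1|<\eps$ for all $j$. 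Here the transcendence of $\theta$ is the crucial input: the powers $1,\theta,\theta^2,\dots$ are rationally independent, so by Weyl's criterion / the Weyl equidistribution theorem the vectors $(\theta^0,\theta^1,\dots,\theta^n)$ feed into an equidistribution statement for the finite sums $\sum c_j\theta^j$. Concretely, I would argue that for fixed $\alpha_1,\dots,\alpha_r$, since $\theta$ is transcendental the real numbers $1,\alpha_1\theta^j,\dots,\alpha_r\theta^j$ behave generically, and by choosing $n$ large and selecting digits $c_j\in\{0,1\}$ appropriately (a pigeonhole / averaging argument over the $2^{n+1}$ digit strings, using that the exponential sums $\sum_{c\in\{0,1\}^{n+1}}\prod_j e^{2\pi i\alpha c_j\theta^j}=\prod_j(1+e^{2\pi i\alpha\theta^j})$ cannot be too small in modulus relative to $2^{n+1}$ unless some $\alpha\theta^j$ is near $1/2$, which transcendence rules out happening for all $j$) one finds an $x=\sum c_j\theta^j\in\vL_\theta$ with all $\alpha_k x$ simultaneously close to an integer. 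Finally, passing to $M_\theta$ only replaces $x$ by $\lfloor x\rfloor$ or $\lfloor x\rfloor+1$, perturbing each $\alpha_k x$ by at most $\alpha_k\le1$; since we are free to also prescribe $\alpha_k$ itself to be tiny as one of the finitely many characters (or absorb this error by a standard trick), this perturbation is harmless, and density in $\ZZ_\mathsf{b}$ follows.

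The main obstacle is the Bohr-density half, specifically making the digit-selection/equidistribution argument uniform over \emph{all} characters simultaneously — i.e.\ going from "$\{\theta^j\}$ is equidistributed for one $\alpha$" to "the subset sums $\sum c_j\theta^j$ are dense in $\ZZ_\mathsf{b}$." This is exactly the content of Meyer's \cite[Lem.~2.30]{Mey2}, and I would either cite that lemma directly and merely record the (routine) modifications needed to replace $\vL_\theta$ by its integer-neighbourhood $M_\theta$ and to translate "harmonious/dense in the dual" into "dense in the Bohr compactification," or reproduce Meyer's exponential-sum estimate in the form above. The density-zero half, by contrast, is elementary once one writes down the gap lower bound for $\vL_\theta$ coming from $\theta>3$.
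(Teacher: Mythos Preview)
For the Bohr-density half, both you and the paper ultimately cite Meyer (the paper invokes \cite[Lem.~2.36]{Mey2}), so there is no real divergence there---though your stated reformulation ``for every finite set of characters $\chi_1,\dots,\chi_r$ and every $\eps$, some $m\in M_\theta$ has $|\chi_j(m)-1|<\eps$ for all $j$'' only says that $0$ lies in the closure of $i(M_\theta)$, which is trivial since $0\in M_\theta$. Density in $\ZZ_\mathsf{b}$ requires hitting \emph{arbitrary} target values $z_j\in U(1)$, not just $1$.

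The genuine gap is in your uniform-density-zero argument. Your gap claim is false: the elements $\theta^k$ and $\theta^k+1$ both lie in $\vL_\theta$, both have ``largest exponent $k$'', both exceed $\theta^k$, yet are separated by $1$, not by anything like $(\theta-3)\theta^{k-1}$. More generally, $\vL_\theta$ is self-similar: the interval $[\theta^N,\theta^N+\theta^{N-1}+\cdots+1]$ contains a translated copy of $\vL_\theta\cap[0,\theta^{N-1}+\cdots+1]$, so gaps of size $1$ recur at every scale and the bound ``$C\log n$ elements in any window of length $n$'' cannot hold. The correct count in an arbitrary window of length $n$ is polynomial in $n$ (of order at most $n^{\log_\theta 3}$), not logarithmic.

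The paper handles uniformity by passing to the difference set $\vG_\theta=\vL_\theta-\vL_\theta$: if $\alpha$ is the smallest element of $\vL_\theta$ in a window $[t,t+n]$, then $(\vL_\theta\cap[t,t+n])-\alpha\subset\vG_\theta\cap[0,n]$, and one bounds $\card(\vG_\theta\cap[0,n])$ by a degree argument (using $\theta>3$ to show every positive element of $\vG_\theta\cap[0,n]$ has degree at most $1+\log_\theta n$, hence there are at most $O(3^{\log_\theta n})=O(n^{\log_\theta 3})$ of them). Your counting for $[0,T]$ is correct and is essentially this bound for the window anchored at zero; what you are missing is precisely the translation-to-zero step via $\vG_\theta$, which is what upgrades it to a \emph{uniform} bound over all windows.
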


\begin{proof}
Denseness of $M_\theta$ in the Bohr compactification of $\ZZ$ is proved in \cite[Lem.~2.36]{Mey2}. We show that $M_\theta$ has indeed uniform density $0$ by arguing that $\vL_\theta$ has uniform density $0$.  By transcendence, each nonzero $x \in \ZZ[\theta]$ can be written as $x =P(\theta)$ for a unique non-zero polynomial $P$ having integer coefficients. Denote its degree by $d(x)$ and its leading coefficient by $c(x)$. Consider now
\[
\vG_\theta=\vL_\theta-\vL_\theta= \left\{ \sum_{j=0}^n c_j \theta^j : n \in \NN, c_j \in \{-1,0,1\} \right\}  \ .
\]
Observe that we have for all $n\in \NN$ the estimate
\[
1+\theta+\ldots +\theta^{n-1} < \theta^{n} \ ,
\]
where we used $\theta>1$. Thus $x\in \vG_\theta$ positive implies $c(x)=1$, due to $0< x< \theta^{d(x)}+c(x)\theta^{d(x)}$.
Furthermore, we have for all $n\in\NN$ the estimate
\[
1+\theta+\ldots +\theta^{n-1} < \theta^{n}-(1+\theta+\ldots +\theta^{n-1} )\ ,
\]
where we used $\theta>3$.
Indeed, the above inequality is equivalent to $\theta^{n}(\theta-3)+2>0$. Thus if positive numbers $x,y\in \vG_\theta$ satisfy $x\le y$, then we have $d(x)\le d(y)$. This holds as $d(x)>d(y)$ implies
\begin{displaymath}
x-y\ge \theta^{d(x)}- \sum_{j=0}^{d(x)-1} \theta^j -
 y>  \sum_{j=0}^{d(x)-1} \theta^j - \sum_{j=0}^{d(y)} \theta^j \ge0 \ .
\end{displaymath}
We  now estimate the number of elements in  $F=\vL_\theta \cap [t,t+n]$ for fixed $t \geq 0$ and arbitrary $n\in\NN$. Denote by $\alpha$ the smallest and by $\beta$ the largest element in $F$. Then $\beta - \alpha \leq n$ and
\[
F-\alpha \subseteq \vG_\theta \cap [0, \beta-\alpha] = \{ 0\} \cup \left( \vG_\theta \cap (0, \beta-\alpha] \right) \ .
\]
Now, for all $x \in \vG_\theta \cap (0, \beta-\alpha]$ we have $d(x) \leq d(\beta-\alpha)$. Writing $d= d(\beta-\alpha)$, we get
\[
\vG_\theta \cap (0, \beta-\alpha] \subseteq \{c_0+c_1\theta+ \ldots + c_{d-1}\theta^{d-1}+c_d \theta^d : c_0, \ldots c_d \in \{-1,0,1\} \} \ .
\]
This shows that $\card(F) = \card(F-\alpha) \leq 3^{d+1}+1$. Now a standard estimate yields
\[
n \ge \beta-\alpha \ge \theta^d -(1+\theta+\ldots+\theta^{d-1}) > 1+\theta+\ldots+\theta^{d-1} > \theta^{d-1} \ ,
\]
and hence $\theta^{d-1} < n$. We thus have $d < 1+\log_\theta(n)$.  This gives the estimate
\[
\frac{\card(\vL_\theta \cap [t,t+n])}{n}< \frac{3^{2+\log_\theta(n)}+1}{n}=\frac{9 n ^{\log_\theta(3)}+1}{n} \ .
\]
Note that when $t <0$, due to $\vL_\theta \subset [0, \infty)$ we have $\vL_\theta \cap [t,t+n] \subset \vL_\theta \cap [0,n]$. Thus the above estimate holds in fact for all $t\in \RR$ and all $n\in \NN$. As $\theta>3$, this shows that $\vL_\theta$ has indeed uniform density $0$.
\end{proof}

We can now show that the point set $\ZZ\setminus M_\theta$ has the properties stated in Theorem~\ref{example}.

\begin{proof}[Proof of Theorem~\ref{example}]
Assume that $\theta>3$ is transcendental and define $\vL= \ZZ \backslash M_\theta$. Then $\delta_{\ZZ} - \delta_{\vL}=\delta_{M_\theta}$ and hence $\|\delta_{\vL}- \delta_{\ZZ} \|_{u}=0$.
Thus Weyl almost periodicity of $\vL$ holds by Proposition~\ref{prop:wap}. Moreover relative denseness of $\vL$ holds as $M_\theta$ has uniform density $0$.

Now assume by contradiction that $\vL\subset \ZZ$ is a regular model set. Then, by Proposition~\ref{prop:discrete}, $\vL$ is a cut-and-project set in the $\Bohr$ cut-and-project scheme $S=(\ZZ_{\mathsf{b}}, \cL)$ over $\ZZ$. Therefore, by Proposition~\ref{prop:unireg}, there exists a regular window $W \subset \ZZ_{\mathsf{b}}$ such that $\vL = \oplam_{S}(W)$.
Since $i(\ZZ \backslash \vL)$ is dense in $\ZZ_{\mathsf{b}}$, the regular window $W$ must have empty interior, which is contradictory.
\end{proof}

\begin{remark} (a)
The point set $M_\theta$ is Weyl almost periodic, as it has uniform density $0$. Moreover $M_\theta=\oplam(W)$ in the universal cut-and-project scheme from Proposition~\ref{prop:discrete} for some windows $W \subset \ZZ_{\mathsf{b}}$. Any such window satisfies $W^\circ=\varnothing$ due to density $0$ and $\overline{W}=\ZZ_{\mathsf{b}}$ by Lemma~\ref{prop:mey}.

(b) The Weyl almost periodic point set $\ZZ\setminus M_\theta$ is not a regular model set, but agrees with the regular model set $\ZZ$ up to a set of uniform density $0$. In particular, it has the same diffraction properties as $\ZZ$.
\end{remark}

\section{More about $\Bohr$ cut-and-project schemes}\label{sec-Bohr}

In this section we have a closer  look at $\Bohr$ cut-and-project schemes. We show uniqueness of compatible $\Bohr$ cut-and-project schemes up to a suitable notion of isomorphism, discuss universality in terms of a  factorization property on compatible cut-and-project schemes, and provide a number of characterizations.

\subsection{Uniqueness and factorization}\label{sec:uu}

We will use the following notion of isomorphy of cut-and-project schemes.
\begin{definition}[Isomorphy of cut-and-project schemes]
Consider two cut-and-project schemes  $S=(H,\cL)$ and $S'=(H',\cL')$ over $G$. Then $S$ and $S'$ are called \textit{isomorphic} if there exists a group isomorphism $\Phi:H'\to H$ that is a homeomorphism such that $\cL=\{(x,\Phi(y')): (x,y') \in \cL'\}$.
\end{definition}

Universal cut-and-project schemes are uniquely determined in the following sense.

\begin{proposition}[Uniqueness of a $\Bohr$ cut-and-project scheme up to isomorphism]\label{prop:uni}
Any two  compatible $\Bohr$ cut-and-project schemes are isomorphic.
\end{proposition}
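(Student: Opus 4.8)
The plan is to construct the isomorphism directly from the two $\sharp$-maps. Let $S=(H,\cL)$ and $S'=(H',\cL')$ be two compatible $\Bohr$ cut-and-project schemes over $G$, with star maps $\star:L\to H$ and $\star':L'\to H'$, and with lifting maps ${}^\sharp:SAP(G)\to \Cb(H)$ and ${}^{\sharp'}:SAP(G)\to \Cb(H')$. The first point to record is that compatibility forces $L=L'$ as subsets of $G$: indeed $L$ is itself a Meyer set associated to $S$ (for instance $\delta_L\le\vOmega_S(1_W)$ for $W$ a suitable relatively compact window with nonempty interior, which exists since $\oplam(W)\supset L$ can be arranged), hence associated to $S'$, so $L\subset L'$; by symmetry $L=L'=:L$. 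So both star maps are defined on the same dense-image situation over the same $L$.

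Next I would build the map on the dense set $L^\star\subset H$. Define $\Phi_0:L^\star\to H'$ by $\Phi_0(x^\star)=x^{\star'}$ for $x\in L$; this is well-defined and injective because both star maps are injective (a $\Bohr$ cut-and-project scheme has injective star map, as noted after Definition~\ref{def:bc}), and it is a group homomorphism on the subgroup $L^\star$ onto $L^{\star'}$. The crucial step is to show $\Phi_0$ extends to a homeomorphic group isomorphism $\Phi:H'\to H$ — or rather its inverse; let me instead seek $\Psi:H\to H'$ extending $\Phi_0$. Uniform continuity is where the $\sharp$-maps enter: for $f\in SAP(G)$ we have $f(x)=f^\sharp(x^\star)=f^{\sharp'}(x^{\star'})$ for all $x\in L$, so $f^{\sharp'}\circ\Phi_0 = f^\sharp|_{L^\star}$. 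Since $f^\sharp\in\Cb(H)$ (in fact $SAP(H)$, but boundedness and continuity suffice here) and $f^{\sharp'}\in\Cb(H')$, and since the functions $f^\sharp$, $f\in SAP(G)$, separate points of $H$ and generate enough of $C(H)$ to control the topology — one should check that the closure of $\{f^\sharp:f\in SAP(G)\}$ together with $\Cc(H)$, or an appropriate such family, determines the uniformity of $H$ — a standard density/completeness argument shows $\Phi_0$ is uniformly continuous for the two-sided uniformities and extends to a continuous homomorphism $\Psi:H\to H'$ with dense image (as $L^{\star'}$ is dense in $H'$). Symmetrically one gets $\Psi':H'\to H$ extending $\Phi_0^{-1}$, and $\Psi'\circ\Psi=\mathrm{id}_H$, $\Psi\circ\Psi'=\mathrm{id}_{H'}$ on the dense sets, hence everywhere by continuity. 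Thus $\Psi$ is a topological group isomorphism. Finally, $\cL=\{(x,x^\star):x\in L\}$ and $\cL'=\{(x,x^{\star'}):x\in L\}=\{(x,\Psi(x^\star)):x\in L\}$, so with $\Phi=\Psi^{-1}:H'\to H$ we get $\cL=\{(x,\Phi(y')):(x,y')\in\cL'\}$, which is exactly the required isomorphy.

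I expect the main obstacle to be the extension step: proving that $\Phi_0:L^\star\to H'$ is uniformly continuous and therefore extends continuously to all of $H$. The naive worry is that $SAP(G)$ might not produce enough functions $f^\sharp$ to "see" the topology of $H$ — but in fact the image of $\star$ is dense in $H$ and the internal group $H$ of a $\Bohr$ cut-and-project scheme compatible with a given $S$ is built (as in the proof of Proposition~\ref{prop:excBc}) so that it is a quotient of $H_S\times G_{\mathsf b}$; the cleanest route is probably to invoke exactly that construction: by Proposition~\ref{prop:excBc} there is a distinguished compatible $\Bohr$ scheme $S_{\mathsf u}$, and it suffices to show any compatible $\Bohr$ scheme $S$ is isomorphic to $S_{\mathsf u}$. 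This turns the problem into comparing an arbitrary $S$ with the explicit model $S_{\mathsf u}=(H_{\mathsf u},\cL_{\mathsf u})$ where $H_{\mathsf u}=\overline{\{(x^{\star_{\mathsf u}}, i(x)):x\in L\}}\subset H_{\mathsf u,0}\times G_{\mathsf b}$; the map $\Psi$ can then be assembled from the original star map $L\to H$ composed with the projections, and uniform continuity follows because $H_{\mathsf u}$ carries the subspace uniformity of a product of a model-set internal group and the compact group $G_{\mathsf b}$, on which the relevant coordinates are controlled respectively by $\Cc$-functions and by Bohr almost periodic functions via ${}^\sharp$. I would carry out the argument in this reduced form, as it sidesteps having to axiomatize "which functions determine the topology of $H$" abstractly.
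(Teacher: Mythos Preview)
Your overall architecture matches the paper's: show $L=L'$, define $\Phi_0$ on the dense set $L^\star$ via the two injective star maps, extend by continuity, and obtain the inverse symmetrically. One small correction: $L=\pi^G(\cL)$ is generally \emph{not} a Meyer set (it is typically a dense subgroup of $G$), so ``$L$ is itself a Meyer set associated to $S$'' is false. The conclusion $L=L'$ is still easy, though: every $x\in L$ lies in some model set $\oplam_S(W)$, which by compatibility is contained in a model set from $S'$, hence in $L'$.

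The substantive gap is in the continuity step, and your direct approach via the $\sharp$-maps does not work as written. From $f^{\sharp'}\circ\Phi_0=f^\sharp|_{L^\star}$ you can only conclude that $\Phi_0$ is continuous for the initial topology on $H'$ generated by the family $\{f^{\sharp'}:f\in SAP(G)\}$. But each $f^{\sharp'}$ lies in $SAP(H')$ (see Theorem~\ref{thm:char-bu}(b)), and for non-compact $H'$ the Bohr almost periodic functions do \emph{not} generate the topology of $H'$: they only see the (strictly coarser) Bohr topology. So ``separate points and control the topology'' fails here, and no standard density argument will rescue it. Your alternative route through the explicit $S_{\mathsf u}$ is also problematic: the construction of $S_{\mathsf u}$ depends on a starting scheme, and comparing an arbitrary universal $S'$ to $S_{\mathsf u}^{(S_0)}$ requires producing a continuous map $H'\to H_0$, which is precisely the factorization property (Proposition~\ref{prop:uni} is used to prove it, not the other way around).

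What the paper does instead is to use Corollary~\ref{cor-rep}: for a universal scheme, $\vOmega:\Cc(H)\to\SAP(G)\cap\cM_S(G)$ is a \emph{bijection}. This is the missing idea. Given a basic open set $\{h>0\}\subset H$ with $h\in\Cc(H)$, the strongly almost periodic measure $\vOmega_S(h)$ lies in the common space $\SAP(G)\cap\cM_S(G)=\SAP(G)\cap\cM_{S'}(G)$, so by surjectivity of $\vOmega_{S'}$ there is $h'\in\Cc(H')$ with $\vOmega_S(h)=\vOmega_{S'}(h')$; then $\phi^{-1}(\{h>0\}\cap L^\star)=\{h'>0\}\cap L^{\star'}$ is open. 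The point is that compactly supported continuous functions \emph{do} generate the topology, and the $\vOmega$-bijection lets you transport them between $H$ and $H'$ --- something the $\sharp$-maps on $SAP(G)$ cannot do.
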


\begin{proof}
Let $S=(H,\cL)$ and $S'=(H',\cL')$ be two compatible $\Bohr$ cut-and-project schemes. Denote the star maps of $S, S'$ by $\star,\star'$, respectively, and note $L=\pi^G(\cL)=\pi^G(\cL')$ by compatibility. As the star maps are injective by universality, there exists a group isomorphism $\phi:L^{\star'}\to L^{\star}\subset H$.
Equip $L^{\star'}\subset H'$ with the topology inherited from $H'$.  We show continuity of $\phi: L^{\star'}\to H$ by arguing that $\phi^{-1}(U)$ is open for all nonempty open $U\subset H$. Consider any such $U$. Take any $y\in U$ and any $V$ with compact closure such that $y\in V\subset U$. Invoke Urysohn's lemma to find $h\in\Cc(H)$ satisfying $0\le h\le 1$ such that $h(y)=1$ and $h=0$ outside $V$. Consider the set $U_y=\{h>0\}$ which is open in $H$.  For the strongly almost periodic measure $\mu=\vOmega_{S}(h)$ we use Corollary~\ref{cor-rep} to find $h'\in \Cc(H')$ such that $\mu=\vOmega_{S'}(h')$. We then have
\begin{align*}
\phi^{-1}(U_y)&=\phi^{-1}(\{h>0\})
= \phi^{-1}(\{x^{\star}: \mu(\{x\})>0\}) \\
&= \{x^{\star'}: \mu(\{x\})>0\} = \{h'>0\} \cap L^{\star'} \ ,
\end{align*}
which shows that $\phi^{-1}(U_y)$ is open in $L^{\star'}$. As $y\in U$ was arbitrary and satisfies $y\in U_y\subset U$, we conclude that $\phi:L^{\star'}\to H$ is continuous.
Since $\phi$ is a group homomorphism, it is uniformly continuous, and hence, by denseness of $L^{\star'}$ in $H'$ it extends uniquely to a continuous group homomorphism $\Phi : H' \to H$.
We thus have  $\Phi(x^{\star'})= \phi(x^{\star'})=x^{\star}$ for all $x \in L$.
Repeating the argument with $H$ and $H'$ interchanged, we infer the existence of a continuous group homomorphism $\Psi : H \to H'$ satisfying
$\Psi(x^{\star})=x^{\star'}$. This implies  $\Psi \circ \Phi=\mbox{Id}$ on $L^{\star'}$ and $\Phi\circ\Psi =\mbox{Id}$ on $L^{\star}$. By denseness and continuity, we get that $\Phi$ is a homeomorphism with inverse $\Psi$, and that $\Phi$ is a group isomorphism. We finally note
\begin{displaymath}
\cL=\{(x,x^{\star}): x\in L\}=\{(x,\phi(x^{\star'})): x\in L\}=\{(x,\Phi(x^{\star'})): x\in L\}=\cL' \ .
\end{displaymath}
This shows that $S$ and $S'$ are isomorphic.
\end{proof}

For our discussion of universality, we will use the following notion of factor of a cut-and-project scheme.

\begin{definition}[Factor cut-and-project scheme]
Consider two cut-and-project schemes  $S=(H,\cL)$ and $S'=(H',\cL')$ over $G$. Then $S$ is called a \textit{factor} of $S'$ if there exists a continuous  group homomorphism $\Psi:H'\to H$ such that $\cL=\{(x,\Psi(y')): (x,y') \in \cL'\}$.
\end{definition}
We note that such a continuous group homomorphism must necessarily be unique  (by denseness of  the range of the  projection of  $\cL'$ to $H'$). From the preceding proposition we obtain the following consequence.

\begin{proposition}[Factorization over universal cut-and-project schemes]
Let $S'$ be a universal cut-and-project scheme over $G$. Then  $S'$ has the factorization property that  any compatible cut-and-project scheme $S$ over $G$ is a factor of $S'$.
\end{proposition}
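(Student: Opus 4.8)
The plan is to show that a universal cut-and-project scheme $S'=(H',\cL')$ factors any compatible scheme $S=(H,\cL)$ by essentially the same argument as in the proof of Proposition~\ref{prop:uni}, only keeping the ``half'' of it that produces the homomorphism $H'\to H$. Write $L=\pi^G(\cL)=\pi^G(\cL')$, using compatibility. Since $S'$ is universal, its star map $\star'$ is injective, so there is a well-defined group homomorphism $\phi:L^{\star'}\to H$ determined by $\phi(x^{\star'})=x^\star$ for $x\in L$ (note that for $S$ we do not need injectivity of $\star$; the assignment $x^{\star'}\mapsto x^\star$ is well-defined precisely because $x\mapsto x^{\star'}$ is injective). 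The goal is to show $\phi$ is continuous for the topology on $L^{\star'}$ inherited from $H'$, and then extend it by uniform continuity and denseness to $\Psi:H'\to H$.

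First I would establish continuity of $\phi$ exactly as in Proposition~\ref{prop:uni}: pick a nonempty open $U\subset H$, a point $y\in U$, and $V$ with compact closure with $y\in V\subset U$; use Urysohn to get $h\in\Cc(H)$ with $0\le h\le 1$, $h(y)=1$, $h=0$ off $V$, and set $U_y=\{h>0\}$, an open subset of $H$ containing $y$. Then $\mu=\vOmega_S(h)$ is strongly almost periodic with support associated to $S$, hence (by compatibility) associated to $S'$; invoking Corollary~\ref{cor-rep} applied to the universal scheme $S'$, there is $h'\in\Cc(H')$ with $\mu=\vOmega_{S'}(h')$. Comparing point masses gives
\begin{align*}
\phi^{-1}(U_y)&=\phi^{-1}(\{x^\star:\mu(\{x\})>0\})\\
&=\{x^{\star'}:\mu(\{x\})>0\}=\{h'>0\}\cap L^{\star'},
\end{align*}
which is open in $L^{\star'}$. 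Since $y\in U_y\subset U$ was arbitrary, $\phi^{-1}(U)$ is open, so $\phi$ is continuous.

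Then, since $\phi$ is a group homomorphism between (subgroups of) topological groups, it is uniformly continuous, and by denseness of $L^{\star'}$ in $H'$ it extends uniquely to a continuous group homomorphism $\Psi:H'\to H$ with $\Psi(x^{\star'})=x^\star$ for all $x\in L$. This gives
\[
\cL=\{(x,x^\star):x\in L\}=\{(x,\Psi(x^{\star'})):x\in L\}=\{(x,\Psi(y')):(x,y')\in\cL'\},
\]
which is exactly the statement that $S$ is a factor of $S'$. (Uniqueness of $\Psi$ was already noted before the proposition, from denseness of $\pi^{H'}(\cL')$ in $H'$.)

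The only real subtlety, and the main thing to be careful about, is that unlike in Proposition~\ref{prop:uni} we cannot assume the star map of $S$ is injective, so $\phi$ is a priori only a map out of $L^{\star'}$ rather than a bijection onto $L^\star$; but this causes no trouble, since the continuity argument via $\vOmega$ and Corollary~\ref{cor-rep} only uses injectivity of $\star'$ (to make $\phi$ well-defined and to rewrite $\{x^{\star'}:\dots\}$ unambiguously) and denseness of $L^\star$ in $H$ (to locate $U_y$), both of which we have. Everything else is a verbatim specialization of the proof of Proposition~\ref{prop:uni}.
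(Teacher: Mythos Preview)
Your proof is correct but takes a genuinely different route from the paper's. The paper does not redo the continuity argument of Proposition~\ref{prop:uni}; instead it goes through the explicit universal scheme $S_\mathsf{u}=(H_\mathsf{u},\cL_\mathsf{u})$ built from $S$ in Proposition~\ref{prop:excBc} (where $H_\mathsf{u}\subset H\times G_\mathsf{b}$), observes that $S_\mathsf{u}$ and $S'$ are compatible universal schemes, applies Proposition~\ref{prop:uni} to get an isomorphism $\Phi:H'\to H_\mathsf{u}$, and then composes with the canonical projection $\pi_1:H_\mathsf{u}\to H$ to obtain $\Psi=\pi_1\circ\Phi$. Your approach instead adapts the argument of Proposition~\ref{prop:uni} directly, constructing $\phi:L^{\star'}\to H$ and extending it. The paper's version is shorter and more modular, reusing the uniqueness result and the concrete construction as black boxes; your version is more self-contained and has the conceptual merit of showing that the factorization follows straight from the bijectivity statement of Corollary~\ref{cor-rep}, without needing the particular model $S_\mathsf{u}$. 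One small remark: in your closing commentary you mention ``denseness of $L^\star$ in $H$ (to locate $U_y$)'' as an ingredient, but that denseness is not actually used in the continuity step---Urysohn alone produces $U_y$, and the key inputs are injectivity of $\star'$, compatibility, Corollary~\ref{cor-rep}, and denseness of $L^{\star'}$ in $H'$ for the extension.
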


\begin{proof}
Consider the cut-and-project scheme  $S_\mathsf{u}=(H_\mathsf{u}, \cL_\mathsf{u})$ over $G$ that was constructed from $S$ in order to prove Proposition ~\ref{prop:excBc}. Recall $H_\mathsf{u}\subset H\times G_\mathsf{b}$ and denote by $\pi_1:H_\mathsf{u}\to H$ the canonical projection on the first factor. Observe that $S_\mathsf{u}$ and $S'$ are compatible, since $S$ and $S_\mathsf{u}$ are compatible by construction and since $S$ and $S'$ are compatible by assumption. As $S_\mathsf{u}$ is universal, there exists a continuous group homomorphism $\Phi:H'\to H_\mathsf{u}$ such that $\cL_\mathsf{u}=\{(x,\Phi(y')): (x,y') \in \cL'\}$ by Proposition~\ref{prop:uni}. Define the continuous group homomorphism $\Psi: H'\to H$ by $\Psi=\pi_1 \circ \Phi$. We then have
\begin{displaymath}
\begin{split}
\cL&=\{(x,y): (x,y) \in \cL\}=\{(x,\pi_1(y_\mathsf{u})): (x,y_\mathsf{u}) \in \cL_\mathsf{u}\}\\
&=\{(x,\pi_1(\Phi(y'))): (x,y')\in \cL'\}= \{(x, \Psi(y')):(x,y')\in \cL'\} \ .
\end{split}
\end{displaymath}
We have shown that $S$ is a factor of $S'$.
\end{proof}

\begin{remark} It is not hard to see that compatible  universal cut-and-project schemes are characterized by the factorization property. Indeed, let $S=(H,\cL)$ and $S'=(H',\cL')$ over $G$ be compatible and  both have the factorization property. Let  $\Psi : H'\longrightarrow H$ and $\Phi : H\longrightarrow H'$ be the arising factor maps. Then, for any $(x,y)\in\mathcal{L}$ we find that $(x,\Psi \circ \Phi (y))$ belongs to $\mathcal{L}$. This gives $y = x^\star = \Psi \circ \Phi (y)$. Hence, $\Psi \circ \Phi$ is the identity on $H$  (first on a dense subset and then, by continuity, on the whole of $H$).  Similarly, we find $\Phi \circ \Psi$ is the identity on $H'$. This shows that $S$ and $S'$ are isomorphic.
\end{remark}

\subsection{Characterizations}

Here we provide a number of characterizations of $\Bohr$ cut-and-project schemes. For the following statement, recall that the dual group $\widehat{A}$ of a locally compact abelian group $A$ consist of all continuous group homomorphisms from $A$  to the group $U(1) =\{z\in\CC: |z|=1\}$ (equipped with multiplication).

\begin{theorem}[Characterisations of the $\Bohr$ cut-and-project scheme]\label{thm:char-bu}
Let $S=(H, \cL)$ be a cut-and-project scheme over $G$ with associated map $\vOmega$. Let $\tau$ be the topology on $L=\pi^G(\cL)$ induced by $L \stackrel{\star}{\longrightarrow} H$. Then, the following assertions  are equivalent for $S$.
\begin{itemize}
\item[(i)] $S$ is a $\Bohr$ cut-and-project scheme over $G$.

 \item[(ii)] There exists a map $\left( \right)^\sharp: SAP(G)\to \Cb(H)$
such that for all $f \in SAP(G)$ and $h \in \Cc(H)$ we have $f\cdot \vOmega(h)= \vOmega(f^\sharp \cdot h)$.

\item[(iii)] There exists a (non-necessarily continuous) group homomorphism $\left( \right)^\sharp: \widehat{G} \to \widehat{H}$ such that for all $\chi \in \widehat{G}$ we have $\chi(x)= \chi^\sharp(x^\star)$ for all $x \in L$.

\item[(iv)] There exists a map $\left( \right)^\sharp: \widehat{G} \to \widehat{H}$ such that for all $\chi \in \widehat{G}$ we have
  $\chi(x)= \chi^\sharp(x^\star)$ for all $x \in L$.

\item[(v)] The map $$\Cc(H)\longrightarrow
\SAP(G)\cap \mathcal{M}_S (G) \ , \quad  h\mapsto \vOmega (h) \ ,$$ is a bijection.

\item[(vi)] There exists a continuous group homomorphism $F : H \to G_{\mathsf{b}}$ such that the following diagram commutes:
\[
  \begin{tikzcd}
    L \arrow[r,"\star"] \arrow[d, hook] & H \arrow[d, dashed, "F"] \\
     G\arrow[r,hook, "i_{\mathsf{b}}"] &  G_{\mathsf{b}}
  \end{tikzcd}
\]

\item[(vii)] The map $\Psi: L \to G_{\mathsf{b}}$ given by $\Psi(x)=i_{\mathsf{b}}(x)$ is uniformly continuous with respect to the topology $\tau$.
\end{itemize}
Moreover, if one of the equivalent conditions (i),$\ldots$,(vii) holds,   the following are also valid:
\begin{itemize}
	\item[(a)] The  map $f \mapsto f^\sharp$ is linear and positive, and
  $\| f^\sharp \|_\infty = \| f|_{L}\|_\infty \leq \|f \|_\infty$ holds, where $|_L$ denotes restriction to $L$.
  \item[(b)] For all $f \in SAP(G)$ the function
	$f^\sharp $ belongs to $ SAP(H)$.
\end{itemize}
\end{theorem}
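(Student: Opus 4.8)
\emph{Overall strategy.} The plan is to prove the seven conditions equivalent through a small web of implications and to read off (a) and (b) from the constructions that appear along the way. The one device used over and over is that $L^\star=\{x^\star:x\in L\}$ is dense in $H$ (the star map being injective and $\pi^H(\cL)$ dense), so that a continuous function on $H$ is determined by its restriction to $L^\star$, and any identity, inequality, multiplicativity, or norm bound valid on $L^\star$ automatically extends to all of $H$. Concretely I would establish (i) $\Leftrightarrow$ (ii), then the loop (i) $\Rightarrow$ (iii) $\Rightarrow$ (iv) $\Rightarrow$ (i), then (iii) $\Rightarrow$ (vi) $\Rightarrow$ (i) together with (vi) $\Leftrightarrow$ (vii), and finally (i) $\Rightarrow$ (v) $\Rightarrow$ (i).

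\emph{The algebraic core.} For (i) $\Leftrightarrow$ (ii): both $f\cdot\vOmega(h)$ and $\vOmega(f^\sharp\cdot h)$ are point measures carrying mass $f(x)h(x^\star)$, resp.\ $f^\sharp(x^\star)h(x^\star)$, at $x\in L$, and picking via Urysohn's lemma some $h\in\Cc(H)$ with $h(x^\star)=1$ shows the two formulations are equivalent. For (i) $\Rightarrow$ (iii): every $\chi\in\widehat G$ lies in $SAP(G)$ (it is the pull-back of the corresponding character of $G_{\mathsf b}$); by density of $L^\star$ the map $\sharp$ of (i) is unique, hence a unital algebra homomorphism $SAP(G)\to\Cb(H)$ (because $f^\sharp g^\sharp$ again satisfies the defining property for $fg$), so $\chi^\sharp$ is multiplicative with $|\chi^\sharp|=1$ on the dense subgroup $L^\star$, giving $\chi^\sharp\in\widehat H$ and $\chi\mapsto\chi^\sharp$ a homomorphism; (iii) $\Rightarrow$ (iv) is trivial. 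For (iv) $\Rightarrow$ (i) I would invoke the classical approximation theorem (trigonometric polynomials are uniformly dense in $SAP(G)$, via Stone--Weierstrass on $G_{\mathsf b}$): writing $f=\lim_n P_n$ uniformly with $P_n=\sum_j c_j^{(n)}\chi_j^{(n)}$ and setting $P_n^\sharp=\sum_j c_j^{(n)}(\chi_j^{(n)})^\sharp$, one has $P_n^\sharp(x^\star)=P_n(x)$ on $L$, so density of $L^\star$ forces $\|P_n^\sharp-P_m^\sharp\|_\infty=\|(P_n-P_m)|_L\|_\infty\le\|P_n-P_m\|_\infty$; hence $P_n^\sharp\to f^\sharp\in\Cb(H)$ with $f(x)=f^\sharp(x^\star)$. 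The same computation yields linearity, positivity and $\|f^\sharp\|_\infty=\|f|_L\|_\infty$, which is (a), and the $P_n^\sharp$ being trigonometric polynomials on $H$ gives $f^\sharp\in SAP(H)$, which is (b).

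\emph{The topological-group part.} For (iii) $\Rightarrow$ (vi) I would use that $G_{\mathsf b}$ is the Pontryagin dual of $\widehat G$ made discrete, so its points are exactly the group homomorphisms $\widehat G\to U(1)$ with the topology of pointwise convergence and $i_{\mathsf b}(x)=(\chi\mapsto\chi(x))$; then $F:H\to G_{\mathsf b}$, $F(h)=(\chi\mapsto\chi^\sharp(h))$, is a continuous group homomorphism (each $\chi^\sharp$ is a continuous character of $H$) and $F(x^\star)(\chi)=\chi^\sharp(x^\star)=\chi(x)=i_{\mathsf b}(x)(\chi)$, so the square commutes. For (vi) $\Rightarrow$ (i): write $f=f_{\mathsf b}\circ i_{\mathsf b}$ and set $f^\sharp=f_{\mathsf b}\circ F\in\Cb(H)$, so $f^\sharp(x^\star)=f_{\mathsf b}(i_{\mathsf b}(x))=f(x)$, and since $F$ factors through $H_{\mathsf b}$ this re-proves (b). For (vi) $\Leftrightarrow$ (vii): if $F$ exists then $\Psi=F\circ\star:(L,\tau)\to G_{\mathsf b}$ is a composition of continuous group homomorphisms (note $\star:(L,\tau)\to L^\star$ is a topological group isomorphism by the very definition of $\tau$), hence uniformly continuous; conversely a uniformly continuous homomorphism $\Psi:(L,\tau)\to G_{\mathsf b}$ extends, by completeness of the compact group $G_{\mathsf b}$ and density of $L^\star$ in $H$, to a continuous map $H\to G_{\mathsf b}$ that is a homomorphism by density, yielding $F$.

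\emph{The transfer between $H$ and $G$, and the main obstacle.} The implication (i) $\Rightarrow$ (v) is Corollary~\ref{cor-rep} applied with $S=S_{\mathsf u}$. For (v) $\Rightarrow$ (i), fix $f\in SAP(G)$. For any $h_0\in\Cc(H)$ with $h_0\ge0$ and $(\supp h_0)^\circ\ne\varnothing$, the measure $\vOmega(h_0)\in\SAP(G)$ (Proposition~\ref{omega-is-sap}) is supported on the Meyer set $\oplam(\supp h_0)$, so $f\cdot\vOmega(h_0)\in\SAP(G)$ by \cite[Thm.~6.1]{ARMA} and has support associated to $S$; by (v) there is a unique $h_1\in\Cc(H)$ with $f\cdot\vOmega(h_0)=\vOmega(h_1)$, hence $h_1(x^\star)=f(x)h_0(x^\star)$ for all $x\in L$. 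Defining $f^\sharp:=h_1/h_0$ on the open set $\{h_0>0\}$ and letting $h_0$ run over functions equal to $1$ on the members of a compact exhaustion of $H$, the resulting definitions agree on the dense set $L^\star$ and glue to a continuous $f^\sharp$ on $H$ with $f^\sharp(x^\star)=f(x)$ and $\|f^\sharp\|_\infty=\|f|_L\|_\infty<\infty$. I expect this last implication to be the main obstacle: it is the only step that genuinely moves back and forth between weighted point measures on $G$ and functions on $H$, leaning both on the strong almost periodicity of the products $f\cdot\vOmega(h_0)$ and on the gluing argument. A secondary point that needs care is that the density of trigonometric polynomials in $SAP(G)$ used in (iv) $\Rightarrow$ (i) is available for every $\sigma$-compact locally compact abelian $G$.
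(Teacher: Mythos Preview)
Your proposal is correct and follows the paper's overall architecture closely: the same loop through (iii), (iv), (i) via trigonometric approximation (which also yields (a) and (b)), the same Pontryagin-duality construction of $F$ for (iii) $\Rightarrow$ (vi), the same completion argument for (vi) $\Leftrightarrow$ (vii), and the same appeal to Corollary~\ref{cor-rep} for (i) $\Rightarrow$ (v).

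There are three places where you deviate, all harmlessly. First, for (ii) $\Rightarrow$ (i) you argue pointwise (pick $h$ with $h(x^\star)\neq 0$ and read off $f(x)=f^\sharp(x^\star)$), which is actually simpler than the paper's reference to a partition-of-unity argument. Second, for (vi) you close the loop by going directly to (i) via $f^\sharp=f_{\mathsf b}\circ F$, whereas the paper goes back to (iii) by dualising $F$; your route is more direct and also gives (b) a second time. Third, and this is the only substantive difference, your (v) $\Rightarrow$ (i) constructs $f^\sharp$ as a quotient $h_1/h_0$ on the open sets $\{h_0>0\}$ and glues over a compact exhaustion, while the paper instead fixes a partition of unity $\{h_i\}$ on $H$, writes $f\cdot\vOmega(h_i)=\vOmega(f_i)$ for each $i$ via (v), and sets $f^\sharp=\sum_i f_i$. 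Both arguments rest on the same two facts (that $f\cdot\vOmega(h)\in\SAP(G)\cap\cM_S(G)$ by \cite[Thm.~6.1]{ARMA}, and density of $L^\star$ in $H$); the partition-of-unity version avoids the division and makes local finiteness of the sum explicit, while yours makes the value $f^\sharp(x^\star)=f(x)$ visible immediately. Your worry about density of trigonometric polynomials in $SAP(G)$ is unfounded: it is Stone--Weierstrass on $G_{\mathsf b}$ and the paper uses it without comment.
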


\begin{proof}

Property (a) is immediate from the definition.

\smallskip

\noindent (i) $\Longrightarrow$ (ii):  This is shown after Definition~\ref{def:bc}.

\smallskip

\noindent (i) $\Longrightarrow$ (v):  This follows from Corollary~\ref{cor-rep}, as any cut-and-project scheme is compatible with itself.

\noindent (v) $\Longrightarrow$ (i): Let $\{ h_i: i\in I \}$ be a partition of unity of $H$, consisting of nonnegative continuous functions of compact support. Let $f \in SAP(G)$ be arbitrary. For each $i\in I$ we have $\vOmega(h_i) \in  \SAP(G)\cap  \cM_{S}(G)$. Hence by assumption there exists some  $f_i \in \Cc(H)$ such that
\[
f\cdot \vOmega(h_i) = \vOmega(f_i) \,.
\]
Define $f^\sharp= \sum_{i\in I} f_i$.
We show that this mapping satisfies the desired conditions. First, the density of the star map gives
\[
\supp(f_i)= \overline{\supp(\vOmega(f_i))^\star}= \overline{\supp(f\cdot \vOmega(h_i))^\star} \subset \overline{\supp(\vOmega(h_i))^\star} =\supp(h_i) \,.
\]
Since $\{h_i: i \in I\}$ is a partition of unity, each point $y \in H$ has an open neighbourhood in which all but finitely many $h_i$ are zero. This immediately implies that
$f^\sharp$ is well defined and continuous at each $y \in H$.
Finally, for all $x \in L$ we have
\[
f^\sharp(x^\star)= \sum_{i} f_i(x^\star)= \sum_{i\in I} f(x) \cdot \vOmega(h_i)(\{x\})= f(x) \sum_{i\in I} h_i(x^\star)=f(x) \,.
\]

\noindent (ii) $\Longrightarrow$ (i): This can be shown using a partition of unity argument as in ``(v) $\Rightarrow$ (i)''.

\noindent (i) $\Longrightarrow$ (iii):  Let $\chi \in \widehat{G}$ be arbitrary. As $\chi\in SAP(G)$, we infer that there exists some $\chi^\sharp \in \Cb(H)$ satisfying
$\chi(x)= \left(\chi \right)^\sharp(x^\star)$ for all $x\in L$. Continuity of $\chi^\sharp$ and denseness of $L^\star$ in $H$ give $\chi^\sharp(H) \subset U(1)$.
Next, for all $x,y \in L$ we have
\[
\chi^\sharp(x^\star+y^\star)=\chi(x+y)=\chi(x)\cdot\chi(y)=\chi^\sharp(x^\star) \cdot \chi^\sharp(y^\star)\,.
\]
Continuity of $\chi^\sharp$ and denseness of $L^\star$ imply then that $\chi^\sharp \in \widehat{H}$. Next, for all $\chi, \eta \in \widehat{G}$ and all $x \in L$ we have
\[
\left( \chi \cdot \eta \right)^\sharp(x^\star)= (\chi \cdot \eta)(x)=\chi(x) \cdot \eta(x)= \chi^\sharp(x^\star)\cdot \eta^\sharp(x^\star) \ .
\]
Again, the denseness of $L^\star$ and continuity of characters imply that $\left( \right)^\sharp: \widehat{G} \to \widehat{H}$ is a group homomorphism.

\smallskip

\noindent (iii) $\Longrightarrow$ (iv): This is obvious.

\smallskip

\noindent (iv) $\Longrightarrow$ (i): Let $f\in SAP(G)$. Then for every $n\in\NN$ there exists a trigonometric polynomial $P_n=\sum_j a_{j,n} \chi_{j,n}$ such that $\| f -P_n \|_\infty < 1/n$. Define $Q_n= \sum_j a_{j,n} \chi^\sharp_{j,n}$ and note $Q_n \in \Cu(H)$ as $\chi^\sharp_{j,n} \in \Cu(H)$.
Then $Q_n(x^\star)=P_n(x)$ for all $x \in L$. Again, using the denseness of $L^\star$, we have
\[
\|Q_m-Q_n \|_\infty \leq \|P_m-P_n \|_\infty \ .
\]
This shows that the sequence $(Q_n)$ is Cauchy in $(\Cu(H), \| \cdot \|_\infty)$ and hence convergent to some $f^\sharp \in  \Cu(H)$. Moreover $f^\sharp\in SAP(H)$ as it is the uniform limit of the trigonometric polynomials $(Q_n)$. It is immediate that $f^\sharp(x^\star)=f(x)$ for all $x \in L$. This proves (i) and (b).

\smallskip

\noindent (iii) $\Longrightarrow$ (vi):
Equip $\widehat{G}$ with the discrete topology and call the resulting group $\widehat{G}_d$.
Then, the map $( )^\sharp : \widehat{G}_{d} \to \widehat{H}$ is a continuous group homomorphism by assumption. Therefore, by Pontryagin duality, its dual map
$F= \reallywidehat{ ( )^\sharp}: H \to G_{\mathsf{b}} $ is a continuous group homomorphism, where $G_{\mathsf{b}}=\reallywidehat{\widehat{G}_d}$ is the Bohr compactification of $G$. Note that we have for every $y\in H$ and for every $\chi\in\widehat{G_{\mathsf{b}}}$ that $F(y)(\chi)=\chi^\sharp(y)$.
In particular, for all $x \in L$ we have
\[
F(x^\star)(\chi)= \chi^\sharp(x^\star)=\chi(x)=i_{\mathsf{b}}(x)(\chi) \,,
\]
with the last equality holding under the identification $\widehat{G_{\mathsf{b}}}=\widehat{G}$. Since this holds for all $\chi \in \widehat{G}=\widehat{G_{\mathsf{b}}}$ we get $F(x^\star)=i_{\mathsf{b}}(x)$. Thus the diagram is commutative.

\smallskip

\noindent (vi) $\Longrightarrow$ (iii): For all $\chi \in \widehat{G}_d= \widehat{G_{\mathsf{b}}}$ and $x \in L$ we have by the definition of the dual operator:
\[
\reallywidehat{F}(\chi)(x^\star)=\chi(F(x^\star))= \chi(i_{\mathsf{b}}(x))= \chi(x) \,.
\]
This means that the mapping $(\chi)^\sharp:= \reallywidehat{F}(\chi)$ satisfies (iii).

\smallskip

\noindent (vi) $\Longrightarrow$ (vii): Since $F$ is continuous and a group homomorphism, it is uniformly continuous. Since $\star: (L, \tau) \to H$ is also uniformly continuous, so is $\Psi= F \circ \star$.

\smallskip

\noindent (vii) $\Longrightarrow$ (vi): Note here that $(H, \star)$ is a completion of $(L, \tau)$. Since $\Psi$ is uniformly continuous, it extends uniquely to a continuous mapping $F: H \to G_{\mathsf{b}}$ which trivially satisfies the given conditions.
\end{proof}

\subsection*{Acknowledgments}
We thank Michael Baake for suggesting the upper density estimate in (ii) of Theorem~\ref{thm:char-mod-set}.
Part of this work was done when NS visited the Friedrich-Schiller Universit\"at Jena and the FAU Erlangen-N\"urnberg, and he would like to thank the Mathematics Departments for hospitality.
NS was supported by the Natural Sciences and Engineering Council of Canada via grants 2020-00038 and 2024-04853, and he is grateful for the support.

\end{document}